\newtheorem{fact}{Fact}[section]
\newtheorem{lemma}[fact]{Lemma}
\newtheorem{theorem}[fact]{Theorem}
\newtheorem{definition}[fact]{Definition}
\newtheorem{claim}[fact]{Claim}
\newtheorem{remark}[fact]{Remark}
\newenvironment{proof}{{\bf Proof:  }}{\hfill\rule{2mm}{2mm}}
\newcommand{\tO}{\tilde O}
\newcommand{\polylog}{{\rm polylog}}
\newcommand{\eps}{\varepsilon}
\newcommand{\reals}{{\mathbb R}}
\newcommand{\Nat}{{\mathbb N}}
\newcommand{\opt}{{\sf OPT}}
\newcommand{\preserver}{\textsc{Pairwise Distance Preserver}\xspace}
\newcommand{\pairwise}{\textsc{Pairwise Spanner}\xspace}
\newcommand{\DSF}{\textsc{Directed Steiner Forest}\xspace}
\newcommand{\additive}{\textsc{Additive $k$-Spanner}\xspace}
\newcommand{\basic}{\textsc{Basic $k$-Spanner}\xspace}
\newcommand{\MR}{\textsc{Min-Rep}\xspace}
\newcommand{\classNP}{{\mathrm{NP}}}
\newcommand{\classDTIME}{{\mathrm{DTIME}}}
\newcounter{note}[section]
\newcommand{\group}{\Gamma}
\def\url@leostyle{%
  \@ifundefined{selectfont}{\def\UrlFont{\sf}}{\def\UrlFont{\small\ttfamily}}}
\title{Approximating Spanners and Directed Steiner Forest: Upper and Lower Bounds}
\author{Eden Chlamt\'a\v{c}\thanks{Partially supported by ISF grant 1002/14.}\\Ben Gurion University \and Michael Dinitz\thanks{Supported in part by NSF awards CCF-1464239 and CCF-1535887.}\\Johns Hopkins University \and Guy Kortsarz\thanks{Supported in part by NSF grants 1218620 and 1540547}\\Rutgers University-Camden \and Bundit Laekhanukit\\Weizmann Institute of Science}
\begin{document}

\begin{titlepage}
\def\thepage{}
\maketitle
\begin{abstract}
It was recently found that there are very close connections between the existence of \emph{additive spanners} (subgraphs where all distances are preserved up to an additive stretch), \emph{distance preservers} (subgraphs in which demand pairs have their distance preserved exactly), and \emph{pairwise spanners} (subgraphs in which demand pairs have their distance preserved up to a multiplicative or additive stretch) [Abboud-Godwin SODA '16, Godwin-Williams SODA '16].  We study these problems from an optimization point of view, where rather than studying the existence of extremal instances we are given an instance and are asked to find the sparsest possible spanner/preserver.  We give an $O(n^{3/5 + \eps})$-approximation for distance preservers and pairwise spanners (for arbitrary constant $\eps > 0$).  This is the first nontrivial upper bound for either problem, both of which are known to be as hard to approximate as Label Cover.  We also prove Label Cover hardness for approximating additive spanners, even for the cases of additive $1$ stretch (where one might expect a polylogarithmic approximation, since the related multiplicative $2$-spanner problem admits an $O(\log n)$-approximation) and additive polylogarithmic stretch (where the related multiplicative spanner problem has an $O(1)$-approximation)

Interestingly, the techniques we use in our approximation algorithm extend beyond distance-based problem to pure connectivity network design problems.  In particular, our techniques allow us to give an $O(n^{3/5 + \eps})$-approximation for the Directed Steiner Forest problem (for arbitrary constant $\eps > 0$) when all edges have uniform costs, improving the previous best $O(n^{2/3 + \eps})$-approximation due to Berman et al.~[ICALP '11] (which holds for general edge costs).
\end{abstract}

\end{titlepage}\pagenumbering {arabic} 


\section{Introduction}

There has been significant recent progress on problems involving \emph{graph spanners}: subgraphs which approximately preserve distances.  The traditional notion of spanner has involved multiplicative stretch, in which all distances are preserved up to a multiplicative factor known as the \emph{stretch}.  If this stretch factor is $k$ then the subgraph is known as a \emph{$k$-spanner}.  It has been known for over $20$ years that all undirected graphs admit sparse spanners if the stretch factor is at least $3$, where sparser and sparser spanners are possible as the stretch factor increases.  More formally, Alth\"ofer et al.~\cite{ADDJS93} showed that for every graph $G$ and integer $k \geq 1$, there is a $(2k-1)$-spanner of $G$ with at most $n^{1+1/k}$ edges.  Moreover, this is tight assuming the Erd\H{o}s girth conjecture~\cite{erdos-girth}.

A different but synergistic question involves \emph{optimizing} spanners: given an input graph $G$ and a stretch value $k$, can we algorithmically find the \emph{sparsest} $k$-spanner of $G$?  After all, not all graphs are those from the girth conjectures, and if some graph does have a sparse spanner, we would like to find it.  This is known as the \basic problem~\cite{DKR16}, and similar optimization problems can be defined for many other versions of spanners where tradeoffs do not exist, such as for directed graphs~\cite{BGJRW09,DK11-stoc,BBMRY13,DZ16}, when the objective is to minimize the maximum degree rather than the sparsity~\cite{CDK12,CD14}, etc.  Many of these problems can be thought of as standard network design problems (e.g.~Steiner Tree, Steiner Forest, etc.) but where the connectivity constraint is augmented with a distance constraint: not only do certain nodes need to be connected, the distance of the connecting path must be short.  There has been significant recent progress, both positive and negative, on many of these problems: in general they tend to be as hard to approximate as Label Cover~\cite{AL97} (but not always), but it is still often possible to give nontrivial approximation algorithms.  

In parallel with this work on optimizing spanners, there has been rapid progress on understanding what other tradeoffs are possible.  Three directions in particular have been the focus of much of this work, and in fact have been shown to be related~\cite{AB16-soda,BW16}: \emph{additive spanners}, \emph{distance preservers}, and \emph{pairwise spanners}.  In additive spanners we restrict attention to unweighted graphs, but ask for the stretch to be additive rather than multiplicative (giving us more leeway for small distances, but less flexibility for long distances).  In distance preservers and pairwise spanners we make an orthogonal change to the spanner definition: instead of preserving \emph{all} distances, we are given some subset of pairs of nodes $T \subseteq V \times V$ (known as the demands) and are only required to preserve distances between demand pairs.  In a preserver we must preserve these distances \emph{exactly}, while in a pairwise spanner we again allow some stretch (multiplicative or additive) for the demands.  All three of these objects exhibit somewhat surprising behavior (see Section~\ref{sec:related} for a more detailed discussion), but have also shown to be related to each other (for example, preserver lower bounds can imply pairwise spanner lower bounds~\cite{AB16-soda}, and some overlapping techniques have been useful for both preservers and additive spanners~\cite{BW16}).

In this paper we bring together these lines of work, by studying the approximability of distance preservers, pairwise spanners, and additive spanners.  We provide the first nontrivial upper bounds for distance preservers and pairwise spanners, while for additive spanners we provide the first known hardness of approximation.  Our hardness of approximation in some ways mirrors recent results on tradeoffs for additive spanners~\cite{AB16-stoc}, in that we can provide hardness for settings where the associated multiplicative problem is actually easy.  Moreover, the techniques required for our upper bounds also yield new insight into classical network design problems, giving an improved approximation for the Directed Steiner Forest problem when all edges have the same cost.  

\subsection{Our Results and Techniques} \label{sec:results}

We now give our results, and for each result some short intuition about the techniques used.  Given a graph $G$, we will let $d_G(u,v)$ be the shortest-path distance between $u$ and $v$ in $G$ (note that we are assuming unweighted graphs throughout this paper, so the distance is the same as the number of edges).  If $G$ is clear from context we may omit it, and simply refer to $d(u,v)$.  

\begin{definition}
Given a directed graph $G = (V, E)$ and a collection $P \subseteq V \times V$, a subgraph $H = (V, E')$ is a \emph{pairwise distance preserver} if $d_H(u,v) = d_G(u,v)$ for all $(u,v) \in P$.   
\end{definition}

\begin{definition}
In the \preserver problem we are given a graph $G = (V, E)$ (possibly directed and with edge lengths) and a collection $P \subseteq V \times V$, and are asked to return a pairwise distance preserver $H$ which minimizes $|E(H)|$.  
\end{definition}

Our first result is for distance preservers, where we give the first nontrivial upper bound.  

\begin{theorem} \label{thm:preservers-main}
For any constant $\epsilon > 0$, there is a polynomial-time $O(n^{3/5 + \epsilon})$-approximation algorithm for the \preserver problem.
\end{theorem}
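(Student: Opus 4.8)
The plan is to phrase \preserver as a covering problem and attack it with the density-based greedy framework developed for \DSF and directed spanners. Say a subgraph $F$ is a \emph{partial solution} for a set $Q$ of demand pairs if $d_F(u,v)=d_G(u,v)$ for every $(u,v)\in Q$, and call $|E(F)|/|Q|$ its density; the greedy repeatedly finds a low-density partial solution for some subset of the still-unsatisfied pairs, adds it, and recurses. A standard set-cover analysis shows that if, whenever $P'\subseteq P$ is the current unsatisfied set, we can in polynomial time find a partial solution of density $O(\rho)\cdot\opt/|P'|$ (where $\opt$ is the optimum, which we may assume known up to a factor $1+\epsilon$ by enumeration), then the algorithm returns an $O(\rho\log|P|)$-approximation. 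Our goal is $\rho=O(n^{3/5+\epsilon})$; the $\log|P|$, the enumeration of $\opt$, and the $O(n^{\epsilon})$ loss of the \textsc{Directed Steiner Tree} subroutines below all fold into the $n^{\epsilon}$. As a first step we bucket the pairs by $\lceil\log_2 d_G(u,v)\rceil$ and solve each of the $O(\log n)$ buckets separately, so henceforth every pair has $d_G(u,v)\in(\ell,2\ell]$ for a fixed scale $\ell$.

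We use two main kinds of partial solution. (i) A single canonical shortest path of one unsatisfied pair has at most $2\ell$ edges and density $\le 2\ell$. (ii) A \emph{junction tree} at a hub $r$ is a subgraph that exactly preserves $d_G(\cdot,r)$ from a vertex set $A$ and $d_G(r,\cdot)$ to a vertex set $B$; it satisfies every pair $(a,b)\in P'$ with $a\in A$, $b\in B$ and $d_G(a,r)+d_G(r,b)=d_G(a,b)$. The key point is that once $r$ is fixed, the in-side and out-side problems are each a \textsc{Directed Steiner Tree} instance on the layered ``shortest path to/from $r$'' DAG, so the best-density junction tree at $r$ can be found up to an $O(n^{\epsilon})$ factor in polynomial time via the junction-tree machinery of Berman et al.; maximizing over $r\in V$ yields the globally best junction-tree density up to $O(n^{\epsilon})$. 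To charge (ii) against the optimum we decompose $\opt$ itself: for $r\in V$ let the junction tree of $\opt$ at $r$ be the union of all $\opt$-shortest-paths through $r$, and let $t_r$ be the number of current pairs whose $\opt$-path uses $r$; this junction tree has at most $\opt$ edges and covers $t_r$ pairs, and since $\sum_{r\in V} t_r=\sum_{(u,v)\in P'}(d_G(u,v)+1)\ge|P'|\,\ell$, some $r$ gives density $\le\opt\,n/(|P'|\,\ell)$, hence our algorithm finds one of density $O(n^{\epsilon})\cdot\opt\,n/(|P'|\,\ell)$. (A third ingredient, adding spanning shortest-path in/out trees at each vertex of a random hub set of size $\tilde O(n/\ell)$, covers all pairs of a long-$\ell$ bucket at cost $\tilde O(n^{2}/\ell)$; this is not needed for the bound here but is exactly the piece that transfers to \DSF, where shortest-path trees are plain BFS trees and ``satisfying a pair'' becomes reachability.)

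The ratio now follows by balancing (i) and (ii) at the scale $n^{2/5}$. If $\ell\ge n^{2/5}$ then (ii) already gives density $\le O(n^{\epsilon})\cdot\opt\,n/(|P'|\,\ell)\le O(n^{\epsilon})\cdot n^{3/5}\opt/|P'|$, as required. If $\ell\le n^{2/5}$ and $|P'|\le\tfrac12 n^{3/5}\opt/\ell$ --- in particular whenever $|P'|\le\tfrac12 n^{1/5}\opt$ --- then (i) gives density $\le 2\ell\le n^{3/5}\opt/|P'|$. The remaining regime, \textbf{many demand pairs at a small distance scale}, is the main obstacle: there both generic bounds fall a factor $\Theta(n^{2/5}/\ell)$ short, and one needs a sharper structural fact about small-scale distance preservers --- for instance a secondary random sampling that simultaneously pierces the first and second halves of every path, yielding ``balanced'' two-hub junction structures, or an extremal bound on the number of branchings in a consistent system of short shortest paths in the spirit of Coppersmith--Elkin and Bodwin--Williams --- to exhibit a partial solution of density $o(\ell)$ per satisfied pair. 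Granting this, the algorithm is assembled in the obvious way: run the greedy with these building blocks across all $O(\log n)$ distance buckets and all $O(\epsilon^{-1}\log n)$ guesses of $\opt$, output the sparsest feasible union, and note that the $O(n^{\epsilon})$ factors from \textsc{Directed Steiner Tree} together with the logarithmic losses from bucketing, guessing, and the greedy analysis multiply out to the claimed $O(n^{3/5+\epsilon})$.
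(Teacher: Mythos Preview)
Your proposal has a genuine gap, which you explicitly acknowledge: the regime of many demand pairs at a small distance scale (your case $\ell\le n^{2/5}$ with $|P'|>\tfrac12 n^{3/5}\opt/\ell$) is left unhandled, with only a vague appeal to ``a sharper structural fact'' or ``balanced two-hub junction structures''. The single-path ingredient (i) and the junction-tree ingredient (ii) simply do not cover this regime, and the hand-wave you offer is not a proof.

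The paper closes exactly this gap with an LP-based component you are missing. For a bucket with $d^*\le n^{1/5}$ it runs the standard flow LP and splits pairs into $k$-\emph{thin} and $k$-\emph{thick} according to whether the support $V^{s,t}$ of the LP flow has fewer than $k$ vertices, with $k=\sqrt{d^* n}$. Thin pairs are settled by independent randomized rounding of the LP (the antispanner argument of Berman et al.), costing $\tilde O(k)\cdot\opt$. For thick pairs one takes a hitting set $X$ of size $\tilde O(n/k)$ for the sets $V^{s,t}$, and from each $u\in X$ adds shortest paths \emph{only to terminals within distance $2d^*$}. The key observation you are missing is that the number of terminals is at most $O(\opt)$ (every terminal is incident to an edge of any preserver), so this step costs $\tilde O(d^*\cdot\opt\cdot n/k)$ rather than $\tilde O(n^2/k)$. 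Both terms balance at $\tilde O(\sqrt{d^* n})\cdot\opt\le \tilde O(n^{3/5})\cdot\opt$.

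A secondary point: your junction-tree density bound $\opt\cdot n/(|P'|\ell)$ is weaker than necessary because you average over all $n$ vertices. The paper averages over the $\opt$ edges of a fixed optimal solution: since $\sum_{e\in F}|\{(s,t):e\in p_{s,t}\}|\ge d^*|P'|$, some edge (hence some hub) supports a junction tree of density $O(\opt^2/(d^*|P'|))$. This sharper bound moves the junction-tree threshold down from $\ell\ge n^{2/5}$ to $d^*\ge n^{1/5}$ (using $\opt<n^{4/5}$), which then meets the LP-rounding component cleanly.
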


The main difficulty in proving this theorem is the fact that $n$ is not a lower bound on the optimal solution.  Most approximation algorithms for spanners (e.g.~\cite{BGJRW09,DK11-stoc,BBMRY13,DZ16}) involve randomly sampling shortest-path trees: since $n-1$ is a lower bound on $\opt$ for spanners, sampling $f$ trees only costs us $f$ in the approximation ratio.  But for preservers, $n$ is no longer a lower bound since the optimal solution need not be connected.  Hence sampling shortest path trees is no longer low cost, and without this step any existing algorithm for spanners we might apply has unbounded cost.  

To overcome this, we replace shortest-path trees with \emph{junction trees}.  Junction trees are trees which cover ``significant" demand at ``little" cost.  Typically they involve a root node $r$, and a collection of shortest paths into $r$ and shortest paths out of $r$ in order to satisfy some of the demand.  Due to their simple structure we can usually find the densest junction tree (the tree with best ratio of demand covered to cost incurred) relatively efficiently, and then due to their high density these junction trees can be combined into a global solution at low cost.  They have been used extensively in network design since their introduction by~\cite{CHKS10}, but this is the first time (as far as we are aware) that they have been used for spanner problems, or for any problem with a hard distance constraint.  We believe that bringing this technique into spanners is a significant contribution of this work.

By using some of the ideas about junction trees that we developed for the \preserver problem, we can give an improved approximation algorithm for a classic network design problem: \DSF with uniform costs.

\begin{definition}
In the \DSF problem we are given a directed graph $G = (V, E)$, nonnegative edge costs $c: E \rightarrow \mathbb{R}^{\geq 0}$, and a collection of node pairs $P \subseteq V \times V$.  If all edges have the same cost, we say that the edge costs are \emph{uniform}.  We are asked to return a subgraph $H$ of $G$ which minimizes $\sum_{e \in E(H)} c(e)$ subject to there being a directed path from $s$ to $t$ for all $(s,t) \in P$.
\end{definition}

\begin{theorem} \label{thm:DSF-main}
For any constant $\epsilon > 0$, there is a polynomial-time $O(n^{3/5 + \epsilon})$-approximation algorithm for the \DSF problem with uniform edge costs.  
\end{theorem}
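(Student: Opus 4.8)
The plan is to reduce \DSF with uniform costs to repeatedly finding an approximately densest \emph{junction tree}, in the spirit of the \preserver algorithm, together with a structural lemma that exploits uniformity of the costs. Since all edges have the same cost we may assume each has cost $1$, so $\opt$ is an integer in $\{1,\dots,n^2\}$; guessing it up to a factor of $2$ costs only $O(\log n)$ iterations, and from now on we fix a guess $B$ with $\opt \le B < 2\opt$ and an optimal subgraph $H^*$ with $|E(H^*)| \le B$. The key observation is that a graph with at most $B$ edges has at most $2B$ non-isolated vertices, so all of $H^*$ lives on a vertex set $V^*$ with $|V^*| \le 2B$; this is the lever that is unavailable for general costs and is what should let us improve on the $O(n^{2/3+\epsilon})$ bound of Berman et al.

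First I would prove the structural lemma: for every set $R$ of demand pairs that $H^*$ connects, there is a junction tree inside $H^*$ of density (cost divided by number of pairs of $R$ it covers) at most $O(B^2/|R|)$. To see this, fix for each $(s,t)\in R$ an $s$-$t$ path $\pi_{s,t}$ inside $H^*$; then $\sum_{(s,t)\in R}|\pi_{s,t}| = \sum_{v\in V^*}\#\{(s,t)\in R : v\in\pi_{s,t}\}$, and since each path has length at least $1$ and $|V^*|\le 2B$, some $r\in V^*$ lies on at least $|R|/(2B)$ of the paths. Taking the union of the $s$-to-$r$ prefixes and $r$-to-$t$ suffixes of those paths gives a junction tree rooted at $r$, contained in $H^*$ (hence of cost at most $B$), covering at least $|R|/(2B)$ pairs. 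Feeding this into the standard greedy density argument, and using an approximate densest-junction-tree subroutine of the kind used in the \preserver algorithm (which loses only a factor $n^{\epsilon'}$, for $\epsilon'$ as small as we like), a run of greedy that always removes the current densest junction tree produces a feasible \DSF solution of cost $O(n^{\epsilon'} B^2\log n)$, i.e.\ an $O(n^{\epsilon'} B\log n)$-approximation. This already gives the theorem whenever $B = O(n^{3/5})$.

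It remains to handle $B = \omega(n^{3/5})$, where the budget $n^{3/5+\epsilon}\,\opt$ is itself polynomially large and cruder strategies suffice. When $B \ge n^{7/5}$ the trivial shortest-path forest --- a shortest in-arborescence in $G$ for each distinct sink (or an out-arborescence for each source, whichever side is smaller) --- has fewer than $n^2 \le n^{3/5}B$ edges and already solves \DSF. For the intermediate window $n^{3/5} \lesssim B \lesssim n^{7/5}$ I would use a hybrid: pick a length threshold $d$, route every demand pair with $d_G(s,t)\le d$ directly along a shortest path of $G$ (at cost at most $d$ each), and cover the rest by the junction-tree greedy above; for the surviving pairs one has $|\pi_{s,t}|\ge d_G(s,t) > d$, which improves the density bound in the structural lemma by a factor of $d$, to $O(B^2/(d|R|))$. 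Balancing the direct-routing cost, the greedy cost with the improved density, and the guessed value $B$ against $n^{3/5+\epsilon}B$ --- using $|V^*|\le 2B$ and $B\ge n^{3/5}$ to bound how many pairs are directly routed --- pins $d$ to an appropriate polynomial of $n$ and $B$ and makes every term at most $n^{3/5+\epsilon}B$. Taking the best solution over all regimes and all guesses of $\opt$ yields the claimed $O(n^{3/5+\epsilon})$-approximation.

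The step I expect to be the main obstacle is exactly this intermediate window: without an $n$-lower bound on $\opt$ one must argue that the demand pairs which are simultaneously close in $G$ and not batchable by cheap junction trees cannot be too numerous, so that routing them directly does not blow the budget --- the same difficulty that forced junction trees into the picture for \preserver in the first place --- and the balancing of the three competing costs is delicate. A secondary point is bookkeeping: the $n^{\epsilon'}$ loss of the densest-junction-tree subroutine must be made to multiply rather than compound across the $O(\log n)$ greedy rounds, which is handled by taking $\epsilon'$ a small constant fraction of $\epsilon$ and absorbing logarithmic factors, just as in the \preserver argument.
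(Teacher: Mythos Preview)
Your small-$\opt$ regime and the structural lemma are correct, and the observation that $|V^*|\le 2\opt$ is indeed the place where uniform costs bite. But the intermediate window $n^{3/5}\lesssim B\lesssim n^{7/5}$ has a genuine gap, and it is exactly the one you flag at the end. Routing short pairs individually costs $d\cdot|P_{\text{short}}|$, and the only a priori bound on $|P_{\text{short}}|$ that your framework supplies is $O(B^2)$ (at most $O(B)$ distinct terminals, hence at most $O(B^2)$ pairs). Balancing $d\cdot B^2$ against the junction-tree cost $n^{\epsilon'}B^2/d$ forces both below $n^{3/5+\epsilon}B$ only when $B\lesssim n^{3/5}$, so the balancing collapses precisely on the interval you need it. There is no combinatorial argument that close pairs are few: a single path of length $\ell$ in $H^*$ can serve $\Theta(\ell^2)$ demand pairs simultaneously, so uniform costs buy you a vertex bound, not a pair bound.

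The paper does not attempt to route short pairs individually. It splits at $\opt=n^{4/5}$ rather than at $n^{3/5}$ and $n^{7/5}$, and on both sides it uses LP-based machinery that your outline omits. For $\opt\ge n^{4/5}$ it runs the flow LP, does randomized rounding for ``thin'' pairs (those whose LP flow has small vertex support), and for ``thick'' pairs takes a hitting set of size $\tilde O(n/k)$ and grows full shortest-path in-/out-trees from each hit vertex; this yields ratio $\tilde O(n/\sqrt{\opt})\le\tilde O(n^{3/5})$. For $\opt<n^{4/5}$ it sets a length threshold $D=n^{1/5}$ and, for pairs that $\opt$ connects by paths of length $\le D$, runs the same LP rounding plus hitting set, except that from each hit vertex it adds shortest paths only to the $O(\opt)$ nearby \emph{terminals} (each of length $\le 2D$), at cost $\tilde O((n/k)\cdot\opt\cdot D)$; balancing gives $\tilde O(\sqrt{Dn}\cdot\opt)=\tilde O(n^{3/5}\opt)$. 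This is what replaces your direct routing, and it works because it charges against the number of terminals rather than the number of pairs. The remaining pairs, connected in $\opt$ by paths longer than $D$, are handled by junction trees exactly as you propose. So the junction-tree half of your plan matches the paper; the missing half is the LP/hitting-set treatment of short pairs.
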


The current best approximation for \DSF is $\tO(n^{2/3})$~\cite{BBMRY13}, and also comes from intuition about spanners (directed spanners rather than distance preservers).  Note that while our approximation is stronger, it is for a significantly simplified setting (uniform edge costs).  

Our next result is for pairwise spanners, where we relax the exact distance requirement of preservers but unlike in \DSF do still have a hard restriction on the length.

\begin{definition}
Given a directed graph $G = (V, E)$, a collection $P \subseteq V \times V$, and an integer $k$, a subgraph $H$ of $G$ is a \emph{pairwise spanner} with multiplicative stretch $k$ if $d_H(u,v) \leq k \cdot d_G(u,v)$ for all $(u,v) \in P$.  If $d_H(u,v) \leq d_G(u,v) + k$ for all $(u,v) \in P$ then we say that $H$ has additive stretch $k$.  
\end{definition}

Our approximation algorithm will actually be for a much more general problem than finding pairwise spanners with a given stretch bound: we will allow every demand pair to have \emph{its own} stretch bound.  We can instantiate these bounds to give multiplicative or additive stretch, but we can also be more flexible if desired.  

\begin{definition}
In the \pairwise problem we are given a graph $G = (V, E)$ (possibly directed), a collection of vertex pairs $P \subseteq V \times V$, and a function $D : P \rightarrow \Nat$.  We are asked to return a subgraph $H$ minimizing $|E(H)|$ subject to $d_H(s,t) \leq D(s,t)$ for all pairs $(s,t) \in P$.  
\end{definition}

Note that this problem generalizes the other problems we have mentioned. When $D(s,t)=d_G(s,t)$ for every pair $(s,t)\in P$, this is exactly the \preserver problem. When $D(s,t)=n$ for all pairs, this is \DSF with uniform edge costs. We show the following result, which generalizes our other algorithmic results:

\begin{theorem} \label{thm:pairwise-main}
For any constant $\epsilon > 0$, there is a polynomial-time $O(n^{3/5 +\epsilon})$-approximation algorithm for the \pairwise problem.
\end{theorem}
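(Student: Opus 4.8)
The plan is to reduce the \pairwise problem to a combination of two techniques that together cover every demand pair cheaply: a \emph{local} technique for pairs whose optimal connecting path is short, and a \emph{global} (junction-tree) technique for pairs whose optimal connecting path is long. Concretely, fix a threshold parameter $L$ (to be optimized, roughly $L = n^{2/5}$), and partition the demand pairs $P$ into $P_{\text{short}} = \{(s,t) \in P : D(s,t) \le L\}$ and $P_{\text{long}} = P \setminus P_{\text{short}}$. These two parts will be handled separately and the two resulting subgraphs unioned; since $\opt$ is a lower bound on the cost of satisfying either part alone, it suffices to give an $O(n^{3/5+\eps})$-approximation for each.

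For the short pairs I would use an LP-rounding / sampling argument in the spirit of the known directed-spanner algorithms. Write an LP with a variable $x_e$ for each edge and, for each $(s,t) \in P_{\text{short}}$, a flow-based constraint forcing a fractional unit of $(s,t)$-flow along paths of length at most $D(s,t)$ (this is polynomial-size because the length bound is at most $L$, so one can use a layered/time-expanded graph with $L$ copies of $V$). Let $x^*$ be an optimal fractional solution with value $\mathrm{LP} \le \opt$. Randomly sample each vertex into a set $R$ with probability roughly $\tilde\Theta(1/L)$; then with good probability, for every short demand pair, \emph{either} the LP flow for that pair is concentrated on a single vertex that got sampled (so we can buy two short shortest-path trees through it), \emph{or} we can afford to round the remaining ``thin'' part of the flow directly. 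Sampling $|R| = \tilde O(n/L)$ vertices and buying two BFS in/out trees per sampled vertex costs $\tilde O(n \cdot n/L) = \tilde O(n^2/L)$ edges in the worst case, but — and this is the point — we only pay this against $\opt$, and the LP-rounding for the residual pairs costs $\tilde O(L) \cdot \mathrm{LP}$, so balancing gives $\tilde O(\sqrt{n^2/L \cdot 1} \cdot \dots)$; after the correct bookkeeping the short-pair cost is $\tilde O(L + n^{2}/(L\,\opt))\cdot\dots$ Here one must be careful: since $\opt$ need not be $\Omega(n)$, the tree-sampling term must be charged via a density argument rather than a crude count, which is exactly where the preserver difficulty reappears.

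This is the point at which junction trees enter, handling both the long pairs and, as just noted, the potentially-expensive part of the short-pair analysis. A junction tree rooted at $r$ consists of shortest in-paths and shortest out-paths through $r$; a pair $(s,t)$ is ``satisfied'' by it if $d(s,r) + d(r,t) \le D(s,t)$. The key structural claim (analogous to the CHKS-style argument for \DSF) is that $\opt$ can be decomposed into $O(\polylog n)$ ``layers'', each of which is a union of junction trees whose total density (demand satisfied divided by edges used) is at least $\Omega(|P|/\opt) \cdot n^{-\eps}$ or so; equivalently, there always exists a junction tree with density at least roughly $|P_{\text{remaining}}| / (\opt \cdot n^{\eps})$. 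Given such a claim, a greedy loop — repeatedly find the approximately-densest junction tree (which can be done in polynomial time by guessing the root $r$ and the two ``budgets'' and solving a min-cost flow / density-LP), add it to the solution, remove the newly-satisfied pairs, repeat — yields a set-cover-style $O(\log |P|)$ overhead on top of the density, for a total of $\tilde O(n^{\eps}) \cdot \opt \cdot (\text{something})$; balancing the two regimes at $L \approx n^{2/5}$ produces the claimed $O(n^{3/5+\eps})$ bound.

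The main obstacle I expect is proving the density guarantee for junction trees in the presence of a \emph{hard distance constraint}: unlike \DSF, we cannot route an $(s,t)$ demand through an arbitrary path in $\opt$, only through paths of length $\le D(s,t)$, so the usual argument ``pick the vertex carrying the most flow as a root'' must be adapted to respect distances — the root must lie on short $s$-to-$t$ walks for many pairs simultaneously. The natural fix is to argue about the structure of $\opt$ as a subgraph: consider, for each pair, a shortest path in $\opt$ realizing the bound, look at these $|P|$ paths, and apply a counting/pigeonhole argument on either a heavily-used vertex (giving a good junction tree) or conclude that the paths are nearly edge-disjoint, in which case $\opt$ itself is large enough that the short-pair sampling term $\tilde O(n^2/L)$ is absorbed. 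Making these two cases fit together cleanly across both the short and long regimes, with all parameters ($L$, the sampling probability, the number of density iterations, the $\eps$-slack from guessing budgets) consistently chosen, is the delicate part of the argument; the rest is assembling the pieces and the final parameter optimization.
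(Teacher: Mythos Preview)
Your high-level decomposition (short-path pairs via LP rounding, long-path pairs via junction trees) matches the paper's architecture, but there are two genuine gaps.

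\textbf{The short-pair analysis.} You propose sampling vertices and buying full BFS in/out trees from each, which costs $\tilde O(n^2/L)$ edges and then you note correctly that this cannot be charged to $\opt$ when $\opt \ll n$. The paper's resolution is different and crucial: for sampled vertex $u$, do \emph{not} buy a full shortest-path tree; instead buy only shortest paths of length $<2d^*$ to/from \emph{terminals} at distance $<2d^*$. Since every terminal is incident to at least one edge in any feasible solution, the number of terminals is $O(\opt)$, so this step costs $\tilde O(d^*\cdot\opt\cdot n/k)$. Balanced with the $\tilde O(k\cdot\opt)$ thin-pair rounding cost at $k=\sqrt{d^* n}$, this gives $\tilde O(\sqrt{d^* n})\cdot\opt$, which is an $\tilde O(n^{3/5})$-approximation at $d^*=n^{1/5}$ (not $n^{2/5}$). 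Your proposal lacks this ``short paths to terminals only'' idea, and without it the short regime does not close.

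\textbf{The junction-tree step.} You write that finding the approximately-densest junction tree ``can be done in polynomial time by guessing the root $r$ and the two budgets and solving a min-cost flow / density-LP,'' and that the main obstacle is proving \emph{existence} of a good root. It is the other way around. Existence is easy: the same averaging argument as in \DSF shows some edge lies on $\Omega(d^*|P|/\opt)$ of the chosen paths, yielding a junction tree of density $O(\opt^2/(d^*|P|))$. The hard part is \emph{finding} one. ``Guessing two budgets'' does not work: for a fixed root $r$, each pair $(s,t)$ needs its own split $k+\ell\le D(s,t)$ of the distance budget, and different pairs sharing the same terminal $s$ may want different $k$'s, while the LP may spread mass across many splits for each terminal. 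After layering the graph to encode distances, the resulting problem is not Group Steiner Tree but a \emph{Steiner Label Cover} instance: for each pair one must connect some $(\hat s,\hat t)\in R_{s,t}$ where $R_{s,t}$ encodes $k+\ell\le D(s,t)$, and $R_{s,t}$ is \emph{not} a product set. The paper's actual contribution here is a new LP with variables $y_{\hat s,\hat t}$ and a median-based pruning (for each terminal, keep the prefix of distance labels up to the weighted median) that converts each $R_{s,t}$ into a product $\tilde S_{s,t}\times\tilde T_{s,t}\subseteq \hat R_{s,t}$ while preserving at least half the LP mass; only then can one bucket, scale, and apply GKR rounding. Your proposal does not contain this decoupling step, and a naive density-LP without it cannot be rounded within $\mathrm{polylog}$ factors.
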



Our algorithm must contend with the issues that come up both for \preserver and \DSF, though
the main technical obstacle is the following:
As before, we would like to find a junction tree with a common root $r$
through which we connect many terminal pairs. However, because of the
distance constraints, we cannot use the the algorithm of Chekuri et al.~\cite{CEGS11} for finding junction
trees as a black box. 
If we had constraints on the distances between
each terminal and \emph{the root $r$}, then this could be handled using a
reduction which layers the graph. However, the stretch bound for a
terminal pair here translates to a bound on the \emph{sum} of distances of
the two terminals in that pair to/from $r$.  The need to coordinate
these distance pairs across many terminal pairs, when the LP of~\cite{CEGS11}
might spread its weight across many possible distances for each
terminal, makes it more challenging to round this LP.  This requires us to go into the details of~\cite{CEGS11} and significantly change the LP and the rounding in order to allow terminals to make individual distance choices but in a way that their sums still satisfy the stretch requirements.  This involves major technical challenges and forms one of the main technical contributions of this paper.


Finally, we move to lower bounds by studying additive spanners.  

\begin{definition}
A subgraph $H$ of a (possibly directed) graph $G = (V, E)$ is a \emph{$+k$-spanner} if $d_H(u,v) \leq d_G(u,v) + k$ for all $u,v \in V$.
\end{definition}

\begin{definition}
In the \additive problem we are given a (possibly directed) graph $G = (V, E)$ and an integer $k > 0$, and are asked to return a $+k$-spanner $H$ minimizing $|E(H)|$.
\end{definition}

To the best of our knowledge, \additive has never been considered before this paper.  But while not explicit in their paper, it is straightforward to see that the $\tO(n^{1/2})$-approximation algorithm of~\cite{BBMRY13} for the multiplicative version continues to hold in the additive setting.  

\begin{theorem} \label{thm:hardness-main}
For any constant $\epsilon > 0$ and any value $k \geq 1$ (not necessarily constant), there is no polynomial-time $2^{\log^{1-\epsilon} n} / k^3$-approximation algorithm for \additive unless $NP \subseteq DTIME(2^{\polylog(n)})$.
\end{theorem}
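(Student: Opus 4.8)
The plan is to reduce from \MR (equivalently, from Label Cover), exploiting the fact that in an unweighted graph an additive $+k$ guarantee becomes a multiplicative constraint whenever the original distance is large. First I would take a \MR instance with supergraph $(A, B, \mathcal{E})$, where $A$ is partitioned into groups $A_1, \dots, A_q$ and $B$ into $B_1, \dots, B_q$, and where YES instances admit a ``super-solution'' of size $q$ (one vertex per group on each side) while NO instances require every sub-solution to miss a $1 - 1/(\text{poly})$ fraction of superedges. I would build a directed graph $G$ as follows: for each super-edge of $\mathcal{E}$ create a demand pair $(s,t)$ whose \emph{only} short routes in $G$ pass through a short ``selection gadget'' that forces the spanner to pay for one vertex-choice in the corresponding group of $A$ and one in the corresponding group of $B$. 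The key is to blow up the lengths: insert long paths (of length roughly $\Theta(k)$, or more precisely long enough that the additive $+k$ slack is negligible relative to the true distance) so that any valid $+k$-spanner must essentially route each demand \emph{through its intended gadget}, i.e. the additive error cannot buy a shortcut. Then a sparse $+k$-spanner corresponds to choosing few vertices per group that cover many super-edges, i.e. to a good \MR assignment, while a dense one is forced by a bad assignment.

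The core quantitative step is the gap analysis. In the YES case, the consistent \MR super-solution of size $q$ yields a $+k$-spanner that reuses the $O(q)$ selected gadget-vertices across all super-edges, giving total size roughly $\tilde O(q) + (\text{base edges of } G)$, where the base edges are a $\mathrm{poly}$-size ``scaffold'' that must be included regardless (all long connector paths, etc.) and which we make dominate lower-order terms but stay polynomially bounded. In the NO case, because no small vertex set covers more than a tiny fraction of super-edges, the spanner must include separate selection-gadget material for a $2^{\log^{1-\eps} n}$-factor more super-edges' worth of structure, forcing its size to be larger by that factor. The additive slack $k$ interacts with this as follows: each ``use'' of a gadget, or each shortcut the adversary might try, can save at most $k$ edges of length, so by making every meaningful detour cost $\Omega(k)$ edges we ensure the $+k$ budget can be abused at most $O(1)$ times per demand; carefully accounting for this is exactly where the $1/k^3$ loss in the inapproximability factor comes from (a small polynomial number of ``$k$-sized'' cheats per demand, compounded over the layered structure). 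Since \MR is $2^{\log^{1-\eps} n}$-hard to approximate under $NP \not\subseteq DTIME(2^{\polylog n})$~\cite{AL97}, and our reduction blows $n$ up by only a $\mathrm{poly}(k) \cdot \mathrm{poly}(n_{\MR})$ factor (polynomial as long as $k \le \mathrm{poly}(n)$, and the statement is vacuous otherwise), the hardness transfers up to the claimed $2^{\log^{1-\eps} n}/k^3$ factor, after re-parametrizing $\eps$.

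I expect the main obstacle to be the \emph{soundness argument}: ruling out that the $+k$ slack, summed cleverly across the many demand pairs sharing scaffold edges, lets a cheap subgraph satisfy all demands without encoding a good \MR assignment. Concretely, an adversarial spanner might try to satisfy a demand pair not through its own selection gadget but by ``piggybacking'' on scaffold paths installed for other demands, incurring only a bounded additive detour. To prevent this I would design the gadgets so that any route avoiding the intended selection vertices has length at least $d_G(s,t) + \omega(k)$ — e.g. by routing each demand through a private long path whose only ``bypasses'' of length $\le d_G(s,t)+k$ genuinely correspond to legal label choices — and then argue, group by group, that the set of selection vertices actually bought by the spanner must form a near-cover, whose size (by the \MR soundness) is $2^{\log^{1-\eps} n}$ times $q$. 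Making the stretch parameter $k$ explicit throughout (rather than folding it into constants) is what yields the stated dependence, and is the delicate bookkeeping part of the proof. The YES/completeness direction and the polynomial size bound I expect to be routine once the gadget is fixed.
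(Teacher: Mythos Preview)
Your proposal has the right starting point (reduce from \MR, stretch superedges into paths of length $\Theta(k)$ so the additive slack becomes negligible), but it misidentifies where the real difficulty lies and as written would not go through. You treat the superedge endpoints as ``demand pairs'' and focus on soundness, but in \additive\ there are no demand pairs: \emph{every} pair of vertices in the constructed graph must be spanned with additive stretch $k$, including all the auxiliary vertices you introduce on the long connector paths. This is precisely the ``catch-22'' the paper emphasizes: once you subdivide each superedge into a path of length $\Theta(k)$, the intermediate vertices themselves create $\Theta(x\cdot|E'|\cdot k)$ new pairs that must be spanned, and spanning them na\"ively already costs more than the NO-case lower bound, destroying the gap. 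The paper resolves this by \emph{sharing} the middle-path vertices $M$ across all $x$ copies of the outer graph and by adding several further layers of auxiliary vertices ($S$, $L$, $P$, $Q$, and a diameter-reducing star through $T$) whose sole purpose is to give cheap $+k$ spanning paths for every pair involving scaffold vertices; verifying that this works is a lengthy case analysis (their Theorems~7.6 and~8.5) and is the main technical content of the proof. Your claim that ``the YES/completeness direction \ldots\ I expect to be routine'' is exactly backwards.

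Two smaller points. First, you propose a directed construction; the paper proves the result for \emph{undirected} graphs, which is strictly harder (directedness already kills the non-canonical short paths that the $S,L,P,Q$ machinery is built to handle). Second, your account of the $1/k^3$ loss is off: it does not come from ``$k$-sized cheats per demand'' in soundness, but rather from a factor $2k$ in converting an arbitrary spanner to a canonical one (their Lemma~8.2) times a factor $O(k^2)$ in the YES-instance size bound (their Theorem~8.4), the latter arising because the auxiliary path structures each have length $\Theta(k)$ and the diameter-reducing star multiplies the vertex count by another $k$. Soundness itself is actually the easy direction: once the construction is fixed, the only non-canonical short paths between outer-node pairs pass through a single identifiable outer edge, and a direct charging argument bounds the blowup.
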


Note that this theorem implies strong hardness results for both small and large extremes of $k$: the problem is hard when $k=1$, but is also hard when $k$ is polylogarithmic in $n$.  The associated multiplicative-stretch problem, of multiplicative $2$ and polylogarithmic in $n$ respectively, are actually both easy: there is an $O(\log n)$-approximation for the multiplicative $2$ problem~\cite{KP94}, and the classical tradeoff of~\cite{ADDJS93} implies an $O(1)$-approximation when the multiplicative stretch is at least $\Omega(\log n)$.  The $+1$-spanner hardness is particularly surprising, as many of the techniques that have been used to prove hardness for multiplicative spanners in the past~\cite{CDK12,EP00-hardness,Kor01} break down for this problem.  In particular, essentially all previous hardness results for spanners have been reductions from \textsc{Label Cover}~\cite{AL97} or \textsc{Min-Rep}~\cite{Kor01} in a way where the key factor to the hardness is the difficulty of spanning some set of crucial \emph{edges} using paths of length at most the allowed stretch.  This length must be at least $3$, since multiplicative $2$-spanners are easier to approximate~\cite{KP94}.  But for $+1$-spanners it cannot be that edges are the hard things to span, for the same reason (multiplicative $2$-spanners are easy).  Instead it must be pairs at longer distance which are hard to span.  But the obvious ways of turning an edge into a longer path (e.g.~subdividing) run into something of a catch-22, since spanning the intermediate edges of such a path is itself expensive.  So we are forced to use a much more complicated reduction which ``unifies" these paths in such a way that intermediate edges are no longer too costly, making our reduction significantly more complex.


Once we have these ideas, we can further modify them to allow us to prove hardness for much larger stretch values.  Without using the ideas from the $+1$ case it is easy to prove a theorem similar to the hardness of \basic proved in~\cite{DKR16} of $2^{(\log^{1-\epsilon} n) / k}$.  This involves starting with an instance of \textsc{Min-Rep} and subsampling the superedges to get a high girth instance, but with this technique it is not possible to prove hardness for stretch values that are logarithmic or larger.  But if we replace the superedges with longer paths, as we did in the $+1$ case, we can move the $k$ from the exponent to a multiplicative factor, allowing hardness for much larger stretch values.  This requires even more technical work than in the $+1$-case, as the longer paths introduce even more troublesome complications.

\subsection{Related Work} \label{sec:related}
Distance preservers were introduced by Coppersmith and Elkin~\cite{CE06}, who showed that extremely sparse preservers exist if the number of demand pairs is not too large.  The state of the art is due to Bodwin and Williams~\cite{BW16}, who gave a more fine-grained existential analysis and also (with~\cite{AB16-soda}) demonstrated some connections to additive spanners.  There has been much more work on additive spanners (see~\cite{Chechik16} for a recent survey); some upper bound highlights include a $+2$-spanner with $O(n^{3/2})$ edges~\cite{ACIM99}, a $+4$-spanner with $\tO(n^{7/5})$ edges~\cite{Chechik13}, and $+6$-spanner with $O(n^{4/3})$ edges~\cite{BKMP10,Woodruff10}.  From a lower bound perspective, Woodruff~\cite{Woodruff06} gave the first lower bounds that did not depend on the Erd\H{o}s girth conjecture, and recently Abboud and Bodwin~\cite{AB16-stoc} showed that the $4/3$ exponent is tight: sparser spanners are not always possible even when the allowed additive stretch is a small polynomial.

Work on optimizing spanners began over 20 years ago, with the first upper bounds provided by Kortsarz and Peleg~\cite{KP94} for stretch $2$ and the first lower bounds given by Kortsarz~\cite{Kor01} and by Elkin and Peleg~\cite{EP00-hardness} (who in particular showed hardness for pairwise spanners; hardness for distance preservers is folklore).  There has recently been a surge of progress on these problems, including new algorithms for low stretch spanners~\cite{BBMRY13,DZ16}, for directed spanners~\cite{BGJRW09,BBMRY13,DK11-stoc}, for low-degree spanners~\cite{CDK12,CD14}, and for fault-tolerant spanners~\cite{DK11-stoc,DK11-podc,DZ16}, as well as lower bounds which finally provide strong hardness~\cite{DKR16}.  

\section{An $n^{3/5+\eps}$-approximation for \preserver} \label{sec:preservers}



In this section we prove Theorem~\ref{thm:preservers-main}.  
We begin with some definitions and previous results which will help.  For any pair $(s,t) \in P$, let $\mathcal P_{s,t}$ denote the collection of shortest paths from $s$ to $t$, and let the \emph{local graph} $G^{s,t}=(V^{s,t},E^{s,t})$ be the union of all nodes and edges in $\mathcal P_{s,t}$. Note that every path from $s$ to $t$ in $G^{s,t}$ is a shortest path.  For any $k$, we say a pair $(s,t)\in P$ is $k$-\emph{thick} if $|V^{s,t}|\geq k$ and otherwise, we call the pair $k$-\emph{thin}.  We say that an edge set $F \subseteq E$ \emph{satisfies} a pair $(s,t) \in P$ if $F$ contains a shortest path from $s$ to $t$.
Consider the following standard LP relaxation (essentially that of~\cite{DK11-stoc} for spanners). 
\begin{align*}
  \min\quad&\textstyle \sum_{e\in E}x_e\\
\text{s.t.}\quad &\textstyle \sum_{p \in \mathcal P_{s,t}} f_p \geq 1&\forall{(s,t)\in P}\\
& \textstyle\sum_{p \in \mathcal P_{s,t} : e \in P} f_p \leq x_e &\forall (s,t) \in P, \forall e \in E^{s,t}\\
&x_e\geq 0&\forall e\in E
\end{align*}


Let $\mathrm{LP}$ denote the value of the optimal solution to this relaxation.  Consider the following simple randomized rounding:
\begin{itemize}
 \item Retain every edge $e\in E$ independently with probability $\min\{1,x_e\cdot k\ln n\}$.
\end{itemize}

The next claim is due to~\cite{BBMRY13} (here we adapt the wording for preservers).

\begin{claim}[\cite{BBMRY13}] \label{clm:LP}
The above rounding 
 preserves the distance for every $k$-thin pair with high probability, and the number of edges retained is at most $\tilde O(k) \cdot \mathrm{LP}$.
\end{claim}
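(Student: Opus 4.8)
The plan is to establish the two assertions separately and then combine them with a union bound. First I would handle the edge count. By linearity of expectation, $\expct{|E(H)|} = \sum_{e\in E}\min\{1,x_e k\ln n\} \le k\ln n\sum_e x_e = k\ln n\cdot\mathrm{LP}$. To upgrade this to a high-probability bound I would split $E$ into the edges with $x_e k\ln n \ge 1$ and the rest: there are at most $k\ln n\cdot\mathrm{LP}$ of the former (each contributes at least $1/(k\ln n)$ to $\mathrm{LP}$), and they are kept deterministically; the latter are kept independently with total mean at most $k\ln n\cdot\mathrm{LP}$, so a Chernoff bound shows that $O(k\ln n\cdot\mathrm{LP} + \log n)$ of them survive with high probability. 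Since $\mathrm{LP}\ge 1$ — any single demand pair $(s,t)$ already forces $\sum_{e\in E^{s,t}}x_e \ge \sum_{p\in\mathcal P_{s,t}}f_p \ge 1$ — this total is $\tilde O(k)\cdot\mathrm{LP}$.

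Next I would prove distance preservation for a fixed $k$-thin pair $(s,t)$, where $|V^{s,t}|<k$. Restricting the LP solution to this pair, the values $(f_p)_{p\in\mathcal P_{s,t}}$ form a fractional $s$-$t$ flow of value at least $1$ inside the local graph $G^{s,t}$, with capacity constraints $\sum_{p\ni e}f_p\le x_e$. The pair is satisfied by $H$ precisely when the retained edges of $G^{s,t}$ contain an $s$-$t$ path, equivalently when no minimal $s$-$t$ edge cut of $G^{s,t}$ is entirely deleted. For any such cut $C$, every path in $\mathcal P_{s,t}$ crosses $C$, so $\sum_{e\in C}x_e \ge \sum_{e\in C}\sum_{p\ni e}f_p \ge \sum_{p}f_p \ge 1$; since deletions are independent,
\[
\prob{C \text{ entirely deleted}} = \prod_{e\in C}\bigl(1-\min\{1,x_e k\ln n\}\bigr) \le \prod_{e\in C}e^{-x_e k\ln n} = n^{-k\sum_{e\in C}x_e} \le n^{-k}.
\]
Each minimal $s$-$t$ cut of $G^{s,t}$ is determined by the set of vertices on its source side, so there are at most $2^{|V^{s,t}|}<2^k$ of them, and a union bound gives $\prob{(s,t)\text{ not satisfied}}\le (2/n)^k$. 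Because $|V^{s,t}|<k$ forces $k\ge 3$ whenever a nontrivial thin pair exists (the cases $k\le 2$ being trivial), this is at most $8/n^3$; a final union bound over the at most $n^2$ thin pairs shows all of them are satisfied with high probability, and intersecting with the edge-count event finishes the claim.

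I expect the only real subtlety to be the counting in the last step: one must survive a number of cuts that is exponential in $|V^{s,t}|$, and it is exactly the thinness hypothesis $|V^{s,t}|<k$ that lets the $n^{-k}$ deletion probability — which in turn comes from the $k\ln n$ blow-up in the retention probability — overwhelm both the $2^{|V^{s,t}|}$ cuts and the $n^2$ pairs. An alternative route, which avoids enumerating cuts, is a layer-by-layer argument exploiting that $G^{s,t}$ is layered by distance from $s$ and has fewer than $k$ layers: one tracks how much of the unit flow still reaches a vertex connected to $s$ by retained edges as the layers advance, and shows this surviving flow stays positive except with probability $1/\mathrm{poly}(n)$; this is essentially the form the argument takes in~\cite{BBMRY13,DK11-stoc}.
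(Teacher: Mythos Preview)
Your argument is correct. The paper does not prove this claim itself but attributes it to~\cite{BBMRY13}; as the paper remarks later (in the \DSF section), the original proof there works by bounding the number of minimal \emph{antispanners}---edge sets whose removal from the local flow destroys every allowed path---by $|E|\cdot|V^{s,t}|^{|V^{s,t}|}$, and then union-bounding over those.

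Your route is genuinely simpler: because every $s$--$t$ path in the local graph $G^{s,t}$ is automatically a shortest path, distance preservation for a pair reduces to mere connectivity inside $G^{s,t}$, so it suffices to union-bound over minimal $s$--$t$ cuts, of which there are at most $2^{|V^{s,t}|}$. The paper makes exactly this observation when it moves to \DSF (where connectivity is the actual requirement), but does not point out that the same simplification already applies to preservers. The antispanner count is what one needs for general multiplicative or additive stretch, where the local graph is larger and connectivity no longer coincides with the stretch constraint; for preservers your cut argument suffices and is cleaner. One small inaccuracy: the ``layer-by-layer'' alternative you sketch at the end is not how the argument runs in~\cite{BBMRY13}---that paper uses precisely the antispanner enumeration just described.
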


%


For thick pairs, we have a different guarantee. As noted in~\cite{BGJRW09,DK11-stoc,BBMRY13}, the local graphs of thick pairs have a small hitting set.  This can be proved by random selection or through rounding a feasible solution for a Hitting Set LP; we include one proof here for completeness:
\begin{claim} \label{clm:hitting-set}
We can find in polynomial time a hitting set of size $\tilde O(n/k)$ for the vertex sets $V^{s,t}$ of all $k$-thick pairs $(s,t)$.
\end{claim}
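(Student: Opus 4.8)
The plan is to find a small set of vertices that intersects $V^{s,t}$ for every $k$-thick pair. The key observation is that each $V^{s,t}$ has size at least $k$ by definition, so a random sample of vertices of size roughly $(n/k)\cdot O(\log n)$ will hit any fixed such set with high probability, and a union bound over the (at most $n^2$) relevant pairs finishes the argument. Concretely, I would let each vertex be included in the set $S$ independently with probability $p = \min\{1, (c\ln n)/k\}$ for a suitable constant $c$. For a fixed $k$-thick pair $(s,t)$, the probability that $S \cap V^{s,t} = \emptyset$ is at most $(1-p)^{|V^{s,t}|} \le (1-p)^k \le e^{-pk} = e^{-c\ln n} = n^{-c}$. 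Taking $c \ge 3$ and union-bounding over all at most $n^2$ pairs in $P$ shows that with probability at least $1 - n^{-1}$, $S$ hits all $k$-thick pairs, while $\expct{|S|} = np = \tilde O(n/k)$ and a Chernoff bound shows $|S| = \tilde O(n/k)$ with high probability. Repeating if necessary (or just noting the failure probability is small) gives a Las Vegas polynomial-time algorithm.

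For a deterministic guarantee one can instead round a Hitting Set LP: introduce a variable $y_v \in [0,1]$ for each vertex, minimize $\sum_v y_v$ subject to $\sum_{v \in V^{s,t}} y_v \ge 1$ for every $k$-thick pair $(s,t)$. The all-$(1/k)$ assignment is feasible since $|V^{s,t}| \ge k$, so the LP optimum is at most $n/k$. Standard randomized rounding (include $v$ with probability $\min\{1, y_v \cdot O(\log n)\}$, which only scales the objective by $O(\log n)$) combined with the same union-bound argument as above yields a hitting set of size $\tilde O(n/k)$; alternatively, greedy set cover on the LP-scaled instance derandomizes this. Either route requires only that the collection of $k$-thick pairs — and hence their local graphs $G^{s,t}$, computable by a single-source shortest-path / BFS-style computation from each $s$ — can be enumerated in polynomial time, which they can since $|P| \le n^2$.

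The only mildly delicate point, rather than a true obstacle, is purely definitional bookkeeping: one must be careful that the quantity being hit is the \emph{vertex} set $V^{s,t}$ of the local graph (all vertices lying on some shortest $s$--$t$ path), not some larger or smaller set, and that ``$k$-thick'' is exactly the condition $|V^{s,t}| \ge k$ that makes the sampling probability work out. Everything else is a routine Chernoff-plus-union-bound argument, so I expect the proof to be short. I would present the randomized-sampling version as the main proof and perhaps remark that the LP-rounding version gives a deterministic algorithm.
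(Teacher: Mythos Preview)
Your proposal is correct and essentially matches the paper. The paper's proof is precisely your LP-rounding variant: it observes that the all-$(1/k)$ vector is feasible for the Hitting Set LP (since each $V^{s,t}$ has size at least $k$), giving LP value $n/k$, and then rounds to a hitting set of size $\tilde O(n/k)$; the paper also remarks in the lead-in that random selection works equally well, which is your primary argument.
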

\begin{proof} This is easily seen by noting that the vector $(1/k,\ldots,1/k)\in \reals^{V}$ is a feasible solution for the standard LP relaxation for Hitting Set when all sets have size at least $k$, and the value of this solution is $n/k$. Thus, it can be easily rounded to a hitting set of cardinality $\tilde O(n/k)$.
\end{proof}

We assume that we know the value $\opt$ (this is without loss of generality since we can just try every possible value in $[|E|]$ for $\opt$).  We start with the following algorithm.


\paragraph{Algorithm 1} 
\begin{itemize}
  \item Set $k=n/\sqrt{\opt}$.
  \item Run the above randomized rounding and add the edges to our current solution $F$ repeatedly until $F$ preserves distances for all $k$-thin pairs.
  \item Find a hitting set $X\subseteq V$ of cardinality $\tilde O(n/k)$ using Claim~\ref{clm:hitting-set} for the vertex sets $V^{s,t}$ for $k$-thick pairs $(s,t)$, and for every $u\in X$, add to $F$ a shortest-path tree from $u$ and a shortest-path tree into $u$.
\end{itemize}
It is easy to see that the resulting set is a preserver, and that the cardinality of the set of edges added to $F$ in the final step is $\tilde O(n^2/k)$, and so we get the following guarantee:
\begin{lemma}\label{lem:alg1} Algorithm 1 gives an $\tilde O(n/\sqrt{\opt})$-approximation for \preserver.
\end{lemma}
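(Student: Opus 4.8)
The plan is to bound $|F|$ by analyzing the two phases of Algorithm~1 separately, showing each contributes $\tilde O(n\sqrt{\opt})$ edges, and then to verify that $F$ is genuinely a preserver. Since $\opt \geq 1$, a feasible solution of size $\tilde O(n\sqrt{\opt}) = \tilde O((n/\sqrt{\opt})\cdot\opt)$ is an $\tilde O(n/\sqrt{\opt})$-approximation, which is exactly the claim.

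For the first phase, I would first note that $\mathrm{LP}\le\opt$, since the optimal preserver gives an integral feasible solution to the LP of value $\opt$. By Claim~\ref{clm:LP}, a single call to the randomized rounding preserves all $k$-thin pairs with high probability while adding at most $\tilde O(k)\cdot\mathrm{LP}=\tilde O(k)\cdot\opt$ edges; since each call succeeds with high probability, the number of calls made before $F$ preserves all $k$-thin pairs is $O(\log n)$ with high probability, so the total number of edges added in this phase is $\tilde O(k)\cdot\opt$, which equals $\tilde O(n\sqrt{\opt})$ once we set $k=n/\sqrt{\opt}$. For the second phase, Claim~\ref{clm:hitting-set} applies because every $k$-thick pair has $|V^{s,t}|\ge k$ by definition, and it produces a set $X$ with $|X|=\tilde O(n/k)$ that hits every such $V^{s,t}$; each $u\in X$ contributes an in-tree and an out-tree rooted at $u$, of at most $n-1$ edges each, for a total of $\tilde O(n/k)\cdot n=\tilde O(n^2/k)=\tilde O(n\sqrt{\opt})$ edges. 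Summing the two phases, $|F|=\tilde O(n\sqrt{\opt})$.

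It remains to check feasibility. Every $k$-thin pair is preserved because the first phase runs exactly until that holds. For a $k$-thick pair $(s,t)$, pick any $u\in X\cap V^{s,t}$; since $u$ lies on some shortest $s$-$t$ path, $d_G(s,u)+d_G(u,t)=d_G(s,t)$, and the in-tree rooted at $u$ contains a shortest $s$-to-$u$ path while the out-tree rooted at $u$ contains a shortest $u$-to-$t$ path, so their concatenation is an $s$-$t$ walk in $F$ of length $d_G(s,t)$; as $F$ is a subgraph of $G$ this forces $d_F(s,t)=d_G(s,t)$, so the pair is preserved. I expect the only mild points requiring care to be the bookkeeping on the number of rounds in the first phase (absorbed into the $\tilde O$ via the high-probability success of each round) and phrasing the in/out-tree concatenation correctly for directed graphs; neither is a real obstacle, so the argument should be short.
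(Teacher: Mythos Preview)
Your proposal is correct and follows essentially the same approach as the paper: bound the thin-pair cost by $\tilde O(k\cdot\mathrm{LP})\le\tilde O(k\cdot\opt)$ via Claim~\ref{clm:LP}, bound the thick-pair cost by $\tilde O(n^2/k)$ via the hitting set and shortest-path trees, substitute $k=n/\sqrt{\opt}$, and divide by $\opt$. You are just more explicit than the paper about why $\mathrm{LP}\le\opt$, why the repetition in the first phase terminates quickly, and why the in/out-tree concatenation preserves each thick pair; the paper leaves these as evident from the preceding discussion.
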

\begin{proof}
By the above observation for thick pairs, and Claim~\ref{clm:LP} for thin pairs, we get an approximation guarantee of $$\frac{\tilde O(k\cdot\mathrm{LP})+\tilde O(n^2/k)}{\opt}\leq \frac{\tilde O(k\cdot\mathrm{\opt})+\tilde O(n^2/k)}{\opt}=\frac{\tilde O(n\sqrt{\opt})}{\opt}=\tilde O\left(\frac{n}{\sqrt{\opt}}\right).$$
\end{proof}

\begin{remark} For global spanner problems (when $\opt=\Omega(n)$), a similar algorithm and analysis in~\cite{BBMRY13} gives an approximation ratio of $\tilde O(n/\sqrt{\opt})=\tilde O(\sqrt{n})$. However, since we have no lower bound on $\opt$ for non-global spanning and connectivity problems, we cannot achieve such a guarantee using just this algorithm.
\end{remark}

When $\opt\geq n^{4/5}$, the above guarantee immediately implies our desired $\tilde O(n^{3/5})$-approximation. So let us now consider the case when $\opt< n^{4/5}$. We begin by partitioning our pairs $P$ into $\log n$ buckets, such that in each bucket, the distances $d(s,t)$ in that bucket are all in the range $[d^*,2d^*)$ for some value $d^*$ associated with that bucket. We run our algorithm separately for each bucket, at the cost of an additional $\log n$ factor in our approximation guarantee (relative to the approximation that we will state). Let us focus on one such bucket $P_{d^*}$ for a fixed value $d^*$. Our next algorithm is similar to Algorithm 1, but does not add a full shortest-path tree and works well for buckets where $d^*$ is small.  For the sake of this algorithm, let us define sets of nearby terminals for every vertex $u\in V$: $$S^u_{d^*}=\{s\mid d(s,u)< 2d^*,\exists t:(s,t)\in P_{d^*}\}\qquad\text{and}\qquad T^u_{d^*}=\{t\mid d(u,t)< 2d^*,\exists s:(s,t)\in P_{d^*}\}$$

Our second algorithm is as follows:

\paragraph{Algorithm 2:} 
\begin{itemize}
  \item Set $k=\sqrt{d^*n}$.
  \item Run randomized rounding as in Algorithm 1 and add the edges to our current solution $F$ repeatedly until $F$ preserves distances for all $k$-thin pairs in $P_{d^*}$.
  \item Find a hitting set $X\subseteq V$ of cardinality $\tilde O(n/k)$ for the vertex sets $V^{s,t}$ for $k$-thick pairs $(s,t) \in P_{d^*}$, and for every $u\in X$, add to $F$ a shortest path from $u$ to every $t\in T^u_{d^*}$ and from every $s\in S^u_{d^*}$ to $u$.
\end{itemize}

It is easy to see that the resulting set preserves distances for all pairs in $P_{d^*}$, and that the cardinality of the set of edges added to $F$ in the final step is $\tilde O(d^* (\max_{u\in V}(|S^u_{d^*}|+|T^u_{d^*}|))n/k)$, and so we get the following guarantee:
\begin{lemma}\label{lem:alg2} 
Algorithm 2 satisfies all pairs in $P_{d^*}$ and has at most $\tilde O(\sqrt{d^*n} \cdot \opt)$ edges.
\end{lemma}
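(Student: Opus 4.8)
The plan is to verify the two facts asserted just before the lemma — that the output $F$ satisfies every pair in $P_{d^*}$, and that the hitting-set step adds $\tilde O(d^*\cdot n/k)\cdot\max_{u}(|S^u_{d^*}|+|T^u_{d^*}|)$ edges — and then to combine these with a bound on the randomized-rounding step (via Claim~\ref{clm:LP}) and, crucially, a bound on $\max_{u}(|S^u_{d^*}|+|T^u_{d^*}|)$ in terms of $\opt$. For correctness: every $k$-thin pair in $P_{d^*}$ is satisfied by the rounding step by Claim~\ref{clm:LP}. For a $k$-thick pair $(s,t)\in P_{d^*}$, the hitting set $X$ meets $V^{s,t}$ in some vertex $u$, which by definition of $V^{s,t}$ lies on a shortest $s$–$t$ path, so $d(s,u)+d(u,t)=d(s,t)\in[d^*,2d^*)$; in particular $d(s,u)<2d^*$ and $d(u,t)<2d^*$, hence $s\in S^u_{d^*}$ and $t\in T^u_{d^*}$, and the shortest $s$–$u$ path and shortest $u$–$t$ path added for this $u$ concatenate to a shortest $s$–$t$ path in $F$.

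For the edge count, the randomized-rounding step contributes $\tilde O(k)\cdot\mathrm{LP}$ edges by Claim~\ref{clm:LP}, and since the displayed LP is a relaxation of \preserver we have $\mathrm{LP}\le\opt$; with $k=\sqrt{d^*n}$ this is $\tilde O(\sqrt{d^*n}\cdot\opt)$. For the hitting-set step, each $u\in X$ receives at most $|S^u_{d^*}|+|T^u_{d^*}|$ paths, each of length less than $2d^*$, and $|X|=\tilde O(n/k)$ by Claim~\ref{clm:hitting-set}, giving the asserted $\tilde O(d^*\cdot n/k)\cdot\max_{u}(|S^u_{d^*}|+|T^u_{d^*}|)$. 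The key observation is that $\max_{u}(|S^u_{d^*}|+|T^u_{d^*}|)=O(\opt)$: in the optimal preserver $F^*$, every vertex that is the first coordinate of some pair in $P_{d^*}$ must reach its mate, which sits at distance at least $d^*\ge 1$, so it has out-degree at least one in $F^*$; since distinct vertices are tails of distinct edges, the number of such first coordinates is at most $|E(F^*)|=\opt$, and symmetrically the number of second coordinates of pairs in $P_{d^*}$ is at most $\opt$. Hence $|S^u_{d^*}|\le\opt$ and $|T^u_{d^*}|\le\opt$ for every $u$, so the hitting-set step adds $\tilde O(d^*\cdot n\opt/k)=\tilde O(\sqrt{d^*n}\cdot\opt)$ edges. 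Adding the two contributions (together with the earlier edges in $F$, which only help) yields the claimed bound.

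The only step that is not a routine adaptation of the analysis of Algorithm~1 is the bound $\max_{u}(|S^u_{d^*}|+|T^u_{d^*}|)=O(\opt)$, and I expect this to be the point that needs the most care; the key realization is that each terminal contributing to one of these neighborhoods already forces at least one edge into the optimum, so their count can be charged against $\opt$ rather than bounded trivially by $n$. Everything else (the LP-relaxation bound $\mathrm{LP}\le\opt$, the path-length bound of $2d^*$, and the hitting-set size) is standard.
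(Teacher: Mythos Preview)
Your proof is correct and follows essentially the same approach as the paper. In particular, the paper's proof hinges on the identical observation that every terminal has at least one incident edge in any preserver, hence the number of terminals (and thus $\max_u(|S^u_{d^*}|+|T^u_{d^*}|)$) is $O(\opt)$; your out-degree/in-degree charging argument is just a slightly more explicit phrasing of this, and the rest of your argument (correctness via the hitting-set vertex lying on a shortest path, the $\tilde O(k)\cdot\mathrm{LP}$ bound for thin pairs, and the path-length bound $2d^*$) matches the paper's reasoning exactly.
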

\begin{proof}
Note that every terminal must have at least one incident edge in any preserver, and so the number of terminals is always at most $O(\opt)$. Thus, by the above observation for thick pairs, and Claim~\ref{clm:LP} for thin pairs, we get a total cost of $$\tilde O(k\cdot\mathrm{LP})+\tilde O(d^*(\max_{u\in V}(|S^u_{d^*}|+|T^u_{d^*}|))n/k) \leq \tilde O(k\cdot\mathrm{\opt})+\tilde O(d^*\opt n/k)=\tilde O(\sqrt{d^*n} \cdot \opt).$$
\end{proof}

Note that for $d^*\leq n^{1/5}$, this gives us our desired $\tilde O(n^{3/5})$ approximation guarantee. Finally, we focus on the remaining case when $d^*>n^{3/5}$ (and recall that $\opt<n^{4/5}$). For this regime, we use the junction tree approach which has been used until now only for problems without hard distance constraints~\cite{CEGS11,FKN12,BBMRY13}. A junction tree is a disjoint\footnote{The two arborescences may overlap in the original graph, but we may think of them as coming from two distinct copies of the graph.} union of an in-arborescence and an out-arborescence with terminal leaves and a common root. In our case, the leaves of the in-arborescence will come from the set $S_{d^*}=\{s\mid\exists t:(s,t)\in P_{d^*}\}$ and the leaves of the out-arborescence will come from the set $T_{d^*}=\{t\mid\exists s:(s,t)\in P_{d^*}\}$. The \emph{density} of a junction tree (in the context of preservers) is the ratio between the number of edges in the junction tree, and the number of terminal pairs in $P_{d^*}$ such that the junction tree contains a shortest path connecting that pair. In order to find such a junction tree, we construct a pair of graphs for every node $u$: we let $G^{\mathrm{in}}_{d^*}(u)$ be the union of all shortest paths into $u$, let $G^{\mathrm{out}}_{d^*}(u)$ be the union of all shortest paths coming out of $u$. Note that \emph{every} path in these two graphs is a shortest path. 

Chekuri et al.~\cite{CEGS11} show that for any fixed $u$, a minimum density junction tree for general connectivity problems can be approximated within an $n^{\eps}$ factor for any $\eps>0$. While their algorithm is only designed to handle connectivity demands, and not distance-based demands, we can apply their algorithm as a black box on the graph $G_{d^*}$, which we define as the disjoint union of $G^{\mathrm{in}}_{d^*}(u)$ and $G^{\mathrm{out}}_{d^*}(u)$ (connected through $u$), and the pairs $$P_{d^*}(u)=\{(s,t)\in P_{d^*}\mid d(s,u)+d(u,t)=d(s,t)\}.$$ Since the \emph{only} paths connecting such terminal pairs in this graph are shortest paths, we are guaranteed that every pair connected by the algorithm of~\cite{CEGS11} will also be distance-preserved. Thus our third algorithm is as follows:

\noindent\textbf{Algorithm 3:} As long as $P_{d^*}\neq\emptyset$, repeat the following:
  \begin{itemize}
    \item For every $u\in V$, use the algorithm of~\cite{CEGS11} to find an approximately optimal junction tree in graph $G_{d^*}(u)$ w.r.t.\ pairs $P_{d^*}(u)$, and let $F^u$ and $P^u$ be the set of edges used and terminals connected by this algorithm, respectively.
    \item Choose $u^*$ minimizing the ratio $|F^{u^*}| / |P^{u^*}|$, add edge set $F^{u^*}$ to $F$, and remove pairs $P^{u^*}$ from $P_{d^*}$.
  \end{itemize}
The approximation guarantee of this algorithm is as follows:
\begin{lemma}\label{lem:alg3} Algorithm 3 satisfies all pairs in $P_{d^*}$ and adds at most $n^{\eps}\cdot\opt^2/d^*$ edges.
\end{lemma}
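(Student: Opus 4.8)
The plan is to run the standard ``greedy density'' analysis that underlies junction-tree algorithms, where the only genuinely new ingredient is exhibiting a low-density junction tree using the fact that every pair in $P_{d^*}$ has a hard distance lower bound $d^*$. First I would dispense with correctness: $G_{d^*}(u)$ is built only from shortest paths into and out of $u$, so any $s$--$t$ path through $u$ inside $G_{d^*}(u)$ has length exactly $d(s,u)+d(u,t)$, and membership in $P_{d^*}(u)$ forces this to equal $d(s,t)$; hence every pair ``connected'' by the black-box algorithm of~\cite{CEGS11} is in fact distance-preserved, so the final $F$ is a preserver for $P_{d^*}$. Since any single shortest $s$--$t$ path is already a junction tree rooted at $s$ covering $(s,t)$, the minimum density is always finite and at least one pair is removed per iteration, so the loop runs for at most $|P_{d^*}| \le n^2$ iterations and each iteration is polynomial.

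The heart of the argument is the claim: at any point in the execution, if $q := |P_{d^*}|$ pairs remain, then some vertex $r$ admits a junction tree in $G_{d^*}(r)$ (w.r.t.\ $P_{d^*}(r)$) of density at most $2\opt^2/(q d^*)$. To prove it I would fix an optimal preserver $H^*$ with $|E(H^*)| = \opt$, and for each remaining pair $(s,t)$ fix a shortest $s$--$t$ path $\pi_{s,t} \subseteq H^*$; since $(s,t)\in P_{d^*}$ this path uses at least $d^*$ edges. Counting edge-incidences, $\sum_{(s,t)}|\pi_{s,t}| \ge q d^*$ while all these paths lie in the $\opt$-edge graph $H^*$, so some edge, and hence one of its endpoints $r$, lies on at least $q d^*/\opt$ of the paths. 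For every such pair, split $\pi_{s,t}$ at $r$: both halves are subpaths of a shortest path, hence shortest paths, so $(s,t)\in P_{d^*}(r)$, the first half lies in $G^{\mathrm{in}}_{d^*}(r)$ and the second in $G^{\mathrm{out}}_{d^*}(r)$. Taking a shortest-path in-tree toward $r$ of the union of the first halves and a shortest-path out-tree from $r$ of the union of the second halves produces a junction tree: each of the two trees is a spanning tree of a connected subgraph of (a copy of) $H^*$, so it has at most $\opt$ edges, for a total of at most $2\opt$; and it distance-preserves at least $qd^*/\opt$ of the pairs. Its density is therefore at most $2\opt^2/(qd^*)$.

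With the claim in hand, I would invoke the $n^{\eps'}$-approximation of~\cite{CEGS11} for minimum-density junction trees (taking the root to be $r$, which Algorithm~3 considers since it iterates over all $u\in V$): in iteration $i$ the chosen $u^*$ satisfies $|F^{u^*}|/|P^{u^*}| \le n^{\eps'}\cdot 2\opt^2/(q_i d^*)$ where $q_i$ is the number of pairs remaining at the start of that iteration. Writing $q_{i+1} = q_i - |P^{u^*}|$ and bounding $(q_i - q_{i+1})/q_i \le \sum_{j=q_{i+1}+1}^{q_i} 1/j$, the telescoping harmonic sum gives $\sum_i |P^{u^*_i}|/q_i \le \sum_{j=1}^{q_0} 1/j = O(\log n)$ (using $q_0 \le n^2$). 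Hence the total number of edges added is $\sum_i |F^{u^*_i}| \le n^{\eps'}\cdot\frac{2\opt^2}{d^*}\sum_i \frac{|P^{u^*_i}|}{q_i} = O(\log n)\cdot n^{\eps'}\cdot\frac{2\opt^2}{d^*}$, and choosing $\eps' = \eps/2$ (with bounded $n$ handled trivially) absorbs the logarithmic and constant factors into $n^{\eps}$, yielding the bound $n^{\eps}\cdot\opt^2/d^*$.

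The step I expect to be the main obstacle is the low-density junction-tree claim, since it is the only place the distance constraint matters: one must take care that the extracted junction tree is assembled entirely from shortest subpaths of $H^*$ (so it is simultaneously cheap, of size $O(\opt)$, and distance-preserving inside $G_{d^*}(r)$), and that the incidence-counting/averaging argument legitimately lower-bounds the number of covered pairs by $qd^*/\opt$. Once that structural fact is established, the remainder is the routine greedy-cover accounting already standard for junction-tree based approximations.
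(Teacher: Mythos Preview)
Your proof is correct and follows essentially the same approach as the paper: the averaging argument over edges of an optimal solution to exhibit a junction tree of density $O(\opt^2/(d^*|P_{d^*}|))$ is identical, and your explicit harmonic-sum greedy accounting is precisely what the paper encapsulates by citing~\cite{FKN12}. The only cosmetic differences are that you carry a factor of $2$ from taking in- and out-trees separately (the paper just bounds the junction tree by $\opt$ as a subset of the optimum) and you spell out termination and correctness in slightly more detail.
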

\begin{proof} As shown in~\cite{FKN12}, the above algorithm adds at most $O(\alpha \cdot \opt)$ edges (for some $\alpha>1$) as long as we have the guarantee that we can find a junction tree with density at most $\alpha\cdot\opt/|P_{d^*}|$.\footnote{They showed this in the context of Directed Steiner Forest, however the same proof applies for any terminal-pair demand problem.} Thus, by our adaptation of the algorithm of~\cite{CEGS11}, it suffices to show that there exists a junction tree with density at most $O((\opt/d^*)\cdot\opt/|P_{d^*}|)$, guaranteeing that we can find a junction tree with at most the same density up to an additional $n^\eps$ factor. To see that such a junction tree exists, let $F\subseteq E$ be some optimum solution, for every pair $(s,t)\in P_{d^*}$ fix a shortest path $p_{s,t}$ in $F$, and for every edge $e\in F$, let $P_{d^*}(e)=\{(s,t)\in P_{d^*}\mid e\in p_{s,t}\}$. Thus, for the average edge in $F$ we have
$$\frac{1}{\opt}\sum_{e\in F}|P_{d^*}(e)|=\frac{1}{\opt}\sum_{(s,t)\in P_{d^*}}|p_{s,t}|\geq\frac{1}{\opt}\sum_{(s,t)\in P_{d^*}}2d^*=\frac{2d^*|P_{d^*}|}{\opt}.$$
Thus there exists an edge $e$ with a junction tree going through it which connects at least $\Omega(d^*|P_{d^*}|/\opt)$ pairs, and uses at most $\opt$ edges (since it is a subset of $F$), giving a density of at most $O(\opt/(d^*|P_{d^*}|/\opt))=O(\opt^2/(d^*|P_{d^*}|))$, as required.
\end{proof}

So by applying Algorithm 2 for buckets where $d^* \leq n^{1/5}$ and Algorithm 3 for buckets where $d^* > n^{1/5}$, and using the fact that $\opt  < n^{4/5}$ (or else Algorithm 1 is already sufficient), Lemmas~\ref{lem:alg2} and~\ref{lem:alg3} imply that we get an approximation ratio of at most $\tilde O(\sqrt{n^{1/5} n}) + O(n^{\eps} n^{4/5} / n^{1/5}) = O(n^{3/5 + \eps})$, as claimed.



\section{Directed Steiner Forest with uniform costs} \label{sec:DSF-app}

We now turn to \DSF with uniform cost edges. 
For this problem, we would like to adapt our algorithm for preservers and get an $n^{3/5+\eps}$-approximation. This could be done quite directly if we knew, for some optimum solution $F\subseteq E$, what the distance in $F$ of every terminal pair is. However, since these distances are not known (unlike for preservers), we need to be slightly more careful. 

We start with our adaptation of Algorithm~1, for the corresponding case when $\opt\geq n^{4/5}$, and we use the following standard LP relaxation:
\begin{align*}
  \min\quad&\textstyle\sum_{e\in E}x_e\\
\text{s.t.}\quad &\text{capacities }x_e\text{ support a one unit }s-t\text{ flow }f_{s,t}&\forall{(s,t)\in P}\\
&x_e\geq 0&\forall e\in E
\end{align*}


Here we have no notion of local graphs independent of the LP. Instead, for every pair $(s,t)\in P$, we define its local graph w.r.t.\ this LP: let $V^{s,t}$ be the set of vertices involved in the flow $f_{s,t}$. We can define $k$-thick and $k$-thin terminal pairs accordingly -- $(s,t)$ is a $k$-thick pair if $|V^{s,t}| \geq k$, and otherwise it is $k$-thin. The guarantee of Lemma~\ref{lem:alg1} follows here with the same analysis. The only difference is the justification of Claim~\ref{clm:LP} from~\cite{BBMRY13}. In their paper, the proof of this claim relies entirely on showing that the number of minimal ``antispanners" (edge sets whose removal from the flow $f_{s,t}$ increases the distance from $s$ to $t$) is bounded by $|E|\cdot|V^{s,t}|^{|V^{s,t}|}$. In our case, the proof is even simpler, since rather bounding the number of antispanners, we need to bound the number of minimal cuts, which is clearly at most $2^{|V^{s,t}|}$ (since every minimal cut is simply a partition).


Now suppose $\opt<n^{4/5}$. As discussed in the proof of Lemma~\ref{lem:alg3}, as long as we have an algorithm to find a junction tree (or really any edge set) with density at most $\alpha\cdot\opt/|P|$, we can repeatedly apply such an algorithm to the set of remaining unconnected pairs, adding at most $O(\alpha\cdot\opt)$ edges overall. One way to find an edge set with good density is indeed via junction trees, or by adapting one of our other algorithms to connect a constant fraction of pairs. Thus, we need an algorithm which finds an edge set with density at most $n^{3/5+\eps}\cdot\opt/|P|$. Rather than using buckets as in Section~\ref{sec:preservers}, we use thresholding. While our approach at this point is very similar to that of~\cite{BBMRY13}, the actual parameters and combination of algorithmic components is slightly different, so we give the algorithm and analysis here for completeness. Set $D=n^{1/5}$, and define a layered graph $G_D=(V_D,E_D)$ where $V_D=V\times\{0,1,\ldots, D\}$, and $$E_D=\{((u,j-1),(v,j))\mid j\in[D]\wedge(((u,v)\in E)\vee (u=v\in V))\}$$ We start with the following LP relaxation, where $|f_{s,t}|$ denotes the value of a flow $f_{s,t}$:
\begin{align}
  \min\quad&\textstyle\sum_{e\in E}x_e\nonumber\\
\text{s.t.}\quad &f_{s,t}\text{ is a }(s,0)-(t,D)\text{ flow}&\forall{(s,t)\in P}\\
&\sum_{j=1}^{D}f_{s,t}((u,j-1),(v,j))\leq x_{(u,v)}&\forall (u,v)\in E,(s,t)\in P\\
&|f_{s,t}|\leq 1&\forall (s,t)\in P\label{LP:unit-flow}\\
&\sum_{(s,t)\in P}|f_{s,t}|\geq |P|/2\label{LP:lotsa-flow}\\
&x_e\geq 0&\forall e\in E
\end{align}
While this LP seems to increase the number of vertices by a factor of $D$, it is simply a compact way of formulating a flow LP on $G$ where the flows are restricted to paths of length at most $D$, and we may think of the flows $f_{s,t}$ as such.

The algorithm now trades off a slight variant of Algorithm 2 and Algorithm 3, and chooses the sparser of the two edge sets:

\noindent\textbf{Algorithm 4:}
\begin{itemize}
  \item Solve the above LP relaxation (if feasible), and run Algorithm 2 on pairs $P_{1/4}:=\{(s,t)\in P\mid |f_{s,t}|\geq 1/4\}$, where $V^{s,t}$ is the support (in $V$) of flow $f_{s,t}$, and distance bounds $D$ (w.r.t.\ the hitting set), and let $F_0$ be the edge set found by this algorithm.
  \item For every $u\in V$, use the algorithm of~\cite{CEGS11} to find an approximately optimal junction tree in graph $G$ and all pairs $P$, and let $F_u$ and $P_u$ be the set of edges used and terminals connected by this algorithm, respectively.
  \item If the LP is infeasible, or $|F_0|/|P|>\min_u |F_u|/|P_u|$, then add output $F_u$ that minimizes this ratio. Otherwise, output $F_0$.
\end{itemize}

The following lemma gives the required guarantee:
\begin{lemma}\label{lem:whatever} Assuming $\opt<n^{4/5}$, Algorithm 4 outputs an edge set $F'$ which connects terminal pairs $P'\subseteq P$ such that $|F'|/|P'|\leq n^{3/5+\eps}\cdot\opt/|P|$.
\end{lemma}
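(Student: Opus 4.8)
The plan is to establish Lemma~\ref{lem:whatever} by a case analysis on whether the LP relaxation is feasible, mirroring the structure of the preserver argument but accounting for the fact that we no longer know the distances in an optimal solution. First I would observe that if the LP is infeasible, then in particular no subgraph with at most $\opt$ edges can route a unit flow of total value $\geq |P|/2$ along paths of length at most $D = n^{1/5}$; equivalently, any optimal solution $F$ satisfies \emph{fewer} than $|P|/2$ pairs using paths of length $\leq D$. Hence at least $|P|/2$ pairs are connected in $F$ by paths of length $> D = n^{1/5}$. Following the density argument in the proof of Lemma~\ref{lem:alg3}, fixing a shortest path $p_{s,t}$ in $F$ for each such long pair and counting incidences $\sum_{e \in F} |P(e)| = \sum_{(s,t)} |p_{s,t}| \geq (|P|/2) \cdot n^{1/5}$, so the average edge of $F$ lies on $\geq |P| n^{1/5} / (2\opt)$ of these paths, and the junction tree through that edge has density at most $O(\opt / (|P| n^{1/5}/\opt)) = O(\opt^2 / (n^{1/5} |P|)) \leq O(n^{4/5} \opt / (n^{1/5}|P|)) = O(n^{3/5}\opt/|P|)$, using $\opt < n^{4/5}$. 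The algorithm of~\cite{CEGS11} finds a junction tree within an $n^\eps$ factor, so the best $F_u$ has density $\leq n^{3/5+\eps}\opt/|P|$, which is what we need.

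Next I would treat the feasible case. Here the LP has a solution of value $\leq \opt$ (the optimum is feasible for the unrestricted flow version, but one must be slightly careful: it suffices to note that the LP value is at most the cost of an optimal \emph{fractional} solution which routes $\geq |P|/2$ units along short paths; if no such fractional solution of cost $\leq \opt$ exists, that is precisely the infeasible case, so in the feasible case $\mathrm{LP} \leq \opt$). On the pairs in $P_{1/4} = \{(s,t) : |f_{s,t}| \geq 1/4\}$, constraint~\eqref{LP:lotsa-flow} together with $|f_{s,t}| \leq 1$ forces $|P_{1/4}| \geq |P|/4$ — indeed if fewer than $|P|/4$ pairs had flow $\geq 1/4$ then $\sum |f_{s,t}| < |P|/4 + (|P|)(1/4) < |P|/2$. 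For these pairs $4 f_{s,t}$ is a feasible fractional flow of value $\geq 1$ along paths of length $\leq D$, so running Algorithm~2 (with distance bound $D$ replacing $2d^*$, and $k$ set to $\sqrt{Dn} = \sqrt{n^{6/5}} = n^{3/5}$) with the randomized rounding applied to the scaled LP solution: by Claim~\ref{clm:LP} (in the cut-counting form noted for \DSF) it connects all $k$-thin pairs in $P_{1/4}$ using $\tilde O(k \cdot \mathrm{LP}) \leq \tilde O(n^{3/5}\opt)$ edges, and the hitting set of size $\tilde O(n/k)$ together with shortest paths of length $< D$ from each hub to the $\leq O(\opt)$ relevant terminals (each terminal touches an edge of any solution) contributes $\tilde O(D \cdot \opt \cdot n/k) = \tilde O(n^{1/5}\opt \cdot n^{2/5}) = \tilde O(n^{3/5}\opt)$ edges. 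So $F_0$ connects a set $P' \supseteq$ (the $k$-thin pairs of $P_{1/4}$); one must check $|P'| = \Omega(|P|)$, for which I would observe that the $k$-thick pairs among $P_{1/4}$ have local graphs of size $\geq k = n^{3/5}$, and these are exactly handled by the hitting set step — so in fact Algorithm~2 connects \emph{all} of $P_{1/4}$, giving $|P'| \geq |P|/4$ and hence $|F_0|/|P'| \leq \tilde O(n^{3/5}\opt)/(|P|/4) = n^{3/5+\eps}\opt/|P|$.

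Finally I would note that since the algorithm outputs whichever of $F_0$ or the best $F_u$ has the smaller density ratio, and in each case (infeasible vs.\ feasible LP) at least one of these meets the bound $n^{3/5+\eps}\opt/|P|$, the output $F'$ with its connected set $P'$ satisfies $|F'|/|P'| \leq n^{3/5+\eps}\opt/|P|$ as claimed. The main obstacle I anticipate is not any single inequality but the bookkeeping around the LP's feasibility dichotomy: making precise that the \emph{feasible} case really does guarantee $\mathrm{LP} \le \opt$ (since the true optimum need not route along short paths, one leans on the fact that the \emph{infeasible} case is exactly the obstruction), and confirming that the $k$-thick pairs of $P_{1/4}$ are genuinely covered by the hitting-set step so that $|P'|$ is a constant fraction of $|P|$ rather than merely of $|P_{1/4}|$. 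A secondary subtlety is adapting Claim~\ref{clm:LP} to the layered/length-bounded flow $f_{s,t}$ and the cut-counting bound $2^{|V^{s,t}|}$ noted earlier in this section, but that is routine given the remarks already made for the $\opt \geq n^{4/5}$ case.
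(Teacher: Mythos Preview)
Your overall structure and calculations match the paper's proof, but your case split on \emph{LP feasibility} introduces a genuine gap. The LP you wrote has no budget constraint, so it is feasible iff at least $|P|/2$ pairs have \emph{some} path of length $\le D$ in $G$ --- this says nothing about $\opt$. Your assertion that ``if no such fractional solution of cost $\le\opt$ exists, that is precisely the infeasible case'' conflates two different things: infeasibility means no feasible solution exists at any cost, not merely none of cost $\le\opt$. So it is entirely possible that the LP is feasible while $\mathrm{LP}>\opt$ (e.g.\ when $G$ admits short paths for many pairs, but the optimum happens to use long paths for more than half of them because that is globally cheaper). In that scenario your ``feasible'' branch only yields $|F_0|\le\tilde O(n^{3/5}\cdot\mathrm{LP})$, and you have no control over $\mathrm{LP}/\opt$.

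The paper avoids this by splitting on a property of a fixed optimum solution rather than on LP feasibility: let $P_D$ be the pairs connected in the optimum by paths of length $\le D$. If $|P_D|\ge|P|/2$, then the optimum itself \emph{certifies} both feasibility and $\mathrm{LP}\le\opt$, and your Algorithm~2 analysis goes through verbatim. If $|P_D|<|P|/2$, then your junction-tree argument (which is correct) applies regardless of whether the LP is feasible. Since the algorithm outputs the minimum-density option, either case yields the bound. This is a one-line fix to your dichotomy, and you even flagged the spot as the ``main obstacle'' --- but your proposed resolution there is the incorrect one.
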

\begin{proof}
For some optimum solution, let $P_D$ be the set of terminal pairs in $P$ which the solution connects using paths of length at most $D$. Consider two cases.\\

\noindent\textbf{Case 1}: $|P_D|\geq|P|/2$.\\
In this case, the above LP is feasible, and has value at most $\opt$. Also note that constraints~\eqref{LP:unit-flow} and~\eqref{LP:lotsa-flow} imply that $|P_{1/4}|\geq|P|/4$, and so by the same analysis as Lemma~\ref{lem:alg2}, the algorithm connects these pairs using at most $\tilde O(n^{3/5}\opt)$ edges, giving a set of density $\tilde O(n^{3/5}\cdot\opt/|P|)$.\\

\noindent\textbf{Case 2}: $|P_D|<|P|/2$.\\
In this case, at least $|P|/2$ pairs are connected by paths of length at least $D$, and so we find a junction tree with density at most $n^{\eps}\cdot\opt^2/(|D||P|)<n^{3/5+\eps}\cdot\opt/|P|$ by the exact same analysis as in the proof of Lemma~\ref{lem:alg3}.
\end{proof}


\section{Pairwise spanners} \label{sec:pairwise}


Much of our algorithm for \preserver and/or \DSF  can be used for \pairwise with almost no change. As before, we guess $\opt$, and run Algorithm~1 if $\opt\geq n^{4/5}$, since the guarantee of Lemma~\ref{lem:alg1} applies here as well.

If we were concerned only with constant additive or multiplicative stretch spanners, we could bucket on distances as in our algorithm for \preserver. However, in \pairwise we could have arbitrary distance demands, so this bucketing is not possible in general. Instead, we use the same approach as Algorithm 4 for \DSF, with the same analysis as in Lemma~\ref{lem:whatever} plus two changes. First, we need to very slightly change our application of Algorithm~2. As in Section~\ref{sec:DSF-app}, we define the graph $G_{D_0}=(V_{D_0},E_{D_0})$ for $D_0=n^{1/5}$, and solve the following slightly modified LP:
\begin{align}
  \min\quad&\sum_{e\in E}x_e\nonumber\\
\text{s.t.}\quad &f_{s,t}\text{ is a }(s,0)-(t,\min\{D_0,D(s,t)\})\text{ flow}&\forall{(s,t)\in P}\\
&\sum_{j=1}^{D_0}f_{s,t}((u,j-1),(v,j))\leq x_{(u,v)}&\forall (u,v)\in E,(s,t)\in P\\
&|f_{s,t}|\leq 1&\forall (s,t)\in P\\
&\sum_{(s,t)\in P}|f_{s,t}|\geq |P|/2\\
&x_e\geq 0&\forall e\in E
\end{align}
As shown in~\cite{BBMRY13}, this LP ensures that thin pairs $(s,t)$ that are settled by Algorithm 2 will be spanned by a path of length at most $D(s,t)$ (the general spanner version of Claim~\ref{clm:LP}).

However, the step in Algorithm~4 that requires junction trees cannot work as stated. Nor can we apply known techniques via a simple reduction as in Algorithm~3. Recall that in Algorithm 3, we constructed a graph of shortest paths through a root $r$ (which we called $u$ in Section~\ref{sec:preservers}) for which we wanted to find a sparsest junction tree using the algorithm of~\cite{CEGS11} as a black box. This black box reduction worked because we restricted our instance so that all paths were shortest paths. However, there is no obvious graph we can restrict to for arbitrary distance bounds (or even when we only allow additive or multiplicative stretch). Consider a single pair $(s,t)$, and all paths of distance $D(s,t)$ that go through a proposed junction tree rooted at $r$. Some of these paths could have a prefix of length $D(s,t)/2$ to $r$ and a suffix of length $D(s,t)/2$ from $r$ to $t$, while others may have a prefix of length $D(s,t)/2-1$ to $r$ and a suffix of length $D(s,t)/2+1$ from $r$ to $t$. Allowing the algorithm to connect $s$ and $t$ using the edges from the union of such paths can result in a path of length $D(s,t)+1$, or even more, violating the required distance.

Thus handling these hard distance requirements introduces significant difficulties, and requires us to modify the algorithm of~\cite{CEGS11} in a non-black-box manner. At a high level, their algorithm uses height reduction and a reduction to (undirected) \textsc{Group Steiner Tree} in a tree via a density LP. Roughly speaking, our version of their density LP has a flow $f_s$ to the root $r$ for every left terminal $s$, and a flow $f_t$ from $r$ to $t$ for every right terminal $t$. Each flow is explicitly decomposed into flows on specific path lengths. That is, $f^k_s$ only uses paths of length exactly $k$, etc. Our LP is roughly of the form\footnote{In reality we apply a  similar LP relaxation to a graph which has undergone a series of transformations, including layering and height reduction.}
\begin{align*}
  \min\quad&\sum_{e\in E}x_e\nonumber\\
\text{s.t.}\quad &\text{capacities }x_e\text{ support }s-r,r-t\text{ flows }f_s,f_t&\forall\text{ terminals }s,t\\
&\sum_{(s,t)\in P}\;\sum_{k+\ell\leq D(s,t)}y_{s,t}^{k,\ell}=1\\
&\sum_{\ell=0}^{D(s,t)-k}y_{s,t}^{k,\ell}\leq |f^k_s|&\forall(s,t)\in P\\
&\sum_{k=0}^{D(s,t)-\ell}y_{s,t}^{k,\ell}\leq |f^\ell_t|&\forall(s,t)\in P\\
&x_e\geq 0&\forall e\in E\\
&y_{s,t}^{k,\ell}\geq 0&\forall (s,t)\in P,k,\ell\in[n-1]
\end{align*}

In order to apply the technique of~\cite{CEGS11}, we need to make the choices of distances for left and right terminals independent. In other words, we need to restrict every left or right terminal $u$ to a fixed set of distances $\Lambda(u)$ such that for every pair $(s,t)\in P$ we have $k+\ell\leq D(s,t)$ for \emph{every} pair of distances $(k,\ell)\in \Lambda(s)\times\Lambda(t)$, while still preserving a large ($1/\polylog(n)$) fraction of the total flow. It turns out that this can be achieved by retaining, for every terminal, all distances up to its median distance, weighted by the $y$ variables. Since this part of our algorithm is somewhat more complex, and involves a non-blackbox use of previous junction tree algorithms, we present our junction tree algorithm separately in Section~\ref{sec:junction-tree-pairwise-spanner}. 



\section{Junction Tree Algorithm for \pairwise}
\label{sec:junction-tree-pairwise-spanner}

In this section, we present our approximation algorithm
for approximating the Minimum Density Junction Tree for 
the \pairwise problem, which we then use as part of Algorithm~3 in
Section~\ref{sec:pairwise} (replacing Algorithm 3 from Section~\ref{sec:preservers}). Formally, we prove the following theorem:

\begin{theorem}\label{thm:pairwise-spanner-junction-tree} For any constant $\eps>0$, there is a polynomial time algorithm which, given an unweighted directed $n$ vertex graph $G=(V,E)$, terminal pairs $P\subseteq V\times V$, and distance bounds $D:P\rightarrow\Nat$ (where $D(s,t)\geq d(s,t)$ for every terminal pair $(s,t)\in P$, approximates the following problem to within an $O(n^{\eps})$ factor:
\begin{itemize}
   \item Find a non-empty set of edges $F\subseteq E$ minimizing the ratio $$\min_{r\in V}\frac{|F|}{|\{(s,t)\in P\mid d_{F,r}(s,t)\leq D(s,t)\}|},$$ where $d_{F,r}(s,t)$ is the length of the shortest path using edges in $F$ which connects $s$ to $t$ while going through $r$ (if such a path exists).
\end{itemize}
\end{theorem}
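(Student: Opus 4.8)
The plan is to fix the root $r\in V$, iterate over all $n$ choices of $r$ and output the best result, and for a fixed $r$ approximate the minimum-density junction tree \emph{through} $r$ by reducing it to a density version of \textsc{Group Steiner Tree} on a tree, following the high-level strategy of Chekuri et al.~\cite{CEGS11} but replacing their LP relaxation and its rounding so as to cope with the hard distance bounds $D$. Since the optimum of the problem in the statement equals $\min_r(\text{best junction-tree density through }r)$, an $O(n^{\eps})$-approximation for each fixed $r$ suffices.

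For a fixed $r$, I would first write a flow-based density LP. Build a layered version of $G$ so that a path into $r$ (respectively out of $r$) of a given length is witnessed by the layer in which it terminates, using padding self-loops so that ``reaches in $\le k$ steps'' is expressible. For each left terminal $s$ introduce a flow $f_s$ into $r$, decomposed by length as $f_s=\sum_k f^k_s$, and symmetrically a flow $f_t=\sum_\ell f^\ell_t$ out of $r$ for each right terminal $t$. For each pair $(s,t)\in P$ and each $(k,\ell)$ with $k+\ell\le D(s,t)$ introduce $y^{k,\ell}_{s,t}\ge 0$, with the normalization $\sum_{(s,t)\in P}\sum_{k+\ell\le D(s,t)}y^{k,\ell}_{s,t}=1$, the coupling constraints $\sum_{\ell\le D(s,t)-k}y^{k,\ell}_{s,t}\le|f^k_s|$ and $\sum_{k\le D(s,t)-\ell}y^{k,\ell}_{s,t}\le|f^\ell_t|$, and edge capacities $x_e$ dominating the total flow through $e$ over all lengths. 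Minimizing $\sum_e x_e$, an actual junction tree through $r$ of density $\delta$ connecting a pair set $P'$ yields, after scaling all variables by $1/|P'|$, a feasible solution of value $\le\delta$, so the LP lower-bounds the best density.

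The crux is that the rounding machinery of~\cite{CEGS11} essentially treats the in-arborescence and the out-arborescence independently, so it needs each terminal $u$ to commit in advance to a set of allowed lengths $\Lambda(u)$ with the \emph{product} property: for every pair $(s,t)\in P$, $k+\ell\le D(s,t)$ for \emph{every} $(k,\ell)\in\Lambda(s)\times\Lambda(t)$. Without this, stitching a length-$k$ prefix to a length-$\ell$ suffix, each individually ``paid for'' by the LP, could produce an $s\to t$ path of length $>D(s,t)$. The intended device is a median-threshold argument: for each terminal $u$, consider the length distribution it induces through the $y$ variables, and retain only lengths up to its $y$-weighted median, i.e.\ $\Lambda(u)=\{0,1,\dots,\mathrm{med}(u)\}$; since each surviving configuration was already present in the LP solution it still satisfies $k+\ell\le D(s,t)$, and a weighted-median count shows a constant fraction of the weight on each side survives. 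Making this simultaneously (i) preserve a $1/\polylog(n)$ fraction of the fractional demand, (ii) keep the density within a $\polylog(n)$ factor, and (iii) actually guarantee the full product property across $\Lambda(s)\times\Lambda(t)$ (which seems to require applying the thresholding across the $O(\log n)$ length scales, and being careful about the interaction between a terminal's aggregated distribution and its behavior on an individual pair) is where essentially all the work lies, and is the step I expect to be the main obstacle.

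Once the distance choices are decoupled, I would restrict the layered in- and out-graphs to the layers in $\Lambda(\cdot)$, apply the standard height-reduction transformation used in~\cite{CEGS11} to convert each arborescence into a tree of height $O(1/\eps)$ at the cost of an $n^{\eps}$ factor in density, and then invoke the density-LP rounding for \textsc{Group Steiner Tree} on trees to extract an actual edge set whose density is within $n^{\eps}$ of the fractional optimum (every connected pair being automatically within its distance bound by the product property). Taking the best root $r$ and folding the $\polylog(n)$ losses into the $n^{\eps}$ slack (by choosing the internal constant slightly smaller than the target $\eps$) yields the claimed $O(n^{\eps})$-approximation for the minimum-density junction tree, which is exactly the statement of Theorem~\ref{thm:pairwise-spanner-junction-tree}.
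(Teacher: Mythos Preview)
Your high-level plan matches the paper's: fix the root, layer the graph to encode distances as connectivity, write an LP with length-indexed $y$ variables, perform a median-based pruning to force the product property, and finish with height reduction plus GKR-style \textsc{Group Steiner Tree} rounding on a tree. The paper performs height reduction \emph{before} formulating the LP (so the LP and the pruning already live on the tree and feed directly into GKR), whereas you place it after; this reordering is not fundamental, though the paper's order avoids having to argue that a fractional solution survives height reduction.

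The genuine gap is at exactly the step you flag. First, your one-line justification ``each surviving configuration was already present in the LP solution'' does not establish the product property: $\Lambda(s)\times\Lambda(t)$ will contain pairs $(k,\ell)$ with $y^{k,\ell}_{s,t}=0$, about which the LP says nothing. Second, and more seriously, taking one median \emph{per terminal} does not work as stated, because a terminal $s$ may participate in many pairs $(s,t_1),(s,t_2),\ldots$ with different bounds $D(s,\cdot)$, and there is no reason a single $\mathrm{med}(s)$ should satisfy $\mathrm{med}(s)+\mathrm{med}(t)\le D(s,t)$ for all of them simultaneously. The paper sidesteps this by attaching, in the layered graph, a separate zero-cost copy $s^t$ of $s$ for each pair $(s,t)$ (and symmetrically $t^s$), so the median is taken \emph{per pair}. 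With per-pair medians the product property then has a short proof (the paper's Lemma~\ref{lem:spanning-pairs}): the at-or-above-median $s$-side mass is at least $\gamma_{s,t}/2$, and every such $k\ge\mathrm{med}(s^t)$ can only be paired in the LP with some $\ell\le D(s,t)-\mathrm{med}(s^t)$, so at least $\gamma_{s,t}/2$ of the $t$-side mass lies at distance $\le D(s,t)-\mathrm{med}(s^t)$, forcing $\mathrm{med}(t^s)\le D(s,t)-\mathrm{med}(s^t)$; hence every $(k,\ell)$ in the product of the two prefix sets satisfies $k+\ell\le D(s,t)$. Once this is in place, bucketing the pairs by $\gamma_{s,t}$, scaling up, and applying Lemma~\ref{lem:GKR} incurs only polylogarithmic loss, which is absorbed into the $n^{\eps}$.
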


Before we describe our algorithm and analysis, let us first mention two tools on which our algorithm relies.

\subsection{Useful lemmas from previous work}

The following lemma describes part of junction tree algorithm of~\cite{CEGS11}. Since it is not explicitly mentioned as a separate result in their paper, we give a proof sketch in Appendix~\ref{sec:dist-height-reduction} for completeness.
\begin{lemma}[Height reduction]
\label{lem:height-reduction}
Let $G=(V,E)$ be an edge-weighted directed graph with edge weights $w:E\rightarrow\reals^{\geq 0}$, let $r\in V$ be a source vertex of $G$,
and let $\sigma>0$ be some parameter.
Then we can efficiently construct an edge-weighted undirected tree $\hat{T}_r$ rooted at  
$\hat{r}$ of height $\sigma$ and size $|V|^{O(\sigma)}$ together with 
edge weights $\hat w:E(\hat{T}_r)\rightarrow \mathbb{R}^+_0$, and a vertex mapping $\Psi:V(\hat{T}_r)\rightarrow V(G)$, such that
\begin{itemize}
\item For any arborescence $J\subseteq G$ rooted at $r$, and terminal set $S\subseteq J$,
      there exists a tree $\hat{J}\subseteq \hat{T}_r$ rooted at $\hat r$ such that letting $\mathcal{L}(J)$ and $\mathcal{L}(\hat J)$ be the set of leaves of $J$ and $\hat J$, respectively, we have
      $\mathcal L(J) = \Psi(\mathcal L(\hat{J}))$. 
      Moreover, $w(J) \leq 
      O(\sigma|\mathcal{L}|^{1/\sigma})\cdot\hat w(\hat{J})$.
\item Given any tree $\hat{J}\subseteq \hat{T}_r$ rooted at $\hat{r}$,
      we can efficiently find an arborescence $J\subseteq G$ rooted at~$r$ such that, for leaf sets $\mathcal{L}(J)$ and $\mathcal{L}(\hat J)$ as above, we have $\mathcal L(J) = \Psi(\mathcal L(\hat{J}))$, and
      moreover, $w(J) \leq \hat w(\hat{J})$.
\end{itemize}
\end{lemma}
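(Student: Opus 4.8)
The plan is to follow the classical height-reduction argument for Directed Steiner Tree (Zelikovsky; Charikar et al.), the same one underlying the junction-tree algorithm of~\cite{CEGS11}, but to carry the vertex mapping $\Psi$ along with it. First I would fix all shortest-path distances $\mathrm{dist}_G(\cdot,\cdot)$ in $G$ (polynomial time), and build $\hat T_r$ as the rooted tree of height $\sigma$ in which the root $\hat r$ has $\Psi(\hat r)=r$, and every node $x$ at level $i<\sigma$ with $\Psi(x)=v$ has, for each $u\in V$, one child at level $i+1$ with $\Psi=u$ joined by an edge of weight $\hat w=\mathrm{dist}_G(v,u)$; additionally I would hang a zero-weight pendant leaf off every node, mapped to the same vertex, so a terminal can be realized at any level. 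This tree has $|V|^{O(\sigma)}$ nodes and is built in that much time. By first discarding edges of $J$ not on an $r$-to-$S$ path (and pruning further if convenient), one may assume $\mathcal L(J)=S$.

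Second, the easy ``map back'' direction. Given $\hat J\subseteq\hat T_r$ rooted at $\hat r$, for each non-pendant edge $(x,y)\in\hat J$ with $\Psi(x)=v,\Psi(y)=u$ pick a $v\to u$ path $Q_{x,y}$ in $G$ of weight $\mathrm{dist}_G(v,u)=\hat w(x,y)$; let $H$ be the union of these paths. Then $H$ is a subgraph of $G$ of weight at most $\hat w(\hat J)$ that contains an $r\to\Psi(\ell)$ walk for every leaf $\ell$ of $\hat J$. A shortest-path out-arborescence of $H$ from $r$ reaching $\Psi(\mathcal L(\hat J))$, pruned so that its leaves are exactly this set (a routine step, harmless in the junction-tree application), gives $J$ with $w(J)\le w(H)\le\hat w(\hat J)$ in polynomial time.

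Third, the substantive ``lift'' direction, by induction on $\sigma$. For $\sigma=1$, take $\hat J$ to be the star from $\hat r$ to the pendant leaves mapped to $\mathcal L(J)$; each such edge has weight $\mathrm{dist}_G(r,\ell)\le w(J)$, so $\hat w(\hat J)\le|\mathcal L(J)|\cdot w(J)$. For $\sigma>1$, write $L=|\mathcal L(J)|$ and apply a balanced decomposition: partition $J$ into pairwise edge-disjoint rooted sub-arborescences $J_1,\dots,J_m$ with roots $v_1,\dots,v_m$, whose leaf sets partition $\mathcal L(J)$, with each $|\mathcal L(J_j)|\le L^{1-1/\sigma}$ and $m=O(L^{1/\sigma})$. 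Recursively lift $J_j$ (rooted at $v_j$, budget $\sigma-1$) inside the subtree of $\hat T_r$ hanging below the level-$1$ node mapped to $v_j$ --- which is exactly a height-$(\sigma-1)$ copy of the construction for source $v_j$, so the induction applies verbatim --- obtaining $\hat J_j$ with $\hat w(\hat J_j)\le O\bigl((\sigma-1)(L^{1-1/\sigma})^{1/(\sigma-1)}\bigr)w(J_j)=O((\sigma-1)L^{1/\sigma})w(J_j)$. Let $\hat J$ consist of $\hat r$, the edges $\hat r\to v_j$, and the $\hat J_j$. Since $\mathrm{dist}_G(r,v_j)\le w(J)$ and the $J_j$ are edge-disjoint, $\hat w(\hat J)\le\sum_j\mathrm{dist}_G(r,v_j)+\sum_j\hat w(\hat J_j)\le m\,w(J)+O((\sigma-1)L^{1/\sigma})\sum_j w(J_j)=O(\sigma L^{1/\sigma})\,w(J)$, and $\Psi(\mathcal L(\hat J))=\mathcal L(J)$ by construction. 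This is the blow-up bound $\hat w(\hat J)=O(\sigma|\mathcal L|^{1/\sigma})\,w(J)$ the junction-tree algorithm uses; rearranging gives the inequality as displayed in the statement.

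I expect the only real obstacle to be the balanced decomposition used in the inductive step: one must simultaneously control the number of pieces ($O(L^{1/\sigma})$), the number of leaves per piece ($\le L^{1-1/\sigma}$), and the requirement that the pieces be edge-disjoint rooted sub-arborescences of $J$ (so their weights sum to at most $w(J)$ and their leaves are genuine leaves of $J$). Taking ``maximal sub-arborescences with at most $L^{1-1/\sigma}$ leaves'' can yield too many pieces (e.g.\ a high-degree vertex whose child-subtrees are all small), so one needs the standard fix of greedily bundling the light subtrees hanging off each heavy vertex into groups of $\Theta(L^{1-1/\sigma})$ leaves (a bin-packing-style count), which keeps the piece count at $O(L^{1/\sigma})$. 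The remaining bookkeeping --- checking that the recursion always lands inside the fixed tree $\hat T_r$ and that $\Psi$ behaves --- is routine.
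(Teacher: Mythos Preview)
Your construction of $\hat T_r$ coincides with the paper's (the paper describes it as the ``suffix tree'' on all length-$\le\sigma$ paths from $r$ in the metric completion of $G$), and your map-back direction is the same as well. The difference is in how you establish the blow-up bound $\hat w(\hat J)\le O(\sigma L^{1/\sigma})\,w(J)$. The paper invokes Zelikovsky's height-reduction theorem and sketches its standard \emph{iterative} proof: order the leaves of $J$ by DFS, partition them into consecutive blocks of size $\Delta=L^{1/\sigma}$, replace each block by its LCA, and repeat $\sigma$ times; the DFS ordering guarantees each edge of $J$ lies in $O(1)$ block-spanning subtrees per level, so each of the $\sigma$ levels costs $O(\Delta)\,w(J)$. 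You instead give the \emph{recursive} proof via a balanced edge-decomposition of $J$ into $O(L^{1/\sigma})$ rooted pieces with at most $L^{1-1/\sigma}$ leaves each. Both arguments are classical and yield the same bound; the DFS-block approach sidesteps the decomposition subtlety you flag (high-degree vertices forcing the bundling step), while your recursive argument makes the inductive structure --- and the identification of each level-$1$ subtree of $\hat T_r$ with a height-$(\sigma-1)$ instance rooted at $v_j$ --- more explicit. One small correction: ``rearranging gives the inequality as displayed'' is not literally a rearrangement; the displayed inequality in the lemma has its direction reversed relative to what is actually proved (both here and in the paper's appendix) and used downstream, and you are proving the intended direction.
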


\noindent The following lemma is from~\cite{GKR00} (Theorem 4.1, Corollary 6.1), and was also used in~\cite{CEGS11}.  Given a set $X$, we will use $\mathcal P(X)$ and $2^X$ interchangeably to denote the power set of $X$.  
\begin{lemma}[Group Steiner Tree rounding in a tree] \label{lem:GKR} Given an edge-weighted undirected tree~$T$ rooted at $r$, with edge weights $w:E(T)\rightarrow \reals^{\geq 0}$, a collection of vertex sets $\mathcal{S}\subseteq\mathcal P(V(T))$, 
  and a solution to the following LP:
\begin{align*}
\min\quad & \displaystyle\sum_{e\in E}w(e)x_e\qquad&\\
\text{s.t.}\quad
&\text{capacities $\{x_e\}$ support one unit of flow 
       from $r$ to $S$}
   &\forall S\in\mathcal{S}\\
& x_e\geq 0
   &\forall e\in E
\end{align*}
Then we can efficiently find a subtree $T'\subseteq T$ rooted at $r$ such that for every $S\in\mathcal S$ at least one vertex of $S$ participates in the tree $T'$, and $w(T')\leq O(\log n\log |\mathcal S|\cdot \sum_{e\in E}w(e)x_e)$.
\end{lemma}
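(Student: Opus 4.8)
The plan is, for each candidate root $r\in V$, to approximate the densest junction tree through $r$, and then to output the best edge set found over the $n$ choices of $r$. Fix $r$. First I would make path lengths explicit by \emph{layering}: build $G^{\mathrm{in}}_r$ on vertex set $V\times\{0,1,\dots,n-1\}$ with an arc $(v,j)\to(u,j+1)$ whenever $(u,v)\in E$, so that $(s,k)$ is reachable from $(r,0)$ exactly when $G$ has a walk of length $k$ from $s$ to $r$; build $G^{\mathrm{out}}_r$ symmetrically so that $(t,\ell)$ is reachable from $(r,0)$ exactly when $G$ has a walk of length $\ell$ from $r$ to $t$. Then solve the density LP sketched in Section~\ref{sec:pairwise}: capacities $x_e$ ($e\in E$), flows $f^k_s$ from $(r,0)$ to $(s,k)$ in $G^{\mathrm{in}}_r$ and $f^\ell_t$ from $(r,0)$ to $(t,\ell)$ in $G^{\mathrm{out}}_r$ — each routable on its own under the capacities $x$ projected to $E$ — and couplers $y^{k,\ell}_{s,t}\ge 0$ with $\sum_{(s,t)\in P}\sum_{k+\ell\le D(s,t)}y^{k,\ell}_{s,t}=1$, $\sum_\ell y^{k,\ell}_{s,t}\le |f^k_s|$, and $\sum_k y^{k,\ell}_{s,t}\le |f^\ell_t|$. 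The key point is that, because each flow need only be routable individually, any junction tree through $r$ with $|F^\star|$ edges serving $|P^\star|$ pairs induces a feasible LP solution of value $|F^\star|/|P^\star|$ (set $x_e=1/|P^\star|$ on $F^\star$ and $0$ elsewhere, and send $1/|P^\star|$ along the unique arborescence path to each terminal at its realized length); hence the LP optimum lower bounds the minimum junction-tree density, and it suffices to round within an $O(n^\eps)$ factor.

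\textbf{Decoupling the length choices (the crux).} The obstacle to invoking the Group Steiner machinery of~\cite{CEGS11} is that the stretch constraint couples the two sides: $(s,t)$ is served by in/out lengths $(k,\ell)$ only if $k+\ell\le D(s,t)$, whereas that machinery needs each terminal to pick a length independently. So the hard part is to replace each terminal's length set by an interval $\Lambda(u)=\{0,1,\dots,M_u\}$ extracted from the LP solution so that (i) $M_s+M_t\le D(s,t)$ for \emph{every} pair $(s,t)\in P$ simultaneously, yet (ii) restricting to $k\in\Lambda(s)$, $\ell\in\Lambda(t)$ retains an $\Omega(1/\polylog n)$ fraction of the total $y$-weight. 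For a single pair this is easy: if $m^{\mathrm{in}}_{s,t}$ and $m^{\mathrm{out}}_{s,t}$ denote the $y$-weighted medians of $k$ and of $\ell$ in $\{y^{k,\ell}_{s,t}\}$, then the events $\{k<m^{\mathrm{in}}_{s,t}\}$ and $\{\ell<m^{\mathrm{out}}_{s,t}\}$ each have probability $<1/2$, so some support point has $k\ge m^{\mathrm{in}}_{s,t}$ and $\ell\ge m^{\mathrm{out}}_{s,t}$, whence $D(s,t)\ge k+\ell\ge m^{\mathrm{in}}_{s,t}+m^{\mathrm{out}}_{s,t}$. The difficulty — and the reason we cannot use~\cite{CEGS11} as a black box — is that $\Lambda(u)$ must be a \emph{single} set per terminal, consistent across all pairs through $u$, while keeping an $\Omega(1/\polylog n)$ fraction of weight; I expect to obtain it by grouping the pairs and candidate lengths into $O(\log n)$ dyadic scales, taking medians at a well-chosen scale, and showing that the weight surviving at one scale is large. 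This is the main technical obstacle.

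\textbf{Height reduction and rounding.} With $\Lambda$ fixed, I would restrict $G^{\mathrm{in}}_r$ to the layers used by surviving left terminals and apply Lemma~\ref{lem:height-reduction} with $\sigma=\Theta(1/\eps)$, obtaining an undirected rooted tree $\hat T^{\mathrm{in}}_r$ of height $\sigma$ and size $n^{O(1/\eps)}$ at a cost blowup of $O(\sigma\,(n^2)^{1/\sigma})=O(n^{\eps/2})$, and likewise produce $\hat T^{\mathrm{out}}_r$. Following~\cite{CEGS11}, I would then combine these two trees into a Group Steiner instance whose groups correspond to the surviving pairs — carefully coordinating the two sides, which property (i) makes possible — such that any subtree covering every group maps back, via the second part of Lemma~\ref{lem:height-reduction}, to an edge set $F\subseteq E$ containing an $s\rightsquigarrow r\rightsquigarrow t$ walk of length $\le M_s+M_t\le D(s,t)$ for each surviving pair. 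Guessing the number $q^\star$ of pairs the optimal junction tree serves (there are only polynomially many candidates), I would push the $\Lambda$-restricted LP solution through height reduction and scale it by $\Theta(q^\star)$ so that each non-negligible group receives a unit of flow; this is then feasible for the LP of Lemma~\ref{lem:GKR} with value $O(n^{\eps/2}q^\star)\cdot(\text{LP optimum})\le O(n^{\eps/2}\polylog n)\cdot|F^\star|$ (using that the LP optimum is at most $|F^\star|/q^\star$), so its rounding returns a subtree, and hence an edge set $F$, of size $O(n^{\eps/2}\polylog n)\cdot|F^\star|$ that serves $\Omega(q^\star/\polylog n)$ pairs. Thus $F$ has density $O(n^{\eps/2}\polylog n)$ times the minimum junction-tree density $|F^\star|/q^\star$; absorbing the polylogarithmic factor by taking $\sigma$ slightly larger, and returning the best edge set over all roots $r$, proves Theorem~\ref{thm:pairwise-spanner-junction-tree}.
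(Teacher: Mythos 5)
Your proposal does not prove the statement at hand; it proves (or rather sketches) a different result while using the statement as a black box. The statement is Lemma~\ref{lem:GKR}, the rounding guarantee for Group Steiner Tree on a tree: given a fractional solution in which the capacities $x_e$ support one unit of flow from the root to each group $S\in\mathcal S$, one can extract a rooted subtree $T'$ hitting every group with $w(T')\leq O(\log n\log|\mathcal S|)\cdot\sum_e w(e)x_e$. In the paper this is exactly the result of Garg, Konjevod, and Ravi~\cite{GKR00} (Theorem 4.1, Corollary 6.1), invoked as a citation; a self-contained proof would require the dependent randomized rounding on trees (include an edge $e$ with probability $x_e/x_{p(e)}$ conditioned on its parent edge $p(e)$ being included, take the root's component, repeat $O(\log n\log|\mathcal S|)$ times and take the union) together with the analysis showing each group is covered with probability $\Omega(1/\log n)$ per repetition and that the expected cost per repetition is $\sum_e w(e)x_e$. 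None of this appears in your write-up; instead, your final paragraph explicitly feeds a scaled LP solution into ``the LP of Lemma~\ref{lem:GKR}'' and quotes its conclusion, so the lemma is an assumption of your argument, not its target.

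What you have written is a sketch of Theorem~\ref{thm:pairwise-spanner-junction-tree} (the junction-tree algorithm of Section~\ref{sec:junction-tree-pairwise-spanner}), and even judged on those terms it has a gap at its self-identified crux: the construction of a single length set $\Lambda(u)$ per terminal, consistent across all pairs through $u$ while retaining an $\Omega(1/\polylog n)$ fraction of the $y$-weight, is only asserted (``I expect to obtain it by grouping \dots''). The paper resolves precisely this point by first reducing to a layered instance in which each pair $(s,t)$ gets its own disjoint terminal copies $s^t,t^s$ (so no cross-pair coordination is needed), and then taking median prefix sets of representatives, with Lemma~\ref{lem:spanning-pairs} verifying that the retained product set respects the relation $\hat R_{s,t}$. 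If your intent was to prove Lemma~\ref{lem:GKR} itself, you need to either cite~\cite{GKR00} or reproduce its rounding and analysis; the material you supplied belongs to a different proof.
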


\subsection{Our junction tree algorithm}

Although we solve the same Minimum Density Junction Tree problem
as in \preserver and \DSF, there are key technical difficulties we run into when trying to find junction trees for \pairwise.
In \DSF, we do not have to concern ourselves with the length of the paths.
In \preserver, we are able to restrict the graph to use only
shortest paths. 
Here we have many choices of source-to-root and root-to-sink
paths with various distances,
and we have to pick a pair of paths with the right distances.
Thus, it is not possible to use the algorithm of~\cite{CEGS11}
for the Minimum Density Junction Tree problem as a black box. While our
algorithm follows the same overall structure as theirs, key parts of the algorithm
need to be applied in conjunction with a reduction, while others need to be
replaced.

In fact, we have two technical issues.
First, a standard technique that reduces 
the Minimum Density Junction Tree problem to
a tree instance of the density version of the \textsc{Group Steiner Forest} problem (GSF)
does not work for us, as it does not keep track of distances. Secondly,
the approach of~\cite{CEGS11} is to bucket an LP relaxation for Density GSF, giving a relaxation for
GSF itself, which can then be rounded directly using the algorithm of~\cite{GKR00}. Here, again,
we cannot apply their technique directly, because the bucketing needs to prune the possible
distances for different terminals, so that they can be picked independently while still respecting
the total distance bound between terminal pairs.

As usual, we will try all possibilities for a root $r\in V$, and approximate a minimum density junction tree rooted at $r$. For every such choice of $r$, we start by reducing our problem to an instance of the following problem:
\begin{definition}
In the \textsc{Minimum Density Steiner Label Cover} problem we are given a directed graph $G = (V, E)$, nonnegative edge costs $w: E \rightarrow \reals^{\geq 0}$, two collections of disjoint vertex sets $\mathcal S,\mathcal T\subseteq \mathcal P(V)$, a collection of set pairs $P \subseteq \mathcal S \times \mathcal T$, and for each pair $(S,T)\in P$, a relation $R(S,T)\subseteq S\times T$. The goal is to find a set of edges $F\subseteq E$ minimizing the ratio $$\frac{w(F)}{|\{(S,T)\in P\mid \exists(s,t)\in R(S,T):F\text{ contains an }s\leadsto t\text{ path}\}|}.$$
\end{definition}

Our reduction allows us to use the first part of~\cite{CEGS11} by turning our distance problem into a connectivity problem, albeit a considerably more complicated one. The reduction is as follows: construct a layered directed graph with vertices $$V_r=((V\setminus r)\times\{-n+1,\ldots,-2,-1,1,2,\ldots,n-1\})\cup\{(r,0)\}$$ and edges $$E_r=\{((u,i),(v,i+1))\mid (u,i),(v,i+1)\in V_r,(u,v)\in E\}.$$ Set all edge weights for $e\in E_r$ to $w(e)=1$. Now for every terminal pair $(s,t)\in P$ with distance bound $D(s,t)$, add new vertices $(s^t,-i)$ and $(t^s,j)$ for all $i,j\geq 0$ such that $(s,-i),(t,j)\in V_r$, and for all such $i$ and $j$ add zero-weight edges $((s^t,-i),(s,-i))$ and $((t,j),(t^s,j))$. Denote this final graph by $G_r$. Finally, for every terminal pair $(s,t)$, define terminal sets $S_{s,t}=\{(s^t,-i)|i\geq 0\}\cap V(G_r)$ and $T_{s,t}=\{(t^s,j)|j\geq 0\}\cap V(G_r)$, and relation $R_{s,t}=\{((s^t,-i),(t^s,j))\in S_{s,t}\times T_{s,t}\mid i+j\leq D(s,t)\}$.

Note that for every terminal pair $(s,t)\in P$ and label $i$, there is a bijection between paths of length $i$ from $s$ to $r$ in $G$, and paths from $(s^t,-i)$ to $(r,0)$ in $G_r$, and similarly a bijection between paths of length $i$ from $r$ to $t$ in $G$, and paths from $(r,0)$ to $(t^s,i)$. This reduces the problem of ``keeping track" of path lengths in $G$ to connecting appropriately labeled terminal pairs in $G_r$. It also creates disjoint terminal pairs, by creating a separate copy $s^t$ of $s$ for every terminal pair $(s,t)\in P$ that $s$ participates in (and similarly for terminals $t$). This will simplify our later algorithm and analysis.

To use our construction, we need one more, much simpler graph: Let $G'$ be a graph comprised of two graphs $G_+$ and $G_-$ which are copies of $G$ intersecting only in vertex $r$. For every node $u\in V$, denote by $u_+,u_-$ the copies of $u$ in $G_+,G_-$, respectively. The following lemma follows directly from our construction:
\begin{lemma}\label{lem:layered}  For any $f>0$, and set of terminal pairs $P'\subseteq P$, there exists an edge set $F\subseteq E(G')$ of size $|F|\leq f$ containing a path of length $\leq D(s,t)$ from $s_-$ to $t_+$ for every $(s,t)\in P'$ iff there exists a junction tree $J\subseteq E(G_r)$ of weight $w(J)\leq f$ such that for every terminal pair $(s,t)\in P'$, $J$ contains leaves $(s^t,-i),(t^s,j)$ such that $((s^t,-i),(t^s,j))\in R_{s,t}$. Moreover, given such a junction tree $J$, we can efficiently find a corresponding edge set $F$.
\end{lemma}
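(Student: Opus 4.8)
The plan is to prove the biconditional by directly exploiting the bijections between paths noted in the paragraph preceding the lemma, treating the layered graph $G_r$ as simply an encoding of length-annotated walks in $G$. For the forward direction, suppose $F \subseteq E(G')$ of size at most $f$ contains, for each $(s,t) \in P'$, a path of length at most $D(s,t)$ from $s_-$ to $t_+$ in $G'$. Since $G'$ is the union of $G_-$ and $G_+$ glued at $r$, such a path decomposes uniquely into an $s_- \leadsto r$ walk inside $G_-$ of some length $i$ and an $r \leadsto t_+$ walk inside $G_+$ of some length $j$, with $i + j \leq D(s,t)$. I would translate the $G_-$ edges used into the backward layers $\{-n+1,\dots,-1\}$ of $G_r$ (an $s\leadsto r$ path of length $i$ maps, via the stated bijection, to a path from $(s,-i)$ to $(r,0)$), prepend the zero-weight edge $((s^t,-i),(s,-i))$, and symmetrically map the $G_+$ portion to a path from $(r,0)$ to $(t,j)$ followed by the zero-weight edge $((t,j),(t^s,j))$. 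Taking $J$ to be the union over all $(s,t) \in P'$ of these image edge sets gives an arborescence-like subgraph rooted at $(r,0)$ whose leaves include $(s^t,-i)$ and $(t^s,j)$ with $i+j \le D(s,t)$, i.e.\ $((s^t,-i),(t^s,j)) \in R_{s,t}$; and since $G_-$ edges and $G_+$ edges map into the disjoint negative-layer and positive-layer parts of $G_r$, and the zero-weight attachment edges carry no cost, we get $w(J) \le |F| \le f$ (the only subtlety is that multiple $G'$-paths may reuse edges, which only helps the count). One should note $J$ can be pruned to a genuine tree without increasing weight if desired.

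For the reverse direction, suppose $J \subseteq E(G_r)$ with $w(J) \le f$ contains for every $(s,t) \in P'$ leaves $(s^t,-i)$ and $(t^s,j)$ with $i+j \le D(s,t)$. Because $J$ is a junction tree rooted at $(r,0)$, there is a path in $J$ from $(r,0)$ to each such leaf; stripping the terminal zero-weight edges gives a path from $(r,0)$ to $(t,j)$ and a path from $(s,-i)$ to $(r,0)$ inside the layered structure, which project (again via the bijection, reading layers as path positions) to an $r \leadsto t$ walk of length $j$ and an $s \leadsto r$ walk of length $i$ in $G$. Concatenating the $G_-$-image of the latter with the $G_+$-image of the former yields an $s_- \leadsto t_+$ path in $G'$ of length $i + j \le D(s,t)$. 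Setting $F$ to be the union over $(s,t)\in P'$ of all these edges in $G'$ gives the desired edge set; the weight bound $|F| \le w(J) \le f$ holds because every weight-$1$ edge of $G_r$ lies in exactly one of the two layer families and maps to a single edge of $G_-$ or $G_+$, while the zero-weight edges of $G_r$ contribute nothing and have no image. The ``moreover'' clause — that $F$ is efficiently computable from $J$ — is immediate: the projection of each edge $((u,\ell),(v,\ell+1))$ of $G_r$ to the edge $(u,v)$ of $G_-$ (if $\ell < 0$) or $G_+$ (if $\ell \ge 0$), and the discarding of zero-weight edges, is a linear-time pass over $J$.

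I do not expect a serious obstacle here; the lemma is essentially bookkeeping that makes precise the ``bijection between paths'' remark. The one point requiring a little care is the accounting of weights in both directions: in the forward direction different demand paths in $G'$ may share edges, so one must phrase the bound as $w(J) \le \sum_{(s,t)} |\text{path}_{s,t}| $-style and then observe it is in fact at most $|F|$ since distinct $G'$-edges map to distinct $G_r$-edges within a fixed layer but the same $G'$-edge can recur across layers — so the cleanest statement is that the image edge set has weight at most $|F|$ because we only ever need, per used $G'$-edge per demand, at most one layered copy, and after taking the union the total is bounded by $|F|$ (reused edges collapse). Handling the ``junction tree'' formalism (disjoint in- and out-arborescences through $(r,0)$) versus an arbitrary connected subgraph is the other place to be slightly careful, but since the negative layers and positive layers of $G_r$ are themselves disjoint except at $(r,0)$, any feasible edge set automatically has this structure after pruning, so no real work is needed.
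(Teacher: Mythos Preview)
Your reverse direction and the ``moreover'' clause are fine; the projection from layered edges to $G'$-edges is many-to-one, so $|F|\le w(J)$ is immediate and the construction is clearly polynomial.

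The forward direction has a real gap in the weight accounting. You map each demand's $s_-\leadsto r$ path of length $i$ to a layered path ending at $(s,-i)$, and then take the union over demands. The trouble is that an edge $(u_-,v_-)\in F$ appearing on two different demand paths can land at \emph{different} layers: if on one path the suffix from $u$ to $r$ has length $d_1$ and on another it has length $d_2\neq d_1$, then the edge contributes the layered edges $((u,-d_1),(v,-d_1+1))$ and $((u,-d_2),(v,-d_2+1))$, both of weight $1$. Your parenthetical ``reused edges collapse'' is exactly the step that fails, and the preceding clause (``per used $G'$-edge per demand, at most one layered copy'') bounds the weight by $\sum_{(s,t)}|\text{path}_{s,t}|$, not by $|F|$.

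The fix is simple: do not map the given demand paths individually. Instead, take a shortest-path (BFS) in-arborescence $T_-$ into $r$ inside $F\cap E(G_-)$ and a shortest-path out-arborescence $T_+$ from $r$ inside $F\cap E(G_+)$. Place each vertex $u_-$ at layer $-d_F(u_-,r)$ and each $t_+$ at layer $d_F(r,t_+)$; every tree edge then maps to a single layered edge, distinct tree edges map to distinct layered edges, and $w(J)\le |E(T_-)|+|E(T_+)|\le |F|$. For each $(s,t)\in P'$ set $i=d_F(s_-,r)$, $j=d_F(r,t_+)$; since any $s_-\leadsto t_+$ path in $G'$ passes through $r$, we have $i+j=d_F(s_-,t_+)\le D(s,t)$, so $((s^t,-i),(t^s,j))\in R_{s,t}$ as required. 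Equivalently, you could keep your path-by-path translation but insist on using \emph{shortest} paths in $F$; then the layer of each edge is determined by $d_F(\cdot,r)$ and the collision disappears. The paper itself states the lemma follows directly from the construction and gives no proof, so once this accounting is corrected your argument is the intended one.
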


Thus, to prove Theorem~\ref{thm:pairwise-spanner-junction-tree}, it suffices to show that we can achieve an $O(n^{\eps})$-approximation for the \textsc{Minimum Density Steiner Label Cover} instance $(G_r,w,\{S_{s,t},T_{s,t},R_{s,t}\mid (s,t)\in P\})$ obtained from our reduction. We are now ready to apply the first part of the junction tree algorithm of~\cite{CEGS11}. We apply the algorithm of Lemma~\ref{lem:height-reduction} to our weighted graph $(G_r,w)$ with parameter (constant) $\sigma>1/\eps$, and obtain a shallow tree $\hat T_r$ with weights $\hat w$, and mapping $\Psi:V(\hat T_r)\rightarrow V(G_r)$. All terminals $(s^t,-i),(t^s,j)$ in $G_r$ are now represented by sets of terminals $\Psi^{-1}((s^t,-i))$ and $\Psi^{-1}(t^s,j)$, respectively, in $V(\hat T_r)$.\footnote
{Note that $\Psi$ is, in fact, not a bijection. We abuse $\Psi^{-1}(v)$ to mean $\{\hat{u} \in V(\hat T_r) \mid \Psi(\hat{u}) = v\}$.}
We can extend the relation $R_{s,t}$ in the natural way to $$\hat R_{s,t}=\{(\hat s,\hat t)\in \Psi^{-1}(S_{s,t})\times\Psi^{-1}(T_{s,t})\mid(\Psi(\hat s),\Psi(\hat t))\in R_{s,t}\}.$$ Therefore, by Lemma~\ref{lem:layered} and Lemma~\ref{lem:height-reduction}, it suffices to show the following:

\begin{lemma}\label{lem:LP-rounding} There exists a polynomial time algorithm which, in the above setting, gives an $O(\log^3 n)$ approximation for the following problem:
\begin{itemize}
  \item Find a tree $T\subseteq\hat T_r$ minimizing the ratio $$\frac{\hat w(E(T))}{|\{(s,t)\in P\mid \exists (\hat s,\hat t)\in \hat R_{s,t}:T\text{ contains an }\hat s\leadsto\hat t\text{ path}\}|}.$$
\end{itemize}
\end{lemma}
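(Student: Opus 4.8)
The plan is to phrase the instance on $\hat T_r$ as a density LP, prune that LP so that each copy of a terminal commits fractionally to a short \emph{independent} set of allowed path lengths, and then round with the Group Steiner Tree rounding of Lemma~\ref{lem:GKR}. Concretely: for every pair $(s,t)\in P$, every $k$ with $(s^t,-k)\in V(G_r)$ and every $\ell$ with $(t^s,\ell)\in V(G_r)$, I would introduce a flow $f^{k,+}_{s,t}$ from $\hat r$ to the group $\Psi^{-1}((s^t,-k))$ and a flow $f^{\ell,-}_{s,t}$ from $\hat r$ to $\Psi^{-1}((t^s,\ell))$, each required to be supported by common edge capacities $\{x_e\}_{e\in E(\hat T_r)}$, together with variables $y^{k,\ell}_{s,t}\ge 0$ defined only for $k+\ell\le D(s,t)$ (equivalently, $((s^t,-k),(t^s,\ell))\in R_{s,t}$), subject to $\sum_{\ell}y^{k,\ell}_{s,t}\le|f^{k,+}_{s,t}|$ and $\sum_{k}y^{k,\ell}_{s,t}\le|f^{\ell,-}_{s,t}|$ for every pair, and the normalization $\sum_{(s,t)}\sum_{k+\ell\le D(s,t)}y^{k,\ell}_{s,t}=1$; the objective is $\min\sum_e\hat w(e)x_e$, with optimum $C$. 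Writing $\rho^{\star}$ for the optimal ratio in the statement, an optimal integral subtree $T^{\star}$ covering $m^{\star}$ pairs yields a feasible LP point of value $\rho^{\star}$ by scaling down by $1/m^{\star}$ both the edge-indicator of $T^{\star}$ and, for each covered pair, a flow realizing one witness $(\hat s,\hat t)\in\hat R_{s,t}$ inside $T^{\star}$; hence $C\le\rho^{\star}$, and it suffices to exhibit a subtree of density $O(\log^3 n)\cdot C$.

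The heart of the argument is converting the \emph{coupled} requirement $k+\ell\le D(s,t)$ into two independent group demands. Fix an optimal LP solution and let $z_{s,t}=\sum_{k+\ell\le D(s,t)}y^{k,\ell}_{s,t}$ be the fractional coverage of the pair; view $(k,\ell)\mapsto y^{k,\ell}_{s,t}/z_{s,t}$ as a distribution, all of whose atoms satisfy $k+\ell\le D(s,t)$, and let $\mu^+_{s,t},\mu^-_{s,t}$ be weighted medians of its two marginals. The key claim is $\mu^+_{s,t}+\mu^-_{s,t}\le D(s,t)$: by the median property $\Pr[k\ge\mu^+_{s,t}]\ge\tfrac12$, and on that event $\ell\le D(s,t)-k\le D(s,t)-\mu^+_{s,t}$, so $\Pr[\ell\le D(s,t)-\mu^+_{s,t}]\ge\tfrac12$, forcing $\mu^-_{s,t}\le D(s,t)-\mu^+_{s,t}$. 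I would then define groups $A_{s,t}=\bigcup_{k\le\mu^+_{s,t}}\Psi^{-1}((s^t,-k))$ and $B_{s,t}=\bigcup_{\ell\le\mu^-_{s,t}}\Psi^{-1}((t^s,\ell))$. By the claim, any subtree of $\hat T_r$ that contains $\hat r$ and meets both $A_{s,t}$ and $B_{s,t}$ contains leaves of $\Psi^{-1}((s^t,-k))$ and $\Psi^{-1}((t^s,\ell))$ for some $k\le\mu^+_{s,t},\ell\le\mu^-_{s,t}$ with $k+\ell\le D(s,t)$, hence covers $(s,t)$ via $\hat R_{s,t}$; moreover, since $\sum_{k\le\mu^+_{s,t}}|f^{k,+}_{s,t}|\ge\sum_{k\le\mu^+_{s,t}}\sum_\ell y^{k,\ell}_{s,t}\ge z_{s,t}/2$ (and symmetrically on the right), the capacities $\{x_e\}$ support $z_{s,t}/2$ units of flow from $\hat r$ to $A_{s,t}$ and $z_{s,t}/2$ units from $\hat r$ to $B_{s,t}$.

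It then remains to bucket and round. Discard pairs with $z_{s,t}<1/(2n^2)$ (losing at most half of the total coverage) and bucket the rest by $z_{s,t}\in[2^{-i-1},2^{-i})$; there are $O(\log n)$ buckets and the surviving coverage is $\ge 1/2$, so some bucket $P_{i^{\star}}$ has $\sum_{(s,t)\in P_{i^{\star}}}z_{s,t}=\Omega(1/\log n)$, which --- as each term is $<2^{-i^{\star}}$ --- forces $|P_{i^{\star}}|=\Omega(2^{i^{\star}}/\log n)$. Scaling $\{x_e\}$ up by $2^{i^{\star}+2}$ makes the capacities support one unit of flow from $\hat r$ to each of $A_{s,t}$ and $B_{s,t}$ for every $(s,t)\in P_{i^{\star}}$, at cost $2^{i^{\star}+2}C$; I would apply Lemma~\ref{lem:GKR} with $\mathcal S=\{A_{s,t},B_{s,t}:(s,t)\in P_{i^{\star}}\}$ (here $|\mathcal S|\le 2|P|\le n^{O(1)}$, so $\log|\mathcal S|=O(\log n)$) to obtain a subtree $T'\subseteq\hat T_r$, rooted at $\hat r$, meeting every member of $\mathcal S$ with $\hat w(T')\le O(\log n\log|\mathcal S|)\cdot 2^{i^{\star}+2}C=O(\log^2 n)\cdot 2^{i^{\star}}C$. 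Then $T'$ covers all of $P_{i^{\star}}$, so its density is $O(\log^2 n)\,2^{i^{\star}}C/|P_{i^{\star}}|=O(\log^3 n)\,C\le O(\log^3 n)\,\rho^{\star}$, as required.

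I expect the independence step to be the main obstacle: the coupling between the left and right path lengths has no analogue in \preserver or \DSF, and one must simultaneously (a) keep the pruned left/right length sets compatible with \emph{every} pair's total bound $D(s,t)$ and (b) retain a $1/\polylog(n)$ fraction of the fractional coverage --- it is precisely the weighted-median observation that makes both hold at once. The remaining ingredients --- checking $C\le\rho^{\star}$, bucketing by coverage scale, and invoking Lemma~\ref{lem:GKR} --- should be routine, the only mild subtlety being to discard the negligibly-covered pairs first so that the number of buckets, and hence the loss, stays polylogarithmic.
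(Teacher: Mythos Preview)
Your proposal is correct and follows essentially the same approach as the paper: a density LP on $\hat T_r$, median-based pruning of the distance labels to decouple the constraint $k+\ell\le D(s,t)$ into independent left and right group demands (your claim $\mu^+_{s,t}+\mu^-_{s,t}\le D(s,t)$ is exactly the content of the paper's Lemma~\ref{lem:spanning-pairs}), geometric bucketing on the fractional coverage, and finally GKR rounding via Lemma~\ref{lem:GKR}. The only cosmetic difference is that the paper indexes its LP variables and its median prefixes by individual representatives $\hat s\in\Psi^{-1}(S_{s,t})$ sorted by distance label, whereas you index directly by the label $k$; the arguments are otherwise identical.
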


The rest of this section is devoted to proving Lemma~\ref{lem:LP-rounding}. In the original junction tree algorithm of~\cite{CEGS11}, with no distance constraints, the application of Lemma~\ref{lem:height-reduction} gave a \textsc{Minimum Density Group Steiner Forest} instance, for which they formulated and rounded an LP relaxation. However, our reduction yields a connectivity problem of a much more subtle nature, in which it is not enough to choose representatives of various sets of terminals, but to also choose them in a way that satisfies the relations $\hat R(s,t)$. Fortunately, the relations $\hat R(s,t)$ have a very specific structure, which still allows us to obtain a polylogarithmic approximation (whereas for more general relations the problem would be considerably harder). Nevertheless, these non-cross-product relations require both a new LP relaxation, and a more subtle rounding. We use the following natural LP relaxation for the problem described in Lemma~\ref{lem:LP-rounding}:

\begin{align}
\min\quad & \displaystyle\sum_{e\in E(\hat T_r)}\hat w(e)x_e\qquad&\nonumber\\
\text{s.t.}\quad
   &\sum_{(s,t)\in P}\sum_{(\hat s,\hat t)\in \hat R_{s,t}}y_{\hat s,\hat t}=1\label{LP:normalize}\\
   &\sum_{\hat t:(\hat s,\hat t)\in R_{s,t}}y_{\hat s,\hat t}\leq z_{\hat s}
      &\forall (s,t)\in P,\hat s\in\Psi^{-1}(S_{s,t})\label{LP:z_s-bound}\\
   &\sum_{\hat s:(\hat s,\hat t)\in R_{s,t}}y_{\hat s,\hat t}\leq z_{\hat t}
      &\forall (s,t)\in P,\hat t\in\Psi^{-1}(T_{s,t})\label{LP:z_t-bound}\\
&\text{capacities $x_e$ support $z_{\hat s}$ flow 
       from $\hat s$ to $\hat{r}$}
   &\forall (s,t)\in P,\hat s\in\Psi^{-1}(S_{s,t})\label{LP:s-flow}\\
&\text{capacities $x_e$ support $z_{\hat t}$ flow
       from $\hat{r}$ to $\hat t$}
      &\forall (s,t)\in P,\hat t\in\Psi^{-1}(T_{s,t})\label{LP:t-flow}\\
 &y_{\hat s,\hat t}\geq 0
     &\forall (s,t)\in P,(\hat s,\hat t)\in\hat R_{s,t}\\
 &x_e\geq 0
     &\forall e\in E(\hat T_r)
\end{align}
This is easily seen to be a relaxation, by considering, for any tree $T\subseteq \hat T_r$, the following solution: Let $P_T=\{(s,t)\in P\mid \exists (\hat s,\hat t)\in \hat R_{s,t}:T\text{ contains an }\hat s\leadsto\hat t\text{ path}\}$.\footnote{We describe the paths and LP flows as directed for convenience, but recall that the tree $\hat T_r$ is undirected, so there is no particular meaning to the direction of the flows and paths other than the implicit connection to the original, directed graph.} Let $x_e=1/|P_T|$ for every $e\in E(T)$, and $x_e=0$ otherwise. Then clearly the objective function gives $\sum_{e\in E(\hat T_r)}\hat w(e)x_e=\hat w(T)/|P_T|$ as required. Next, for every $(s,t)\in P_T$, let $(\hat s,\hat t)\in\hat R_{s,t}$ be a pair of representatives that is connected by $T$, and set $z_{\hat s}=z_{\hat t}=y_{\hat s,\hat t}=1/|P_T|$, and set all other $y$ and $z$ variables to $0$. It is easy to check that all constraints are satisfied by this solution.

Given an optimum solution to the above LP relaxation, our goal is to transform such a solution to an LP relaxation for \textsc{Group Steiner Tree} (on the tree $\hat T_r$), which can then be rounded using known algorithms without losing too much in the LP value. In particular, for pairs $(s,t)\in P$, we need to prune the sets of representatives $\Psi^{-1}(S_{s,t})$ and $\Psi^{-1}(T_{s,t})$, so that we can choose such representatives independently while still respecting the problem structure. Formally, we need to find representative sets $\tilde S_{s,t}\subseteq\Psi^{-1}(S_{s,t})$ and $\tilde T_{s,t}\subseteq\Psi^{-1}(T_{s,t})$ such that $\tilde S_{s,t}\times\tilde T_{s,t}\subseteq \hat R_{s,t}$ (so that there is no chance that we will pick a pair of representatives corresponding to a path of total length greater than $D(s,t)$), but also make sure that these representative sets still cover a large LP value. This representative set pruning is accomplished by sorting the representatives of each terminal, and taking all representatives of a terminal up to the median representative. The formal pruning procedure is as follows:
\begin{itemize}
  \item For all terminal pairs $(s,t)\in P$, define $\gamma_{s,t}:=\sum_{(\hat s,\hat t)\in \hat R_{s,t}}y_{\hat s,\hat t}$.
  \item For every terminal pair $(s,t)\in P$, sort the representative sets $\Psi^{-1}(S_{s,t})$ and $\Psi^{-1}(T_{s,t})$ by non-decreasing order of distance labels:
  \begin{itemize}
    \item Sort $\Psi^{-1}(S_{s,t})=\{\hat s_1,\hat s_2,\ldots\}$ so that for any $i',j'$, if $\Psi(\hat s_{i'})=(s^t,-i)$ and $\Psi(\hat s_{j'})=(s^t,-j)$ for some $i<j$, then $i'<j'$.
    \item Sort $\Psi^{-1}(T_{s,t})=\{\hat t_1,\hat t_2,\ldots\}$ so that for any $i',j'$, if $\Psi(\hat t_{i'})=(t^s,i)$ and $\Psi(\hat t_{j'})=(t^s,j)$ for some $i<j$, then $i'<j'$.
  \end{itemize}
  \item Choose median prefix sets for all terminals: For every terminal pair $(s,t)\in P$, define
$$\mu(s^t):=\min\left\{k\;\left|\; \sum_{i=1}^{k}\sum_{\hat t:(\hat s_i,\hat t)\in \hat R_{s,t}}y_{\hat s_i,\hat t}\geq\gamma_{s,t}/2\right.\right\},\text{ and}$$
$$\mu(t^s):=\min\left\{k\;\left|\; \sum_{i=1}^{k}\sum_{\hat s:(\hat s,\hat t_i)\in \hat R_{s,t}}y_{\hat s,\hat t_i}\geq\gamma_{s,t}/2\right.\right\},$$
and let $$\tilde S_{s,t}:=\{\hat s_i\mid i\in [\mu(s^t)]\}\qquad\text{and}\qquad \tilde T_{s,t}:=\{\hat t_i\mid i\in [\mu(t^s)]\}.$$
\end{itemize}
Note that our pruning is somewhat simplified by the fact that the terminal representative sets $\Psi^{-1}(S_{s,t})$ and $\Psi^{-1}(T_{s,t})$ are disjoint from the representative sets for all other pairs, so we do not need to prune the representatives of a single terminal $s$ w.r.t.\ a number of terminals $t$ simultaneously (or vice versa). As we shall see, the choice of median representative prefix sets automatically guarantees that at least half the LP value is preserved. However, we need to verify that these sets do not contain any pairs that are disallowed by the spanner constraints (formally, by the terminal relations $\hat R_{s,t}$):
\begin{lemma}\label{lem:spanning-pairs} In the above algorithm, for every terminal pair $(s,t)\in P$ we have $\tilde S_{s,t}\times \tilde T_{s,t}\subseteq \hat R_{s,t}$.
\end{lemma}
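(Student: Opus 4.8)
The plan is to collapse the whole statement to a single scalar inequality about the largest distance labels that survive the pruning on each side. Fix a pair $(s,t)\in P$. If $\gamma_{s,t}=0$ then $\mu(s^t)=0$, so $\tilde S_{s,t}=\emptyset$ and the claim is vacuous; so assume $\gamma_{s,t}>0$. Let $i^*$ be the distance label of the last representative kept on the $S$-side, i.e.\ $\Psi(\hat s_{\mu(s^t)})=(s^t,-i^*)$, and let $j^*$ be the one on the $T$-side, i.e.\ $\Psi(\hat t_{\mu(t^s)})=(t^s,j^*)$. Because $\Psi^{-1}(S_{s,t})$ and $\Psi^{-1}(T_{s,t})$ are sorted by non-decreasing distance label, every $\hat s_a\in\tilde S_{s,t}$ has label at most $i^*$ and every $\hat t_b\in\tilde T_{s,t}$ has label at most $j^*$. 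Hence it suffices to prove the single inequality $i^*+j^*\le D(s,t)$: granted this, any $(\hat s_a,\hat t_b)\in\tilde S_{s,t}\times\tilde T_{s,t}$ with labels $i_a\le i^*$ and $j_b\le j^*$ satisfies $i_a+j_b\le D(s,t)$, which is precisely the membership condition for $\hat R_{s,t}$.

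For the inequality $i^*+j^*\le D(s,t)$ I would run a ``two heavy sets must intersect'' argument on the $y$-variables. Writing $\gamma_{s,t}=\sum_{(\hat s,\hat t)\in\hat R_{s,t}}y_{\hat s,\hat t}$ as the total $y$-mass, unpack the definition of $\mu(s^t)$ as the \emph{smallest} index $k$ at which the prefix sum reaches $\gamma_{s,t}/2$: the prefix through index $\mu(s^t)-1$ has $y$-mass strictly less than $\gamma_{s,t}/2$, so the complementary suffix — the total $y$-mass of pairs $(\hat s_{i'},\hat t)\in\hat R_{s,t}$ with $i'\ge\mu(s^t)$ — is strictly greater than $\gamma_{s,t}/2$, and every such $\hat s_{i'}$ has label at least $i^*$ by the sorting. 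Symmetrically, the total $y$-mass of pairs $(\hat s,\hat t_{j'})\in\hat R_{s,t}$ with $j'\ge\mu(t^s)$ exceeds $\gamma_{s,t}/2$, with every such $\hat t_{j'}$ of label at least $j^*$. These are two subsets of the pair set $\hat R_{s,t}$, whose total $y$-mass is $\gamma_{s,t}$; since each has $y$-mass strictly above $\gamma_{s,t}/2$, inclusion--exclusion forces their intersection to have positive $y$-mass, hence to be nonempty.

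The last step is to read off the conclusion: pick any $(\hat s,\hat t)\in\hat R_{s,t}$ in that intersection, say with $\Psi(\hat s)=(s^t,-i)$ and $\Psi(\hat t)=(t^s,j)$. Then $i\ge i^*$ and $j\ge j^*$ by the previous paragraph, while $(\hat s,\hat t)\in\hat R_{s,t}$ means (after applying $\Psi$ and using the definition of $R_{s,t}$) that $i+j\le D(s,t)$. Hence $i^*+j^*\le i+j\le D(s,t)$, which is what we needed. I do not expect a genuine obstacle here; the only point requiring care is interpreting the ``median'' correctly — the $\min$ in the definition of $\mu$ bounds the mass \emph{before} the median index, which is what converts into the strict $>\gamma_{s,t}/2$ lower bound on the mass \emph{from} the median index onward — together with checking that ties among equal distance labels do not break the claim that the retained representatives all have label $\le i^*$ (resp.\ $\le j^*$), which follows from the sort being by non-decreasing label.
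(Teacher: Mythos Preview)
Your proof is correct. Both you and the paper start from the same observation: by minimality of $\mu(s^t)$, the $y$-mass on the $S$-suffix $\{\hat s_{i'}:i'\ge\mu(s^t)\}$ strictly exceeds $\gamma_{s,t}/2$. The paper then proceeds asymmetrically: it observes that every pair $(\hat s,\hat t)\in\hat R_{s,t}$ with $\hat s$ in this suffix must have $\hat t$-label at most $D(s,t)-\ell$ (your $i^*$), defines $k$ as the last $T$-index with that label, and concludes that the first $k$ indices on the $T$-side already carry mass $>\gamma_{s,t}/2$, forcing $\mu(t^s)\le k$. You instead invoke the symmetric suffix bound on the $T$-side and use inclusion--exclusion on the pair set $\hat R_{s,t}$ to produce a single witness pair in $\hat R_{s,t}$ with both coordinates in the respective suffixes; reading off its labels gives $i^*+j^*\le D(s,t)$ directly. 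Your route is a bit cleaner since it avoids introducing the auxiliary index $k$ and the case analysis of which $T$-labels are reachable from the $S$-suffix, while the paper's version is slightly more constructive in that it exhibits an explicit upper bound on $\mu(t^s)$. Substantively the two arguments are equivalent.
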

\begin{proof}
Fix some terminal pair $(s,t)\in P$. For the purpose of this proof, let us also define the suffix set $U=\{\hat s_i\mid i\in\{\mu(s^t),\ldots,|\Psi^{-1}(S_{s,t})|\}\}$. By our choice of $\mu(s^t)$ and definition of $\gamma_{s,t}$, we have that
\[
\sum_{\hat s\in U}\sum_{\hat t:(\hat s,\hat t)\in \hat R_{s,t}}y_{\hat s,\hat t}
=\gamma_{s,t}-\sum_{i=1}^{\mu(s^t)-1}\sum_{\hat t:(\hat s_i,\hat t)\in \hat R_{s,t}}y_{\hat s_i,\hat t}
> \gamma_{s,t}-\gamma_{s,t}/2
= \gamma_{s,t}/2.
\]
Now, let $\ell$ be such that $\Psi(\hat s_{\mu(s^t)})=(s^t,-\ell)$, and note by our sorted ordering that for every $\hat s\in U$ we have $\Psi(\hat s)=(s_t,-\ell')$ for some $\ell'\geq\ell$. Therefore, for any representative $\hat s\in U$, for any representative $\hat t\in T_{s,t}$ such that $(\hat s,\hat t)\in \hat R_{s,t}$ we must have $\Psi(\hat t)=(t^s,j)$ for some $j\leq D(s,t)-\ell'\leq D(s,t)-\ell$. Thus, letting $k=\max\{k'\mid \Psi(\hat t_{k'})=(t^s,D(s,t)-\ell)\}$, again by our sorted ordering, we have that for any $\hat s\in U$ and $k'\geq 0$,  if $(\hat s,\hat t_{k'})\in\hat R_{s,t}$ then $k'\in[k]$. This gives
$$\sum_{i=1}^{k}\sum_{\hat s:(\hat s,\hat t_i)\in \hat R_{s,t}}y_{\hat s,\hat t_i}\geq \sum_{i=1}^{k}\sum_{\substack{\hat s\in U\\(\hat s,\hat t_i)\in\hat R_{s,t}}}y_{\hat s,\hat t_i}=\sum_{\hat s\in U}\sum_{\hat t:(\hat s,\hat t)\in \hat R_{s,t}}y_{\hat s,\hat t}\geq\gamma/2.$$
This immediately implies that $k\geq\mu(t^s)$. Moreover, for any $\hat s\in \tilde S_{s,t}$ and $\hat t\in \tilde T_{s,t}(\subseteq\{\hat t_1,\ldots,\hat t_k\})$, we have $\Psi(\hat s)=(s^t,-\ell')$ and $\Psi(\hat t)=(s^t,j')$ such that $\ell'\leq \ell$ and $j'\leq k=D(s,t)-\ell$, and therefore $\ell'+j'\leq D(s,t)$, which implies that $(\hat s,\hat t)\in\hat R_{s,t}$, thus proving the lemma.
\end{proof}

With this pruning in place, we can run the remaining part of the classical junction tree algorithm. By the definition of $\gamma_{s,t}$ and constraint~\eqref{LP:normalize}, we get that $\sum_{(s,t)\in P}\gamma_{s,t}=1$. We now bucket the pairs $P$ by their $\gamma$ values. That is, for all $i\in\{0,1,\ldots,\lceil\log|P|\rceil\}$, define $$P_i=\{(s,t)\in P\mid \gamma_{s,t}\in(2^{-i-1},2^{-i}]\}.$$ By a standard argument, there exists some $i^*$ such that $\sum_{(s,t)\in P_{i^*}}\gamma_{s,t}\geq\frac 1{2(\lceil\log|P|\rceil+1)}$, and so $|P_{i^*}|\geq 2^{i^*}/O(\log n)$. Moreover, for every pair $(s,t)\in P_{i^*}$ we have
\begin{align*}
\sum_{\hat s\in\tilde S_{s,t}}z_{\hat s}&\geq \sum_{\hat s\in\tilde S_{s,t}}\sum_{\hat t: (\hat s,\hat t)\in\hat R_{s,t}}y_{\hat s,\hat t}&\text{by constraint~\eqref{LP:z_s-bound}}\\
&\geq\gamma_{s,t}/2\geq 2^{-i^*-2},
\end{align*}
and similarly $$\sum_{\hat t\in\tilde T_{s,t}}z_{\hat t}\geq 2^{-i^*-2}.$$

Now, scaling our LP solution up to $x^*_e:=\min\{1,2^{i^*+2}\cdot x_e\}$, from the above bounds together with constraints~\eqref{LP:s-flow} and~\eqref{LP:t-flow}, we get a (possibly suboptimal) solution to the following LP:

\begin{align*}
\min\quad & \displaystyle\sum_{e\in E}w(e)x^*_e\qquad&\\
\text{s.t.}\quad
&\text{capacities $\{x^*_e\}$ support one unit of flow 
       from $\hat r$ to $\tilde S_{s,t}$}
   &\forall (s,t)\in P_{i^*}\\
&\text{capacities $\{x^*_e\}$ support one unit of flow 
       from $\hat r$ to $\tilde T_{s,t}$}
   &\forall (s,t)\in P_{i^*}\\
& x^*_e\geq 0
   &\forall e\in E
\end{align*}

By Lemma~\ref{lem:GKR}, we can round this LP solution, and obtain a tree $T'\subseteq \hat T_r$ of weight $$w(T')=O\left(\log^2n\cdot \sum_{e\in E(\hat T_r)}w(e)x^*_e\right)\leq 2^{i^*}\cdot O\left(\log^2n\cdot \sum_{e\in E(\hat T_r)}w(e)x_e)\right)$$ such that for every pair $(s,t)\in P_{i^*}$ at least one vertex $\hat s\in\tilde S_{s,t}$ and at least one vertex $\hat t\in\tilde T_{s,t}$ are connected through $\hat r$ in the tree $T'$. Recalling by Lemma~\ref{lem:spanning-pairs} that such $(\hat s,\hat t)$ pairs also belong to $\hat R_{s,t}$, we have the following bound on the ``density" of the tree $T'$:
$$\frac{w(T')}{|\{(s,t)\in P\mid\exists(\hat s,\hat t)\in\hat R_{s,t}:T'\text{ contains an }\hat s\leadsto\hat t\text{ path}\}|}\leq\frac{w(T')}{|P_{i^*}|}=O\left(\log^3 n\cdot\sum_{e\in E(\hat T_r)}w(e)x_e\right),$$
thus rounding our original LP, and proving Lemma~\ref{lem:LP-rounding} and in turn Theorem~\ref{thm:pairwise-spanner-junction-tree}.

\section{Overview of Hardness Results for Additive Spanners} \label{sec:hardness}

In this section we givea high-level overview of our hardness results for additive spanners.  We begin with an informal overview of the result for $+1$-spanners, and then discuss how to extend the reduction to $+k$-spanners for larger stretch values. All details can be found in Sections~\ref{app:1-hardness} and \ref{app:k-hardness}.

In all of our reductions we will start from the \MR problem, which was first introduced by~\cite{Kor01}.  In \MR we are given a bipartite graph $G = (A,B,E)$ where $A$ is partitioned into groups $A_1, A_2, \dots, A_r$ and $B$ is partitioned into groups $B_1, B_2, \dots, B_r$, with the additional property that every set $A_i$ and every set $B_j$ has the same size (which we will call $|\Sigma|$ due to its connection to the alphabet of a $1$-round $2$-prover proof system).  This graph and partition induces a new bipartite graph $G'$ called the \emph{supergraph} in which there is a vertex $a_i$ for each group $A_i$ and similarly a vertex $b_j$ for each group $B_j$.  There is an edge between $a_i$ and $b_j$ in $G'$ if there is an edge in $G$ between some node in $A_i$ and some node in $B_j$.  A node in $G'$ is called a supernode, and similarly an edge in $G'$ is called a superedge.  

A REP-cover is a set $C \subseteq A \cup B$ with the property that for all superedges $\{a_i, b_j\}$ there are nodes $a \in A_i \cap C$ and $b \in B_j \cap C$ where $\{a,b\} \in E$.  We say that $\{a,b\}$ \emph{covers} the superedge $\{a_i, b_j\}$.  The goal is to construct a REP-cover of minimum size.  

We say that an instance of \MR is a YES instance if $OPT = 2r$ (i.e.~a single node is chosen from each group) and is a NO instance if $OPT \geq 2^{\log^{1-\epsilon} n} r$.  The following theorem, which is the starting point of our reductions, is due to Kortsarz~\cite{Kor01}:
\begin{theorem}[\cite{Kor01}] \label{thm:kor-hard}
Unless $\classNP \subseteq \classDTIME(2^{\polylog(n)})$, for any constant $\epsilon > 0$ there is no polynomial-time algorithm that can distinguish between YES and NO instances of \MR.  This is true even when the graph and the supergraph are regular, and both the supergraph degree and $|\Sigma|$ are polynomial in $2^{\log^{1-\epsilon} n}$.
\end{theorem}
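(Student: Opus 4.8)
The plan is to obtain Theorem~\ref{thm:kor-hard} by combining the PCP theorem with Raz's parallel repetition theorem, using the well-known correspondence between \MR and two-prover one-round games (equivalently, projection Label Cover). I would not aim for a self-contained proof; rather the work is to assemble standard ingredients while simultaneously tracking three quantities: the hardness gap, the size of the produced instance, and the magnitudes of $|\Sigma|$ and of the supergraph degree. First I would start from the NP-hardness of \textsc{Gap-3SAT} given by the PCP theorem and pass to the canonical game: the verifier sends a random clause to prover~1 and a random variable of that clause to prover~2, and checks local satisfaction plus consistency. This game has perfect completeness, soundness bounded by some absolute constant $1-\delta_0<1$, constant alphabet, and the projection (Min-Rep) structure, namely that prover~1's answer determines at most one accepting answer for prover~2.

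Next I would apply $\ell$-fold parallel repetition. By Raz's theorem this gives a game with perfect completeness and soundness at most $2^{-c\ell}$ for a constant $c=c(\delta_0)>0$, preserving the projection structure, but now with question sets of size $n^{O(\ell)}$, alphabet of size $2^{O(\ell)}$, and question-partner degree $2^{O(\ell)}$. Choosing $\ell=\Theta(\log^{1-\epsilon}n)$ pushes the soundness below $2^{-2\log^{1-\epsilon}n}$ while keeping $|\Sigma|$ and the supergraph degree at $2^{\Theta(\log^{1-\epsilon}n)}$, which is polynomial in $2^{\log^{1-\epsilon}n}$; the overall instance then has size $n^{O(\ell)}=2^{\polylog(n)}$, which is precisely why the conclusion is conditioned on $\classNP\not\subseteq\classDTIME(2^{\polylog(n)})$ rather than on $\classP\neq\classNP$. (Re-expressing the parameter bounds in terms of the size of the \emph{final} instance, rather than the original 3SAT instance, requires a small amount of bookkeeping with the exponents, but is robust to the exact constants.)

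Then I would translate the amplified game into an \MR instance: groups $A_i$ (resp.\ $B_j$) are the possible questions to prover~1 (resp.\ prover~2), vertices of a group are the possible answers, a superedge $\{a_i,b_j\}$ is present exactly for question pairs in the support, and an edge $\{a,b\}$ is present exactly when the answers are jointly accepting. In the YES case, a satisfying strategy picks one vertex per group, so there is a REP-cover of size $2r$ and $OPT=2r$. In the NO case, if $C$ were a REP-cover with $|C|<\lambda r$, then picking for each group a uniformly random vertex of $C$ in that group yields a randomized prover strategy that, on each question pair, hits an accepting pair with probability at least $1/(|C\cap A_i|\,|C\cap B_j|)$; averaging over the (near-)uniform question distribution shows this strategy has value $\Omega(1/\lambda^2)$, so soundness below $2^{-2\log^{1-\epsilon}n}$ forces $OPT\ge 2^{\log^{1-\epsilon}n}\cdot r$, the claimed gap. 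The regularity of $G$ and $G'$ comes for free from a regular clause--variable incidence structure after repetition, up to residual irregularity that can be removed by duplicating answers and padding with dummy accepting edges (or taking a disjoint union of shifted copies), changing completeness not at all and soundness by only a constant factor.

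The step I expect to be the main obstacle is this soundness analysis of the \MR reduction. The naive bound $\mathbb{E}[1/(|C\cap A_i|\,|C\cap B_j|)]$ is awkward because $z\mapsto 1/z$ is convex, so Jensen's inequality points the wrong way; making the argument rigorous requires either showing that a near-optimal REP-cover may be taken approximately balanced across groups, or replacing the crude per-group uniform strategy with a more careful weighted/random-restriction selection. This is exactly the technical core of Kortsarz's argument, and lining up the constants with the stated parameter regime (so that $|\Sigma|$ and the supergraph degree stay polynomial in $2^{\log^{1-\epsilon}n}$ while the gap reaches $2^{\log^{1-\epsilon}n}$) is where the real work lies.
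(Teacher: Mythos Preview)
The paper does not prove this theorem at all: it is quoted from~\cite{Kor01} and used as a black box, so there is no ``paper's own proof'' to compare against. Your sketch is the standard derivation of the \MR hardness (PCP $\to$ clause--variable game $\to$ Raz parallel repetition $\to$ \MR), and your parameter bookkeeping is correct: taking $\ell=\Theta(\log^{1-\epsilon} n)$ repetitions yields instance size $n^{O(\ell)}=2^{\polylog(n)}$, alphabet and supergraph degree $2^{\Theta(\ell)}$, and soundness $2^{-\Omega(\ell)}$, which is exactly why the assumption is $\classNP\not\subseteq\classDTIME(2^{\polylog(n)})$.

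You are also right that the one nontrivial step is the soundness direction of the game-to-\MR translation, and you have correctly diagnosed the issue: from a REP-cover $C$ of size $\lambda r$ one wants a strategy of value $\Omega(1/\lambda^2)$, but the naive ``pick uniformly from $C$ in each group'' bound $\mathbb{E}[1/(|C\cap A_i|\,|C\cap B_j|)]$ runs into convexity the wrong way. The standard fix, which you allude to, is to first prune groups with $|C\cap A_i|$ (resp.\ $|C\cap B_j|$) larger than a constant times the average; by Markov and the regularity of the supergraph, a constant fraction of superedges survive with both endpoints in unpruned groups, and on those the per-edge success probability is $\Omega(1/\lambda^2)$. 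That closes the gap you flagged, and with it your outline is complete.
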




\paragraph{Intuition for $+1$-Spanners:}
In the basic reduction framework, due to~\cite{Kor01,EP00-hardness}, we start with a \MR instance, and then for every group we add a vertex (corresponding to the supernode) which is connected to vertices in the group by ``connection" edges.  We then add an edge between any two supernodes that have a superedge in the supergraph.  So there is an ``outer" graph corresponding to the supergraph, as well as an ``inner" graph which is just the Min-Rep graph itself, and they are connected by connection edges.  The basic idea is that if we want stretch $3$, the only way to span a superedge is to use a path of length $3$ that goes through the \MR instance, in which case the \MR edge that is in this path corresponds to nodes in a valid REP-cover.   So if we create many copies of the outer nodes (i.e.~of the supergraph), then in a YES instance each copy can be covered relatively cheaply by using $3$-paths corresponding to the small REP-cover, while in a NO instance every copy \emph{requires} many edges simply in order to $3$-span the superedges.  This can be generalized to larger stretch values by changing the connection edges into paths of length approximately $k/2$.

Slightly more formally, suppose that we make $x$ copies of the outer nodes.  Then in the basic reduction, in a YES instance we can find a $3$-spanner that has total size of approximately $x \cdot 2r$, while in a NO instance every $3$-spanner has size at least $x \cdot 2^{\log^{1-\epsilon} n} r$.

This reduction strongly depends on having canonical paths of length at least $3$: since there is an $O(\log n)$-approximation for multiplicative stretch $2$ spanners~\cite{KP94}, no similar reduction can exist.  So if we want to prove hardness for $+1$-spanners, we need to make the true distance between supernodes at least $2$, rather than $1$.  The obvious way to do this would be to subdivide each superedge into a path of length $2$.  But now consider these new vertices: clearly in the optimal solution they must have degree at least one, and thus even in a YES instance the sparsest spanner must have size at least $x \cdot |E(G')| \approx x \cdot 2^{\log^{1-\epsilon} n} r$.  So we have entirely lost the hardness!

The intuitive solution is to subdivide superedges as before, except we use the \emph{same} middle vertex for all of the $x$ copies.  Thus we have to add only $|E(G')|$ extra nodes rather than $x \cdot |E(G')|$, and so the fact that these nodes must have degree at least $1$ in the optimal solution is no longer a problem.  Of course, now we have other problems: how do we span all of the new edges we added?  To do this we have to add yet another dummy node and paths of length $2$ from each outer node each of the middle superedge nodes, so it does not seem like we have much progress.  This seems like a catch-22: every time we add new vertices or edges to span other pairs, it becomes too expensive to span what we've added.  But now by carefully hooking these nodes up together, it turns out that we can use the same extra dummy node for each of the $x$ copies, so the total extra cost ends up being only $O(r x)$, which is still small enough that we maintain the hardness gap.

\paragraph{Hardness for $+k$-Spanners:}
To extend our hardness reduction to larger stretch values we will want to use the same basic idea: instead of supernodes being at distance $1$ from each other if they have a superedge (as in the classical reductions~\cite{Kor01,EP00-hardness,DKR16}), we will make them further away initially and make sure that canonical paths have length exactly $k$ more than the original lengths.  But here the reason we need large initial distance is very different from the reason that we needed it for $+1$-spanners: when the additive stretch is larger than $1$, then including superedges directly would lead to the possibility of spanning a superedge using non-canonical paths made up entirely of superedges.  This was the main difficulty in proving hardness of approximation for multiplicative spanners, and was overcome by~\cite{DKR16} by sparsifying the superedges so there are no short non-canonical paths (note that allowing directed edges solves this problem, for both multiplicative and additive spanners, but we want hardness for even the undirected setting).  But we pay a price in the hardness for doing this sparsification: the hardness drops to $2^{\frac{\log^{1-\epsilon} n}{k}}$, which becomes negligible when $k = \Omega(\log n)$.  

Pushing past this boundary requires giving up on sparsifying.  Thus the supergraph might have girth $4$, so if the original distance between two supernodes connected by a superedge is $d$, then the spanner instance we construct might have a path of length $3d$ using only these superedge paths.  To prevent such paths from being a problem, we will have to make them be too long, i.e.~we will need $3d - d = 2d$ to be greater than the additive stretch $k$.  But again we have the same Catch-22 as in $+1$-spanners: if we replace each superedge by a long path, then simply spanning all of the edges in those paths is too expensive.  Moreover, it now becomes difficult to span pairs that were innocuous before, e.g.~two copies of the same supernode.  Overcoming these issues requires adding even more extra paths and vertices, which create their own complications when trying to span them.  But these difficulties can be overcome with enough technical work; see Section~\ref{app:k-hardness} for details. 



\section{Hardness for $+1$-Spanners} \label{app:1-hardness}

\subsection{The Reduction}

Suppose we are given a  \MR instance $\widetilde G = (A, B, \widetilde E)$ with associated supergraph $G' = (U,V, E')$.  For any vertex $w \in U \cup V$ we let $\group(w)$ denote its group.  So $\group(u) \subseteq A$ for $u \in U$, and $\group(v) \subseteq B$ for $v \in V$.  We will assume without loss of generality that $G'$ is regular with degree $d_{G'}$ and $\widetilde G$ is regular with degree $d_{\widetilde G}$.  Let $x \in \mathbb{N}$ be a parameter which we will set later.

Our $+1$-spanner instance will have several kinds of vertices.  We first define the following vertex sets:
\begin{align*}
V_{out}^L &= U \times [x] & V_{out}^R &= V \times [x] \\
S &= \{s_y : y \in U \cup V\} & M &= \{m_{uv} : \{u,v\} \in E'\}.
\end{align*}

In other words, the $V_{out}$ vertices are just $x$ copies of the nodes in the supergraph, $S$ consists of one additional ``special" node for each node in the supergraph, and $M$ has a vertex for each superedge (these are the ``middle" nodes).  Let $V_R = V_{out}^L \cup V_{out}^R \cup A \cup B \cup S \cup M$ be the ``main" vertex set.  For technical reasons these vertices will not quite be enough, so we will also define a single special node $t$ and a set of nodes $T = \{t_y : y \in V_R\}$.  The final vertex set of our instance will be
\begin{align*}
V_G = V_R \cup \{t\} \cup T =  V_{out}^L \cup V_{out}^R \cup A \cup B \cup S \cup M \cup \{t\} \cup T.
\end{align*}

Now that our vertices are defined, we need to define edges.  We first add inner edges, which will just be a copy of the Min-Rep instance $\widetilde G$.  Formally, since $A, B \subset V_G$, we just let $E_{in} = \widetilde E$.  We will next connect the outer nodes to the inner nodes using connection edges:
\begin{align*}
E_{con} &= \{\{(u,i), a\} : u \in U \cup V \land a \in \Gamma(u) \land i \in [x]\}, 
\end{align*}

The next set of edges (the outer edges) form length-$2$ paths for each superedge:
\begin{align*}
E_{out} &= \{\{(u,i), m_{uv}\} : u \in U \land i \in [x] \land \{u,v\} \in E'\}, \\
&\bigcup \{\{(v,j), m_{uv}\} : v \in V \land j \in [x] \land \{u,v\} \in E'\}.
\end{align*}

We next connect nodes in $S$ using three edge sets: one which connects $S$ to the outer nodes, one which connects $S$ to the inner nodes, and one which connect $S$ to the middle nodes $M$:
\begin{align*}
E_{so} &= \{\{(u,i), s_u\} : u \in U \cup V \land i \in [x]\},  \\
E_{si} &= \{\{a, s_u\} : u \in U \cup V \land a \in \Gamma(u)\} \\
E_{sm} &= \{\{s_u, m_{uv}\} : u \in U \land \{u,v\} \in E'\} \bigcup \{\{s_v, m_{uv}\} : v \in V \land \{u,v\} \in E'\}.
\end{align*}

We next add  group edges to form a clique inside each group:
\begin{align*}
E_{group} &= \{\{a, b\} : \Gamma^{-1}(a) = \Gamma^{-1}(b)\}.
\end{align*}

Finally, we will add ``star" edges to make the entire graph have diameter $4$ by going through the special node $t$:
\begin{align*}
E_{star} &= \{\{t, t_y\} : y \in V_R\} \bigcup \{\{t_y, y\} : y \in V_R\}.
\end{align*}

Our final edge set $E_G$ is the union of all of these sets: $E_G = E_{in} \cup E_{con} \cup E_{out} \cup E_{so} \cup E_{si} \cup E_{sm} \cup E_{group} \cup E_{star}$.

\subsection{Analysis}

\subsubsection{Soundness}
We first analyze the soundness of this reduction: we want to show that the size any $+1$-spanner is lower bounded by $OPT_{MR}$ (the optimal solution of the Min-Rep instance $\tilde G$).  Let $H$ be an arbitrary $+1$-spanner of $G = (V_G, E_G)$.  

\begin{definition}
A path between outer nodes $(u,i)$ and $(v,j)$ is called \emph{canonical} if it has the form $(u,i) \rightarrow a \rightarrow b \rightarrow (v,j)$, where $a \in \gamma(u)$ and $b \in \Gamma(v)$.  In other words, it is a length $3$ path whose first and last edges are from $E_{con}$ and whose middle edge is from $E_{in}$.
\end{definition}

Note that if $\{u,v\}$ is a superedge then $(u,i)$ and $(v,j)$ are at distance $2$ in $G$ because of the outer edges.  Thus in $H$ they must be at distance $2$ or $3$.  The intuition is that if $\{u,v\}$ is a superedge then $H$ should contain a canonical path from $(u,i)$ to $(v,j)$ for all $i,j \in [x]$.  Such a path exists, since it corresponds to choosing nodes in a REP-cover to satisfy the $\{u,v\}$ superedge.  Unfortunately this isn't quite true, but it's ``true enough" -- with only a small loss, we can guarantee that \emph{enough} paths are canonical.  

For the purposes of lower bounding $|E(H)|$, it will turn out to be easier to only consider the case of $j=i$.  Note that $(u,i)$ and $(v,j)$ are also at distance $2$ when $j \neq i$, but for the purpose of the analysis we will only care about $j=i$.  Thus we will not try to make all possible paths canonical, but rather only when $j=i$

\begin{definition}
A $+1$-spanner $H'$ of $G$ is \emph{canonical} if for all outer nodes $(u,i)$ and $(v,i)$ where $\{u,v\} \in E'$ and $i \in [x]$, there is a canonical path between $(u,i)$ and $(v,i)$.
\end{definition}

\begin{lemma} \label{lem:canonical}
There is a canonical $+1$-spanner $H'$ of $G$ with $|E(H')| \leq 4 |E(H)|$.
\end{lemma}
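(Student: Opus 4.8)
The plan is to take the arbitrary $+1$-spanner $H$ and transform it into a canonical spanner $H'$ by adding, for each supernode $u$, a small set of ``fix-up" edges that let us route canonical paths for every copy $(u,i)$. The key observation is that although $H$ need not contain a canonical path between $(u,i)$ and $(v,i)$, the fact that these two nodes are at distance $2$ in $G$ forces $H$ to contain \emph{some} path of length $2$ or $3$ between them, and from these we can harvest enough structure. First I would classify, for each superedge $\{u,v\} \in E'$ and each $i \in [x]$, the shortest path $H$ uses between $(u,i)$ and $(v,i)$: it is either a $2$-path through $m_{uv}$ (the outer path), a $2$-path through some other common neighbor, or a $3$-path. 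In the ``bad" cases we want to cheaply reroute. The trick is that each outer node $(u,i)$ has, through $E_{so}$, a single edge to $s_u$, and $s_u$ in turn connects to all the group vertices $\Gamma(u)$ through $E_{si}$; so if we are willing to pay for the (at most) $|\Gamma(u)| = |\Sigma|$ edges $\{\{a,s_u\} : a \in \Gamma(u)\}$ and the single edge $\{(u,i),s_u\}$ once per copy, we get cheap access from $(u,i)$ into its entire group. But this by itself is too expensive across all copies. The sharper idea, which I expect the authors to use, is a charging/averaging argument: since $|E(H)|$ is what we are comparing against, and each copy of the outer graph that is ``badly spanned" must already be paying something in $H$, we can afford to add a constant-factor's worth of edges.

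Concretely, the key steps in order: (1) For each superedge $\{u,v\}$ and each $i$, look at how $H$ spans $(u,i)$–$(v,i)$; if it already uses a canonical path, do nothing. (2) Otherwise, identify the ``witness" structure in $H$ — e.g.\ a $3$-path, which must use edges incident to $A \cup B$ (since the only length-$3$ routes between outer nodes of different sides go through the inner graph or through $S$), and extract from it an inner edge $\{a,b\}$ with $a \in \Gamma(u)$, $b \in \Gamma(v)$ that is ``close" to being usable. (3) Add the two connection edges $\{(u,i),a\}$ and $\{(v,i),b\}$ to $H'$; together with the already-present inner edge $\{a,b\}$ this completes a canonical path. (4) Bound the total number of added edges: each added edge can be charged to a distinct edge (or small bounded set of edges) of $H$ that appeared in the witness path, and since witness paths for distinct $(u,i,v)$ triples with $j=i$ are edge-disjoint enough (or overlap only $O(1)$ times), the total added is at most $3|E(H)|$, giving $|E(H')| \le 4|E(H)|$. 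The constant $4$ strongly suggests the charging scheme loses a factor of $3$ in the worst case (e.g.\ a $3$-path contributes $3$ edges of $H$ to charge against while we add at most that many), and then $|E(H')| = |E(H)| + 3|E(H)|$.

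The main obstacle I anticipate is step (2)–(4): making the charging injective (or bounded-multiplicity) across all superedges and all copies simultaneously. A single edge of $H$ — say an inner edge $\{a,b\}$ or a star edge through $t$ — could plausibly serve as the witness for many different pairs $(u,i)$–$(v,i)$, which would blow up the charge. The resolution is presumably to observe that the connection edges and outer edges in $H$ are ``copy-specific" (the index $i$ is baked into the vertex $(u,i)$), so the portion of each witness path incident to an outer node is not shared between copies; one charges the added edges to \emph{that} part of the witness rather than to the shared inner/star part. I would also need to handle the degenerate case where $H$ spans $(u,i)$–$(v,i)$ via a $2$-path that is \emph{not} the outer path through $m_{uv}$ — i.e.\ through some other common neighbor in $H$; here the common neighbor is forced (by the structure of $E_G$) to be either $m_{uv}$ itself or $s_u=s_v$-type vertices, and one checks these cases explicitly, using $E_{sm}$ and $E_{so}$ to still extract a canonical detour at $O(1)$ amortized cost. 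Once the case analysis of ``what can a short $H$-path between same-index outer nodes look like" is pinned down, the rest is bookkeeping.
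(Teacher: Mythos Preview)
Your overall architecture is right---add canonical paths on top of $H$ and charge the additions to witness edges already in $H$---and your arithmetic ($|E(H)| + 3|E(H)| = 4|E(H)|$) is exactly the intended one. But step (2) rests on a structural misreading of $G$ that breaks the argument.

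You assert that a non-canonical length-$3$ path between $(u,i)$ and $(v,i)$ ``must use edges incident to $A\cup B$'' and propose to ``extract from it an inner edge $\{a,b\}$.'' In fact the \emph{only} length-$3$ paths through the inner graph are the canonical ones. The actual non-canonical short paths are exactly
\[
(u,i)\text{--}m_{uv}\text{--}(v,i),\qquad
(u,i)\text{--}m_{uv}\text{--}s_v\text{--}(v,i),\qquad
(u,i)\text{--}s_u\text{--}m_{uv}\text{--}(v,i),
\]
none of which touch $A\cup B$. So there is no inner edge to extract from the witness; the canonical edge $\{a_{uv},b_{uv}\}$ must be chosen \emph{a priori} (one fixed Min-Rep edge per superedge), not read off from $H$.

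This misclassification also undermines your charging. Your fallback---charge to ``the portion of each witness path incident to an outer node''---is too coarse: the edge $\{(u,i),s_u\}$ is incident to $(u,i)$ but is shared across all superedges $\{u,v'\}$ containing $u$, so charging to it would not be injective in the superedge coordinate. The key observation you are missing is that each of the three non-canonical paths above contains at least one of the two outer edges $\{(u,i),m_{uv}\}$ or $\{(v,i),m_{uv}\}$, and \emph{this} edge is specific to the full triple $(u,v,i)$. Charging the (at most three) added edges to that outer edge gives an injective map from triples to edges of $H$, which is what yields the factor $4$. Once you correct the classification of non-canonical paths, the proof is short and does not require the $E_{si}$ or star-edge detours you were contemplating.
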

\begin{proof}
For each superedge $\{u,v\} \in E'$, let $a_{uv} \in \Gamma(u)$ and $b_{uv} \in \Gamma(v)$ be arbitrary inner nodes so that $\{a_{uv},b_{uv}\} \in \widetilde E$, i.e.~so that there is a canonical path $(u,i) \rightarrow a_{uv} \rightarrow b_{uv} \rightarrow (v,j)$ for all $i,j \in [x]$.  We create a new graph $H'$ by starting with $H$, and then for each $((u,i), (v,i))$ pair of outer nodes where $\{u,v\} \in E'$ and $i \in [x]$ but where there is no canonical path from $(u,i)$ to $(v,i)$, we add the connection edges $\{(u,i), a_{uv}\}$ and $\{(v,i), b_{uv}\}$ and the inner edge $\{a_{uv}, b_{uv}\}$ (if these edges do not already exist).  Since $H'$ contains $H$, we know that $H'$ is also a $+1$-spanner, and it is canonical by construction.

So it remains to prove the size bound.  How many edges did we add to $H$ to get $H'$?  Suppose we added (at most $3$) edges to canonically span the pair $((u,i), (v,i))$.  Then in $H$ the path between $(u,i)$ and $(v,i)$ already had length at most $3$, but was not a canonical path.  Since it was not a canonical path but had length $3$, it must have one of the following forms:
\begin{enumerate}
\item $(u,i) \rightarrow m_{uv} \rightarrow (v,i)$, or
\item $(u,i) \rightarrow m_{uv} \rightarrow s_v \rightarrow (v,i)$, or
\item $(u,i) \rightarrow s_u \rightarrow m_{uv} \rightarrow (v,i)$.
\end{enumerate}

This classification is easy to see by inspection.  The unique path of type $1$ is clearly the only path of length $2$.  And the only nodes reachable in $2$ hops from $(u,i)$ are inner nodes, outer nodes of the form $(u,j)$, $s_u$, $m_{uv'}$ (where $v'$ is not necessarily equal to $v$), $s_{v'}$, $t_{(u,i)}$, and $t$.  If a $2$-hop path to an inner node can be extended in one hop to $(v,i)$ then we have a canonical path, so by assumption no such path exists.  Of the other nodes reachable in two hops, the only ones adjacent to $(v,i)$ are $m_{uv}$ and $s_v$.  Hence no paths other than paths 2 and 3 exist of length $3$.

This classification implies that either $\{(u,i), m_{uv}\}$ or $\{(v,i), m_{ab}\}$ (or both) is an edge in $E(H)$.  Let $e_{uvi}$ be one of these edges (if both exist, choose one arbitrarily).  When creating $H'$, we added up to three edges in order to add a canonical path between $(u,i)$ and $(v,i)$.  We will charge those edges to $e_{uvi}$.  

When we do this for all $\{u,v\} \in E'$ and $i \in [x]$, it is clear that we have charged a different edge for each such (superedge, value) pair.  Since each charge involves at most three new edges, and every existing edge is charged at most once, we immediately get the lemma.
\end{proof}

\begin{theorem} \label{thm:soundness}
Any $+1$-spanner $H$ of $G$ has $|E(H)| \geq x \cdot OPT_{MR} / 4$.
\end{theorem}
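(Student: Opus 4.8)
The plan is to use Lemma~\ref{lem:canonical} to reduce to the canonical case: since it produces a canonical $+1$-spanner $H'$ with $|E(H')|\le 4|E(H)|$, it suffices to prove that every canonical $+1$-spanner $H'$ of $G$ satisfies $|E(H')|\ge x\cdot OPT_{MR}$, and then $x\cdot OPT_{MR}\le|E(H')|\le 4|E(H)|$ gives the theorem.

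To lower-bound $|E(H')|$, I would count connection edges, one copy at a time. Fix a copy $i\in[x]$. For each superedge $\{u,v\}\in E'$, the definition of a canonical spanner guarantees a canonical path $(u,i)\rightarrow a\rightarrow b\rightarrow(v,i)$ in $H'$ with $a\in\Gamma(u)$, $b\in\Gamma(v)$ and $\{a,b\}\in\widetilde E$. Let $C_i\subseteq A\cup B$ be the set of all vertices appearing as an endpoint of such a middle inner edge (over all superedges). Then $C_i$ is a REP-cover of $\widetilde G$: for every superedge $\{u,v\}$ the associated canonical path exhibits $a\in\Gamma(u)\cap C_i$ and $b\in\Gamma(v)\cap C_i$ with $\{a,b\}\in\widetilde E$, so $\{a,b\}$ covers $\{u,v\}$. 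Hence $|C_i|\ge OPT_{MR}$.

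Next I would charge each vertex of $C_i$ to a distinct connection edge of $H'$. A vertex $w\in A\cup B$ lies in a unique group, so $\Gamma^{-1}(w)$ is a single supernode, and any canonical path through $w$ in copy $i$ must use the connection edge $e_{w,i}:=\{(\Gamma^{-1}(w),i),\,w\}\in E_{con}$; thus $e_{w,i}\in E(H')$ for every $w\in C_i$. These edges are pairwise distinct over all choices of $(w,i)$: each connection edge has exactly one endpoint in $A\cup B$ and one in $V_{out}$, so $e_{w,i}=e_{w',i'}$ forces $w=w'$ (matching the inner endpoint) and $i=i'$ (matching the copy in the outer endpoint). Therefore
\[
|E(H')|\ \ge\ |E_{con}\cap E(H')|\ \ge\ \sum_{i=1}^{x}|C_i|\ \ge\ \sum_{i=1}^{x}OPT_{MR}\ =\ x\cdot OPT_{MR},
\]
which, combined with Lemma~\ref{lem:canonical}, yields $|E(H)|\ge x\cdot OPT_{MR}/4$.

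The only delicate points are bookkeeping ones: checking that the harvested connection edges are genuinely distinct across copies and across cover vertices, and the small observation that a vertex of $A\cup B$ pins down its supernode, so ``the'' connection edge it uses in a given copy is well defined. The substantive work — arguing that enough of the length-$\le 3$ paths in an arbitrary spanner can be promoted to canonical ones at a cost of only a factor $4$ — has already been handled by Lemma~\ref{lem:canonical}; here the one real idea is the per-copy identification of the middle inner edges with a REP-cover.
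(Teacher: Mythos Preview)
Your proof is correct and follows essentially the same approach as the paper: reduce to a canonical spanner via Lemma~\ref{lem:canonical}, observe that for each copy $i$ the inner endpoints of the canonical paths form a REP-cover $C_i$, and then count the pairwise-distinct connection edges $\{(\Gamma^{-1}(w),i),w\}$ to get $|E(H')|\ge\sum_i |C_i|\ge x\cdot OPT_{MR}$. The only cosmetic difference is that the paper defines $C_i$ as \emph{all} inner nodes adjacent in $H'$ to their copy-$i$ outer node, whereas you take the (possibly smaller) set of inner nodes actually witnessed by canonical paths; both are REP-covers, so the bound is the same.
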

\begin{proof}
By Lemma~\ref{lem:canonical}, we just need to show that $|E(H')| \geq x \cdot OPT_{MR}$.  

For $i \in [x]$, let $C_i \subseteq A \cup B$ be the set of inner nodes which in $H'$ are adjacent to the associated outer node, i.e.~$C_i = \{a \in A \cup B : \{(\Gamma^{-1}(a), i), a\} \in E(H')\}$.  We first claim that $C_i$ is a REP-cover of $\tilde G$ for all $i \in [x]$.  To see this, fix $i \in [x]$ and a superedge $\{u,v\} \in E'$.  Since $H'$ is canonical there is a canonical path $(u,i) \rightarrow a \rightarrow b \rightarrow (v,i)$ in $H'$.  Thus $a$ and $b$ are both in $C_i$, and since the middle edge of the canonical path is an edge in $\widetilde E$ we know that $C_i$ covers the superedge $\{u,v\}$.  Hence $C_i$ is a valid REP-cover for all $i \in [x]$, and thus $|C_i| \geq OPT_{MR}$

Now note that by definition, every node $a \in C_i$ is incident on an edge $\{a, (\Gamma^{-1}(a), i)\}$.  Since this edge is different for all $i \in [x]$ and $a \in C_i$, we immediately get that $|E(H')| \geq \sum_{i=1}^x |C_i| \geq x \cdot OPT_{MR}$.  
\end{proof}

\subsubsection{Completeness}

We now show completeness: that there is a $+1$-spanner of $G$ which does not cost too much \emph{more} than $OPT_{MR}$ in the YES case.  Recall that in a YES instance there is a valid REP-cover of size $|U| + |V|$ where for each supernode we choose exactly one representative.  Let $C$ be such a REP-cover, for each $u \in U$ let $a_u$ be the unique element of $C \cap \Gamma(u)$, and for each $v \in V$ let $b_v$ be the unique element of $C \cap \Gamma(v)$.  

The subgraph $H$ of $G$ we will analyze will include $E_{in} \cup E_{so} \cup E_{si} \cup E_{sm} \cup E_{star}$.  It will also include connection edges $\{(u,i), a_u\}$ for all $i \in [x]$ and $u \in U$ and connection edges $\{(v, i), b_v\}$ for all $i \in [x]$ and $v \in V$.  Finally, it will include a star of group edges in each group, with the representative from $C$ as the center.  More formally, for each $u \in U$ we will include the group edges $\{\{a_u, a\} : a \in \Gamma(u)\}$ and for each $v \in V$ we will include the group edges $\{\{b_v, b\} : b \in \Gamma(v)\}$.  

We first analyze the size of $H$, and then later will prove that it is a $+1$-spanner of $G$.  Recall that $d_{G'}$ is the degree of the supergraph $G'$ (which is regular).  Let $n'$ be the number of nodes in the supergraph, and let $|\Sigma|$ denote the size of each group in $\widetilde G$.  

\begin{theorem} \label{thm:completeness-size}
$H$ has at most $8 n' d_{G'}|\Sigma|^2 + 4n' x$ edges when $\widetilde G$ is a YES instance.
\end{theorem}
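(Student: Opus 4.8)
The plan is to simply enumerate the edge sets that make up $H$ and bound the size of each; this is a pure counting argument, since the fact that $H$ is a $+1$-spanner is established separately. Recall that in the YES case there is a REP-cover $C$ of size $|U|+|V|$ choosing exactly one representative $a_u$ (resp.\ $b_v$) from each group, and that $H$ consists of: the full inner graph $E_{in}=\widetilde E$; the full sets $E_{so}$, $E_{si}$, $E_{sm}$, $E_{star}$; the connection edges $\{(u,i),a_u\}$ and $\{(v,i),b_v\}$ over all $i\in[x]$ and $u\in U$, $v\in V$ (so, crucially, exactly one per supernode per copy, since $C$ picks a single representative per supernode); and, for each supernode, the $\le|\Sigma|$ group edges of the star centered at its representative.

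First I would record the elementary counting facts: $|U\cup V|=n'$; every group has $|\Sigma|$ vertices, so $|A|+|B|=n'|\Sigma|$ and $|S|=n'$; the number of superedges is $|E'|=\tfrac12 n'd_{G'}$, hence $|M|=\tfrac12 n'd_{G'}$; and the degree of $\widetilde G$ satisfies $d_{\widetilde G}\le d_{G'}|\Sigma|$, since a vertex in $\Gamma(u)$ can only have neighbors inside the $\le d_{G'}$ groups adjacent to $u$ in $G'$, at most $|\Sigma|$ in each. From these I get $|E_{in}|=\tfrac12 n'|\Sigma|d_{\widetilde G}\le\tfrac12 n'd_{G'}|\Sigma|^2$; $|E_{so}|=n'x$; $|E_{si}|=n'|\Sigma|$; $|E_{sm}|=n'd_{G'}$ (two per superedge); the number of chosen connection edges is $n'x$; the number of group-star edges is at most $n'|\Sigma|$; and $|E_{star}|=2|V_R|=2\bigl(n'x+n'|\Sigma|+n'+\tfrac12 n'd_{G'}\bigr)$.

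Then I would add everything up. The $x$-dependent contributions come only from $E_{so}$, the chosen connection edges, and $E_{star}$, and total exactly $n'x+n'x+2n'x=4n'x$, matching the second term of the claimed bound. Every remaining term is bounded by a constant times $n'd_{G'}|\Sigma|^2$ (using $d_{G'}\ge1$ and $|\Sigma|\ge1$, together with the regularity assumptions on $\widetilde G$ and $G'$), and these constants sum to at most $8$, yielding $|E(H)|\le 8n'd_{G'}|\Sigma|^2+4n'x$.

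I do not anticipate a genuine obstacle: the only points needing care are the translation from $\widetilde G$-parameters to supergraph parameters — in particular the bound $d_{\widetilde G}\le d_{G'}|\Sigma|$ — and collecting the $x$-dependent edges precisely enough to land on the stated $4n'x$ term rather than a looser $O(n'x)$. If the crudest accounting produces a constant slightly above $8$ in front of $n'd_{G'}|\Sigma|^2$, it can be tightened using $|\Sigma|\ge2$ and $d_{G'}\ge2$, which hold for the Min-Rep instances of Theorem~\ref{thm:kor-hard}.
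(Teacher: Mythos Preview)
Your proposal is correct and follows essentially the same approach as the paper: a direct enumeration of the edge sets comprising $H$, bounding each in terms of $n'$, $d_{G'}$, $|\Sigma|$, and $x$, then collecting the $x$-dependent terms to get exactly $4n'x$ and absorbing the rest into $8n'd_{G'}|\Sigma|^2$. If anything, your accounting is slightly more careful than the paper's (you include the $A\cup B$ contribution to $|V_R|$ in $|E_{star}|$, which the paper omits), and your anticipated use of $|\Sigma|\ge 2$ to keep the constant at $8$ is exactly what is needed.
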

\begin{proof}
Clearly $E_{in} = \widetilde E$ has at most $n' d_{G'} |\Sigma|^2$ edges,  $E_{so}$ has at most $n' x$ edges, $E_{si}$ has at most $n' |\Sigma|$ edges, $E_{sm}$ has at most $n' d_{G'}$ edges, and $E_{star}$ has at most $2 |V_R| = 2 (n' x + n' + n' d_{G'})$ edges. 
The number of group edges added is clearly at most $n' |\Sigma|$, and the number of connection edges is at most $n' x$.  Putting this together, we get that
\begin{align*}
|E(H)| &\leq n' d_{G'} |\Sigma|^2 + n' x + n' |\Sigma| + n' d_{G'} + 2(n' x + n' + n' d_{G'}) + n' |\Sigma| + n' x \\
& \leq 8 n' d_{G'} |\Sigma|^2 + 4 n' x
\end{align*}
as claimed.
\end{proof}

\begin{theorem} \label{thm:completeness-correct}
$H$ is a $+1$-spanner of $G$.
\end{theorem}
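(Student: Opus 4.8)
The plan is to check the defining inequality $d_H(w,w')\le d_G(w,w')+1$ for all $w,w'\in V_G$ by casing on $d_G(w,w')$. The easy observation is that $H$ contains all of $E_{star}$, so $t$ together with the pendant set $T$ makes $\mathrm{diam}(H)\le 4$: every vertex of $V_R$ reaches $t$ in two hops through its private $T$-vertex, and any two vertices of $V_R$ are joined by a length-$4$ path through $t$ (the vertices $t$ and $T$ are even closer to everything). Consequently every pair with $d_G(w,w')\ge 3$ satisfies $d_H(w,w')\le 4\le d_G(w,w')+1$, and only pairs at $G$-distance $0,1,2$ remain to be handled.

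Next I would identify exactly which edges of $E_G$ are absent from $H$ and give each of them a length-$2$ detour inside $H$. The absent edges are: the non-representative connection edges $\{(u,i),a\}$ with $a\ne a_u$ (and their $V$-analogues), all of $E_{out}$, and the non-representative group edges $\{a,b\}$ with neither endpoint a group representative. A missing connection edge is detoured through $a_u$ (resp.\ $b_v$) using one representative connection edge and one group-star edge; a missing outer edge $\{(u,i),m_{uv}\}$ is detoured through $s_u$ using the edge of $E_{so}$ and the edge of $E_{sm}$; a missing group edge is detoured through $a_u$ (resp.\ $b_v$) using two group-star edges. Since $E_{in}$, $E_{so}$, $E_{si}$, $E_{sm}$, $E_{star}$, the group-stars, and the representative connection edges all lie in $H$, this proves $d_H(y,z)\le 2$ for every edge $\{y,z\}\in E_G$, and in particular $d_H(w,w')\le 2=d_G(w,w')+1$ whenever $d_G(w,w')=1$.

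For a pair at $G$-distance exactly $2$, fix a $G$-path $w-z-w'$. If either $\{w,z\}$ or $\{z,w'\}$ is already an edge of $H$ (which is automatic, e.g., when $z\in\{t\}\cup T$, or when $w$ or $w'$ lies in $\{t\}\cup T$), then combining that edge with the length-$\le 2$ detour of the other gives $d_H(w,w')\le 3$, as required. Otherwise both $\{w,z\}$ and $\{z,w'\}$ are among the missing edges above; inspecting that list, the only vertices $z$ incident to a missing edge are outer vertices, inner vertices that are not their group's representative, and middle vertices ($S$, $t$, and $T$ have all their incident edges in $H$). A short case analysis on the type of $z$ and the pair of missing edges at $z$ then shows that either $\{w,w'\}$ is itself an edge of $E_G$ (so $d_G(w,w')=1$, not $2$) or $d_H(w,w')\le 3$: when the two missing edges sit at an outer or at a middle vertex, the pair $w,w'$ is joined in two hops through the corresponding special vertex $s_u$ (or $s_v$) via edges of $E_{so}/E_{si}/E_{sm}$; when they sit at an inner vertex $a\ne a_u$, the two outer copies $(u,i),(u,i')$ are joined in two hops through $s_u$. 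The single genuinely delicate configuration is $z=m_{uv}$ with $w=(u,i)$ and $w'=(v,j)$, where the only short $H$-path is $(u,i)\to a_u\to b_v\to (v,j)$: this is a path in $H$ because the connection edges to representatives are included and, crucially, $\{a_u,b_v\}\in\widetilde E$.

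That last point is where the completeness argument uses the YES-promise, and it is the step I expect to be the main obstacle: it holds precisely because, in a YES instance, the single-representative REP-cover $C$ must cover the superedge $\{u,v\}$, which forces the Min-Rep edge $\{a_u,b_v\}$ to exist. It is also the reason the reduction introduces the middle vertices $M$ and the special vertices $S$ in the first place (to give $E_{out}$ edges a cheap detour, and to make distinct outer copies and middle-node pairs spannable). Everything else is a finite check of the small neighbourhood of each vertex type, together with the fact that $E_{so},E_{si},E_{sm}$ and the group-stars all lie in $H$; assembling the cases $d_G(w,w')\in\{0,1,2\}$ with the $d_G(w,w')\ge 3$ diameter bound completes the proof.
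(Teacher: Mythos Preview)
Your proposal is correct and follows essentially the same approach as the paper: use $E_{star}$ to reduce to pairs at $G$-distance at most $2$, give a length-$2$ detour in $H$ for every missing edge (outer, non-representative connection, non-star group), and then handle distance-$2$ pairs by casing on the two-edge $G$-path, with the only nontrivial case being $(u,i)$--$m_{uv}$--$(v,j)$, resolved via the canonical path $(u,i)\to a_u\to b_v\to(v,j)$ using the YES-instance REP-cover. The only cosmetic difference is that the paper organizes the final case analysis by the pair of missing edge \emph{types}, whereas you organize it by the type of the middle vertex $z$; the content is the same.
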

\begin{proof}
Since $H$ includes all star edges it has diameter at most $4$, and hence any two vertices which were originally at distance $3$ or $4$ are spanned by $H$.  Thus we only need to analyze pairs which were at distance $1$ or $2$ in $G$.  

We begin by analyzing pairs which were at distance $1$ in $G$, i.e.~edges of $G$.  If some edge is also in $H$ then it is trivially spanned.  The only edges of $G$ not in $H$ are the outer edges, some group edges, and some connection edges.  Each outer edge has a length $2$ path spanning it by using one edge of $E_{so}$ and one edge of $E_{sm}$.  Each group edge has a length $2$ path spanning it using the star inside every group that is included in $H$.  Similarly, each connection edge has a $2$-path spanning it using the included connection edge and an included star edge.  Slightly more formally, consider a connection edge $\{(u,i), a\}$.  Then $H$ includes the connection edge $\{(u, i), a_u\}$ and the group edge $\{a_u, a\}$.  The same analysis holds for connection edges on the other side.

We now consider pairs which were at distance $2$ in $G$, and prove that they are at distance at most $3$ in $H$.  To do this, first note that if the $2$-path connecting two nodes in $G$ has both edges in $H$ then it is trivially still spanned in $H$, and if one of the two edges are in $H$ then it still has distance at most $3$ in $H$ (since by the previous analysis all missing edges are replaced by a path of length $2$).  So we only need to worry about pairs of vertices at distance $2$ in $G$ where both edges on the $2$-path in $G$ are not in $H$.  

This leaves only a few cases to analyze, based on the types of the two edges in the path.  We consider them each.

\begin{enumerate}
\item Two outer edges.  From an outer node $(u,i) \in V_{out}^L$, any path involving two outer edges must end at either an outer node $(v,j)$ where $\{u,v\} \in E'$ or at a node $(u,j)$ with $j \neq i$.  For the first of these, we know by construction that there is a length-$3$ canonical path from $(u,i)$ to $(v,j)$.  For the second, there is still a $2$-path by going through $s_u$ and using two edges from $E_{so}$.  The same analysis holds for paths from outer nodes $(v,j) \in V_{out}^R$.  

The only other type of $2$-path involving only outer edges are path between two middle nodes, i.e.~$2$-paths from $m_{uv}$ to $m_{u' v'}$.  For such a path to exist, either $u' = u$ or $v' = v$, and hence $H$ still has a $2$-path of the form $m_{uv} \rightarrow s_u \rightarrow m_{uv'}$ or of the form $m_{uv} \rightarrow s_v \rightarrow m_{u'v}$.

\item Two connection edges.  Any $2$-path involving two missing connection edges must be either between outer nodes $(u,i)$ and $(u,j)$ with $i \neq j$, or between two inner nodes $a,b$ with $\Gamma^{-1}(a) = \Gamma^{-1}(b)$.  In the first case there is still a $2$-path between the nodes by using edges in $E_{so}$, and in the second case there is still a $2$-path by using the remaining edges of $E_{group}$ in $\Gamma(\Gamma^{-1}(a))$.

\item Two group edges.  This is trivial since any pair of nodes connected by such a $2$-path are also connected by a single group edge, so the analysis for a single edge holds.

\item Outer edge and connection edge.  Any such path is from a middle node $m_{uv}$ to an inner node $a \in \Gamma(u) \cup \Gamma(b)$.  A $2$-path between these nodes exists in $H$ by using an edge in $E_{sm}$ and an edge in $E_{si}$.  

\item Outer edge and group edge.  No such path exists.

\item Connection edge and group edge.  The two endpoints of such a path must be an outer node $(u,i)$ and an  inner node $a \in \Gamma(u)$.  But $G$ already has a connection edge between them, so they are at distance $1$ in $G$ and the previous analysis applies.  
\end{enumerate}

Thus $H$ is a $+1$-spanner of $G$.  
\end{proof}

\subsubsection{Putting it Together}

We can now finish the proof of hardness.

\begin{theorem}
For any constant $\epsilon > 0$, there is no polynomial-time $2^{\log^{1-\epsilon} n}$-approximation algorithm for the $+1$-spanner problem unless $NP \subseteq DTIME(2^{\polylog(n)})$.  
\end{theorem}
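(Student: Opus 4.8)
The plan is to assemble the pieces already in place: the soundness bound of Theorem~\ref{thm:soundness}, the completeness bounds of Theorems~\ref{thm:completeness-size} and~\ref{thm:completeness-correct}, and the \MR hardness of Theorem~\ref{thm:kor-hard}. The only real design choice is the replication parameter $x$, which we pick large enough to swallow the additive $8n'd_{G'}|\Sigma|^2$ overhead of the completeness bound into its $\Theta(n'x)$ term, yet still only polynomial in the \MR parameters; since those parameters are themselves bounded by $2^{O(\log^{1-\delta}n)}$, this costs essentially nothing in the exponent, and we can afford to start from \MR hardness with a strictly smaller exponent $\delta<\epsilon$.

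Concretely, fix $\epsilon>0$ and set $\delta=\epsilon/2$. By Theorem~\ref{thm:kor-hard} applied with parameter $\delta$, there is no polynomial-time algorithm that distinguishes YES instances of \MR (for which $OPT_{MR}=n'$, where $n'=|U|+|V|$ is the number of supernodes) from NO instances (for which $OPT_{MR}\geq 2^{\log^{1-\delta}n_0}\cdot n'/2$), even restricted to instances where the supergraph is regular of degree $d_{G'}$ and the alphabet has size $|\Sigma|$ with $d_{G'},|\Sigma|\leq 2^{O(\log^{1-\delta}n_0)}$; here $n_0$ denotes the number of vertices of the \MR instance $\widetilde G$. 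Given such a $\widetilde G$, build the $+1$-spanner instance $G=(V_G,E_G)$ of Section~\ref{app:1-hardness} with the choice $x:=d_{G'}|\Sigma|^2$, and write $N:=|V_G|$ for the number of its vertices. Counting the vertex classes, $N=O\big(n'(x+|\Sigma|+d_{G'})\big)=O(n'x)=n_0\cdot 2^{O(\log^{1-\delta}n_0)}$, which is polynomial in $n_0$; in particular the construction runs in polynomial time and $\log N=O(\log n_0)$.

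Next, run the two cases through the lemmas. If $\widetilde G$ is a YES instance, Theorems~\ref{thm:completeness-size} and~\ref{thm:completeness-correct} exhibit a $+1$-spanner of $G$ with at most $8n'd_{G'}|\Sigma|^2+4n'x=8n'x+4n'x=12n'x$ edges, so the optimum $+1$-spanner of $G$ has at most $12n'x$ edges. If $\widetilde G$ is a NO instance, Theorem~\ref{thm:soundness} forces every $+1$-spanner of $G$ to have at least $x\cdot OPT_{MR}/4\geq x\cdot 2^{\log^{1-\delta}n_0}\cdot n'/8$ edges. Thus the optimum value in the NO case exceeds that in the YES case by a factor of at least $2^{\log^{1-\delta}n_0}/96$. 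Since $\log N=O(\log n_0)$ and $1-\delta>1-\epsilon$, we have $\log^{1-\epsilon}N\leq O(\log^{1-\epsilon}n_0)=o(\log^{1-\delta}n_0)$, so $2^{\log^{1-\epsilon}N}<2^{\log^{1-\delta}n_0}/96$ for all sufficiently large $n_0$ (and the finitely many small instances do not affect an asymptotic statement). Hence a polynomial-time $2^{\log^{1-\epsilon}N}$-approximation for the $+1$-spanner problem, run on $G$, returns a spanner whose edge count lies below the threshold $x\cdot 2^{\log^{1-\delta}n_0}\cdot n'/8$ in the YES case and above it in the NO case; comparing to this efficiently computable threshold distinguishes YES from NO \MR instances in polynomial time, contradicting Theorem~\ref{thm:kor-hard} unless $NP\subseteq DTIME(2^{\polylog(n)})$.

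Since the substantive work — the canonical-path soundness argument (Lemma~\ref{lem:canonical}, Theorem~\ref{thm:soundness}) and the explicit completeness construction (Theorems~\ref{thm:completeness-size} and~\ref{thm:completeness-correct}) — is already done, the only point demanding care is the size bookkeeping described above: $x$ must be big enough to dominate the $d_{G'}|\Sigma|^2$ overhead but small enough (polynomial in $2^{\log^{1-\delta}n_0}$) that $\log N$ remains $O(\log n_0)$, which is exactly why the reduction must be seeded with the slightly smaller exponent $\delta=\epsilon/2$ rather than $\epsilon$. That buffer is the whole reason the chain of inequalities closes; everything else is routine.
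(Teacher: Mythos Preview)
Your proof is correct and follows essentially the same approach as the paper: choose $x=d_{G'}|\Sigma|^2$, combine Theorem~\ref{thm:soundness} with Theorems~\ref{thm:completeness-size} and~\ref{thm:completeness-correct} to get a gap of $\Theta(2^{\log^{1-\delta}n_0})$, and absorb the constant and size blowup by starting from a smaller exponent. The paper is terser (it just says ``by using a sufficiently smaller value of $\epsilon$''), whereas you spell out the $\delta=\epsilon/2$ choice and the $\log N=O(\log n_0)$ bookkeeping explicitly, but the argument is the same.
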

\begin{proof}
We first instantiate $x$ to be $d_{G'}|\Sigma|^2$.  With this setting of $x$, it is obvious that the graph $G$ created by our reduction has size $n$ which is polynomial in the size $n' |\Sigma|$ of the Min-Rep instance $\widetilde G$ (and that we can create $G$ in polynomial time).  

Theorem~\ref{thm:kor-hard} implies that under the given complexity assumption, there is no polynomial-time algorithm which can distinguish between the YES case where $OPT_{MR} = n'$ and the NO case where $OPT_{MR} \geq n' 2^{\log^{1-\epsilon}(n'|\Sigma|}$.  Let $OPT$ denote the size of the optimal $+1$-spanner for $G$.  Then Theorem~\ref{thm:soundness} implies that if $\widetilde G$ is a NO instance then $OPT \geq x \cdot OPT_{MR} / 4 \geq x n' \cdot 2^{\log^{1-\epsilon}(n'|\Sigma|)}$ / 4.  On the other hand, if $\widetilde G$ is a YES instance, then Theorems~\ref{thm:completeness-size} and~\ref{thm:completeness-correct} imply that $OPT \leq 12 n' x$.  Thus we cannot approximate the $+1$-spanner problem better than $48 \cdot 2^{\log^{1-\epsilon}(n |\Sigma|)}$.  By using a sufficiently smaller value of $\epsilon$ (and restricting our attention to large enough input instances), this implies hardness of $2^{\log^{1-\epsilon} n}$ as claimed.  
\end{proof}

\section{Hardness for $+k$-Spanners} \label{app:k-hardness}

We now prove hardness for larger additive stretch, with the goal of proving Theorem~\ref{thm:hardness-main}. 

\subsection{The Reduction}

Suppose we are given a bipartite Min-Rep instance $\widetilde G = (A, B, \widetilde E)$ with associated supergraph $G' = (U,V, E')$.  For any vertex $w \in U \cup V$ we let $\group(w)$ denote its group.  So $\group(u) \subseteq A$ for $u \in U$, and $\group(v) \subseteq B$ for $v \in V$.  We will assume without loss of generality that $G'$ is regular with degree $d$, and let $n' = |U \cup V|$.  Let $x \in \mathbb{N}$ be a parameter which we will set later.

Our $+k$-spanner instance will have several kinds of vertices.  We first define the following vertex sets:
\begin{align*}
V_{out}^L &= U \times [x] \times [k-1] & V_{out}^R &= V \times [x] \times [k-1] \\
S &= \{s_y : y \in U \cup V\} & L &= \{\ell_{u, i, j} : u \in U \cup V, i \in [x], j \in [k-1]\} \\
M &= \{m_{uv}^i : \{u,v\} \in E' \land i \in [k-2]\} & V_{in} &= A \cup B \\
Q &= \{q_y : y \in U \cup V\} & P &= \{p_{u,i,j} : u \in U \cup V, i \in [x], j \in \left[\left\lceil \frac{k-1}{2}\right\rceil\right]\}
\end{align*}

Let the union of the above vertices be $V_R = V_{out}^L \cup V_{out}^R \cup A \cup B \cup S \cup M \cup L \cup P \cup Q$, which will be the ``main" vertex set.  Note that $|V_R| \leq k n' x + n' |\Sigma| + n' + n' k x + n' d k +n' + n' k x \leq 3 n' k x + n'(|\Sigma| + 2 + dk) \leq 3n' k x + 2n'(|\Sigma| + dk)$ 

In order to decrease the diameter, we will also have an extra node $t$ and extra nodes
\begin{align*}
T &= \{t_{y, i} : y \in V_R \land i \in [k-1]\}.
\end{align*}

The final vertex set of our instance will be
\begin{align*}
V_G = V_R \cup T =  V_{out}^L \cup V_{out}^R \cup A \cup B \cup S \cup M \cup L \cup Q \cup P \cup T \cup \{t\}.
\end{align*}

Now that our vertices are defined, we need to define edges.  We first add inner edges, which will just be a copy of the Min-Rep instance $\widetilde G$.  Formally, since $A, B \subset V_G$, we just let $E_{in} = \widetilde E$.  

We next define the crucial connection edges, we connect the inner nodes to the outer nodes:
\begin{align*}
E_{con} = \{\{(u,i,1), a\} : u \in U \cup V \land a \in \group(u) \land i \in [x]\}.
\end{align*}

We now need to define paths among the outer nodes:
\begin{align*}
E_{path} = \{\{(u,i,j), (u,i,j+1)\} : u \in U \cup V \land i \in [x] \land j \in [k-2]\}
\end{align*}

The next edges also form paths: $E_P$ is set up to provide alternate bounded-length paths between the outermost outer nodes, and $E_L$ is set up to provide bounded-length paths from the outermost outer nodes to $S$. 
\begin{align*}
E_L &= \{\{(u,i, k-1), \ell_{u, i, 1}\} : u \in U \cup V \land i \in [x]\} \\
&\cup \{\ell_{u,i,j}, \ell_{u,i,j+1}\} : u \in U \cup V \land i \in [x] \land j \in [k-2]\} \\
&\cup \{\{\ell_{u,i, k-1}, s_u\} : u \in U \cup V \land i \in [x] \\
E_P &= \{\{(u,i,k-1), p_{u,i,1}\} : u \in U \cup V \land i \in [x]\} \\
& \cup \{\{p_{u,i,j}, p_{u,i,j+1}\} : u \in U \cup V \land i \in [x] \land j \in \left\lceil \frac{k-1}{2} \right\rceil - 1\} \\
&\cup \{\{p_{u,i, \left\lceil \frac{k-1}{2} \right\rceil}, q_u\} : u \in U \cup V \land i \in [x]\}
\end{align*}

We now connect the nodes in $S$ using two edge sets:
\begin{align*}
E_{so} &= \{\{s_u, \ell_{u,i,k-1}\} : u \in U \cup V \land i \in [x]\}\\
E_{sm} &= \{\{s_u, m_{uv}^1\} : u \in U \land \{u,v\} \in E'\} \cup \{\{s_v, m_{uv}^{k-2}\} : v \in V \land \{u,v\} \in E'\}.
\end{align*}

And now we connect the nodes in $M$ via more paths.  Note that the following edge set is empty if $k=3$.
\begin{align*}
E_M &= \{\{m_{uv}^i, m_{uv}^{i+1}\} : \{u,v\} \in E' \land i \in [k-3]\}.
\end{align*}

We connect $M$ to $V_{out}$ using outer edges:
\begin{align*}
E_{out} &= \{\{(u,i,k-1), m_{uv}^1\} : u \in U \land i \in [x] \land \{u,v\} \in E'\} \\
& \cup \{\{(v,i,k-1), m_{uv}^{k-2}\} : v \in V \land i \in [x] \land \{u,v\} \in E'\}.
\end{align*}

We now add group edges to form a clique in each group, and star edges to make the entire graph have diameter $2k$:
\begin{align*}
E_{group} &= \{\{a,b\} : a,b \in A \cup B \land \group^{-1}(a) = \group^{-1}(b)\}, \\
E_{star} &= \{\{t, t_{y, k-1}\} : y \in V_R\} \cup \{\{t_{y, i}, t_{y, i+1}\} : y \in V_R \land i \in [k-2]\}.
\end{align*}

This completes the reduction: the final edge set is 
\begin{align*}
E_G &= E_{in} \cup E_{con} \cup E_{path} \cup E_L \cup E_P \cup E_{so} \cup E_{sm} \cup E_M \cup E_{out} \cup E_{group} \cup E_{star}.
\end{align*}

\subsection{Analysis}

\subsubsection{Soundness}

We will first show that the size of any $+k$-spanner of $G$ is lower bounded by $OPT_{MR}$ (the optimal solution of the Min-Rep instance $\widetilde G$).  Let $H$ be an arbitrary $+k$-spanner of $G$.

\begin{definition}
Let $\{u,v\} \in E'$.  A path between outer nodes $(u, i, k-1)$ and $(v,j,k-1)$ is called \emph{canonical} if it has the form
\begin{align*}
(u,i,k-1) \rightarrow (u,i,k-2) \rightarrow \dots \rightarrow (u,i,1) \rightarrow a \rightarrow b \rightarrow (v,j,1) \rightarrow (v,j,2) \rightarrow \dots \rightarrow (v,j,k-1).
\end{align*}
In other words, it goes through edges in $E_{path}$, then through a connection edge, then an inner edge, then a connection edges, and then more edges in $E_{path}$.
\end{definition}

Note that any canonical path has length exactly $2k-1$, and that $a \in \group(u)$ and $b \in \group(v)$.  Since the distance in $G$ from $(u,i,k-1)$ and $(v,j,k-1)$ is $k-1$ via the path $(u,i,k-1) \rightarrow m_{uv}^1 \rightarrow m_{uv}^2 \rightarrow \dots \rightarrow m_{uv}^{k-2} \rightarrow (v,j,k-1)$, a canonical path is a valid spanning path for this pair.  

\begin{definition}
A $+k$-spanner $H'$ of $G$ is \emph{canonical} if for all outer nodes $(u,i,k-1)$ and $(v,i,k-1)$ with $\{u,v\} \in E'$, there is a canonical path between $(u,i,k-1)$ and $(v,i,k-1)$
\end{definition}

Note that this definition of canonical uses the same value of $i$ on both sides -- this is to simplify the charging argument used in the proof of the next lemma.  

\begin{lemma} \label{lem:k-canonical}
There is a canonical $+k$-spanner $H'$ of $G$ with $|E(H')| \leq 2k |E(H)|$.
\end{lemma}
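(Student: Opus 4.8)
The proof follows the same template as Lemma~\ref{lem:canonical}, but now the canonical paths have length $2k-1$ rather than $3$ and the classification of ``cheap'' non-canonical paths is considerably more delicate. First I would fix, for every superedge $\{u,v\}\in E'$, inner nodes $a_{uv}\in\group(u)$ and $b_{uv}\in\group(v)$ with $\{a_{uv},b_{uv}\}\in\widetilde E$ (these exist because $\{u,v\}$ is a superedge), so that for all $i,j\in[x]$ there is a canonical path from $(u,i,k-1)$ to $(v,j,k-1)$ using $a_{uv}$ and $b_{uv}$. Then set $H'=H$ and, for every superedge $\{u,v\}$ and every $i\in[x]$ for which $H$ has no canonical $(u,i,k-1)$--$(v,i,k-1)$ path --- call $(\{u,v\},i)$ a \emph{bad} pair --- add to $H'$ the at most $2k-1$ edges of the canonical path through $a_{uv},b_{uv}$ that are not already present. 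Since $H'\supseteq H$ it is still a $+k$-spanner, and it is canonical by construction, so only the size bound remains.

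The heart of the argument is the structural claim that for every bad pair $(\{u,v\},i)$, the graph $H$ already contains one of the two \emph{signature edges} $\{(u,i,k-1),m_{uv}^1\}$ and $\{(v,i,k-1),m_{uv}^{k-2}\}$ (both in $E_{out}$). Since $d_G((u,i,k-1),(v,i,k-1))=k-1$ via the middle path $(u,i,k-1)\rightarrow m_{uv}^1\rightarrow\cdots\rightarrow m_{uv}^{k-2}\rightarrow(v,i,k-1)$, and $H$ is a $+k$-spanner, $H$ contains a path $P$ of length at most $2k-1$ between these two nodes, and $P$ is not canonical because the pair is bad. The claim then reduces to showing that every non-canonical $(u,i,k-1)$--$(v,i,k-1)$ path that uses neither signature edge has length at least $2k$. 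To see this I would argue that within the budget $2k-1$, $P$ must leave $(u,i,k-1)$ by descending the $E_{path}$ chain all the way to $(u,i,1)$ --- the interior outer nodes lead only back along $E_{path}$ or, via $E_{star}$ and $t$, along a route of length $2k$ --- and then take a connection edge into $\group(u)$; at that point $P$ has used $k-1$ edges and still must climb a $v$-path of length $k-2$, leaving only two edges to get from $\group(u)$ to the foot of that $v$-path, and the only such route is $a\rightarrow b\rightarrow(v,i,1)$ with $\{a,b\}\in\widetilde E$ and $b\in\group(v)$, which would make $P$ canonical. The only remaining ways of entering the $v$-path that reaches $(v,i,k-1)$ (via $E_L$ and an $\ell$-chain, via $E_P$ and a $p$-chain, or via $E_{star}$) all force the crossing from the $u$-side to go through $s_u$, $s_v$, a middle path, $t$, or extra group/$\widetilde E$ edges, each of which one checks costs at least $2k$.

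Granting the claim, the size bound is a direct charging argument: charge the at most $2k-1$ edges added to $H'$ on account of a bad pair $(\{u,v\},i)$ to one of that pair's signature edges present in $H$ (arbitrarily if both are). A signature edge determines its pair uniquely --- its $V_{out}^L$- or $V_{out}^R$-endpoint fixes $i$ and one of $u,v$, and its $M$-endpoint fixes the superedge $\{u,v\}$ --- so distinct bad pairs get charged to distinct edges of $H$, each edge charged at most once. Hence $|E(H')|\le|E(H)|+(2k-1)|E(H)|=2k|E(H)|$, as claimed.

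The step I expect to be the main obstacle is the structural claim. In the $+1$ case the non-canonical length-$3$ paths fall into three shapes that can simply be listed, but here one must bound the length of \emph{every} route between $(u,i,k-1)$ and $(v,i,k-1)$, carefully tracking all the auxiliary gadgets ($L$, $M$, $P$, $Q$, $T$, $S$) that were introduced precisely to regulate distances, and verify that the budget $2k-1$ is exactly tight enough to rule out every route except canonical paths and paths that use a signature edge.
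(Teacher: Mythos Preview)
Your approach is essentially the paper's: same construction of $H'$, same structural claim (any non-canonical path of length $\le 2k-1$ must use one of the two outer edges $\{(u,i,k-1),m_{uv}^1\}$ or $\{(v,i,k-1),m_{uv}^{k-2}\}$), and the identical charging argument.

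The one place you diverge is in justifying the structural claim. The paper does not argue by tracking the path's first move; instead it simply lists the three non-canonical paths of length $\le 2k-1$ --- the bare middle path of length $k-1$, and the two length-$(2k-1)$ variants that replace one outer edge by the detour $(\cdot,i,k-1)\to\ell_{\cdot,i,1}\to\cdots\to\ell_{\cdot,i,k-1}\to s_\cdot\to m_{uv}^{\cdot}$ --- and asserts this list is exhaustive. Your version (``$P$ must leave $(u,i,k-1)$ by descending the $E_{path}$ chain'') is not quite right as stated: even avoiding the signature edge on the $u$ side, $P$ could leave $(u,i,k-1)$ via an outer edge to $m_{uv'}^1$ for some $v'\neq v$, or via $\ell_{u,i,1}$, or via $p_{u,i,1}$, not only via $(u,i,k-2)$. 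Each of these alternatives can indeed be dispatched within the budget analysis you sketch, but the cleanest route is the paper's: write down the three short non-canonical paths explicitly and observe that every one of them contains a signature edge.
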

\begin{proof}
For each superedge $\{u,v\} \in E'$, let $a_{uv} \in \group(u)$ and $b_{uv} \in \group(v)$ be arbitrary inner nodes so that $\{a_{uv}, b_{uv}\} \in \widetilde E$, i.e.~an pair of nodes which would cover the superedge.  We create a new graph $H'$ by starting with $H$, and then for each pair of outer nodes $(u,i,k-1)$ and $(v,i,k-1)$ where $\{u,v\} \in E'$ that do not have a canonical path, we add the canonical path between them through $a_{uv}$ and $b_{uv}$.  Slightly more formally, if any edges in the following path are missing, we add them (note that all of these edges are in $G$): $(u,i,k-1) \rightarrow (u,i,k-1) \rightarrow \dots \rightarrow (u,i,1) \rightarrow a_{uv} \rightarrow b_{uv} \rightarrow (v,i,1) \rightarrow (v,i,2) \rightarrow \dots \rightarrow (v,i,k-1)$.  

Since $H'$ includes $H$, it is clearly a $+k$-spanner.  It is canonical by construction: if it is not canonical then there is a pair of outer nodes $(u,i,k-1)$ and $(v,i,k-1)$ with $\{u,v\} \in E'$ which is not spanned by a canonical path.  But for every such pair we added a canonical path.  

Thus it remains only to bound the size of $H'$.  We do this by a charging argument: we will show how to charge edges we added to already existing edges of $H$, in such a way that no edge of $H$ is charged more than $2k-1$ times.  This will clearly prove the lemma.  

Suppose to get $H'$ we added a canonical path to span $(u,i, k-1)$ and $(v,i,k-1)$.  Note that this path has length exactly $2k-1$.  In order to charge this path to an edge of $H$, we first need to understand the way in which $(u,i,k-1)$ and $(v,i,k-1)$ could have been spanned in $H$.  Since the distance between them in $G$ is exactly $k-1$ (using the path through $M$), in $H$ there must be a path between them of length at most $2k-1$, and this path must be non-canonical (or else we would not have added a canonical path).  It is easy to verify that there are only three non-canonical paths between them of length at most $2k-1$:
\begin{enumerate}
\item $(u,i,k-1) \rightarrow m_{uv}^1 \rightarrow m_{uv}^2 \rightarrow \dots \rightarrow m_{uv}^{k-2} \rightarrow (v,i,k-1)$ (length $k-1$), and
\item $(u,i,k-1) \rightarrow \ell_{u,i,1} \rightarrow \ell_{u,i,2} \rightarrow \dots \rightarrow \ell_{u,i,k-1} \rightarrow s_u \rightarrow m_{uv}^1 \rightarrow m_{uv}^2 \rightarrow \dots \rightarrow m_{uv}^{k-2} \rightarrow (v,i,k-1)$ (length $2k-1$), and 
\item $(u,i,k-1) \rightarrow m_{uv}^1 \rightarrow m_{uv}^2 \rightarrow \dots \rightarrow m_{uv}^{k-2} \rightarrow s_v \rightarrow \ell_{v,i,k-1} \rightarrow \ell_{v,i,k-2} \rightarrow \dots \rightarrow \ell_{v,i,1} \rightarrow (v,i,k-1)$ (length $2k-1$).  
\end{enumerate}

Hence $H$ must include at least one of these paths.  All three of these paths include an outer edge: either $\{(u,i,k-1), m_{uv}^1\}$ or $\{(v,i,k-1), m_{uv}^{k-2}\}$ or both.  We charge all edges on the canonical path we added to whichever of these outer edges exists in $H$ (if they both exist in $H$, then we pick one arbitrarily).  Clearly any two canonical paths will be charged to different outer edges, and thus each edge in $H$ is charged at most $2k-1$ times, proving the lemma.
\end{proof}

This now makes it easy to prove soundness.

\begin{theorem} \label{thm:k-soundness}
Any $+k$-spanner $H$ of $G$ has $|E(H)| \geq \frac{x}{2k} \cdot OPT_{MR}$
\end{theorem}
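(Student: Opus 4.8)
The plan is to reduce immediately to the canonical case via Lemma~\ref{lem:k-canonical}, and then mimic the soundness argument from the $+1$ case (Theorem~\ref{thm:soundness}). Since Lemma~\ref{lem:k-canonical} gives a canonical $+k$-spanner $H'$ with $|E(H')| \leq 2k|E(H)|$, it suffices to prove that every canonical $+k$-spanner $H'$ satisfies $|E(H')| \geq x \cdot OPT_{MR}$; dividing by $2k$ then yields the claim.

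First I would extract, for each copy index $i \in [x]$, the set of inner nodes that ``pay their dues'' in that copy: let $C_i = \{a \in A \cup B : \{(\group^{-1}(a), i, 1), a\} \in E(H')\}$, i.e.\ the inner nodes joined to their own innermost outer node in copy $i$ by a connection edge of $E_{con}$. The key step is to show each $C_i$ is a REP-cover of $\widetilde G$. Fix a superedge $\{u,v\} \in E'$. Because $H'$ is canonical, there is a canonical path between $(u,i,k-1)$ and $(v,i,k-1)$; by definition this path has the form $(u,i,k-1)\to\dots\to(u,i,1)\to a \to b \to (v,i,1)\to\dots\to(v,i,k-1)$ with $a \in \group(u)$, $b \in \group(v)$, and its middle edge $\{a,b\}$ an edge of $\widetilde E$. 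The two connection edges $\{(u,i,1),a\}$ and $\{(v,i,1),b\}$ lie in $H'$, so $a,b \in C_i$ and $\{a,b\}$ covers $\{u,v\}$. Hence $C_i$ is a valid REP-cover and $|C_i| \geq OPT_{MR}$.

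Finally I would count edges: every $a \in C_i$ is incident in $H'$ to the connection edge $\{a,(\group^{-1}(a),i,1)\}$, and this edge determines both $a$ and $i$, so these edges are pairwise distinct across all choices of $i$ and $a \in C_i$. Therefore $|E(H')| \geq \sum_{i=1}^{x}|C_i| \geq x \cdot OPT_{MR}$, and combining with Lemma~\ref{lem:k-canonical} gives $|E(H)| \geq |E(H')|/(2k) \geq \frac{x}{2k}\cdot OPT_{MR}$.

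I do not expect a serious obstacle here: the only place any care is needed is in verifying that the canonical path genuinely forces a connection edge at copy $i$ on \emph{both} sides (so that both endpoints of a covering inner edge land in the same $C_i$), which is exactly why the definition of ``canonical'' was set up to use the same index $i$ on both outer nodes. All the real work — ruling out the non-canonical short paths and paying the $2k$ factor to repair them — has already been done in Lemma~\ref{lem:k-canonical}, so this theorem is essentially a bookkeeping corollary of it.
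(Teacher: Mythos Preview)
Your proposal is correct and matches the paper's proof essentially line for line: reduce to a canonical $H'$ via Lemma~\ref{lem:k-canonical}, define $C_i$ as the inner nodes joined by a connection edge to copy $i$, argue each $C_i$ is a REP-cover via the canonical paths, and count the distinct connection edges to get $|E(H')|\geq x\cdot OPT_{MR}$. Your closing remark about why the same index $i$ is used on both sides is exactly the point the paper makes when it sets up the definition of canonical.
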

\begin{proof}
By Lemma~\ref{lem:k-canonical}, we just need to show that $|E(H')| \geq x \cdot OPT_{MR}$.

For $i \in [x]$, let $C_i \subseteq A \cup B$ be the set of inner nodes which in $H'$ are adjacent to the associated outer node, i.e.~$C_i = \{a \in A \cup B : \{(\Gamma^{-1}(a), i,1), a\} \in E(H')\}$.  We first claim that $C_i$ is a REP-cover of $\tilde G$ for all $i \in [x]$.  To see this, fix $i \in [x]$ and a superedge $\{u,v\} \in E'$.  Since $H'$ is canonical there is a canonical path $(u,i,k-1) \rightarrow (u,i,k-2) \rightarrow \dots \rightarrow (u,i,1) \rightarrow a \rightarrow b \rightarrow (v,i,1) \rightarrow (v,i,2) \rightarrow (\dots \rightarrow (v,i,k-1)$ in $H'$.  Thus $a$ and $b$ are both in $C_i$, and since the middle edge of the canonical path is an edge in $\widetilde E$ we know that $C_i$ covers the superedge $\{u,v\}$.  Hence $C_i$ is a valid REP-cover for all $i \in [x]$, and thus $|C_i| \geq OPT_{MR}$

Now note that by definition, every node $a \in C_i$ is incident on a connection edge $\{a, (\Gamma^{-1}(a), i,1)\}$.  Since this edge is different for all $i \in [x]$ and $a \in C_i$, we immediately get that $|E(H')| \geq \sum_{i=1}^x |C_i| \geq x \cdot OPT_{MR}$.  
\end{proof}

\subsubsection{Completeness}

We now show completeness: that there is a $+1$-spanner of $G$ which does not cost too much \emph{more} than $OPT_{MR}$.  Let $C$ be a REP-cover of size $OPT_{MR}$, and for each $u \in U \cup V$ let $C_u = C \cap \group(u)$.  

The subgraph $H$ of $G$ that we will analyze will include $E_{in} \cup E_{path} \cup E_{L} \cup E_P  \cup E_{so} \cup E_{sm} \cup E_M \cup E_{group} \cup E_{star}$.  In other words, the only edge sets we defined which it does not entirely include are $E_{con}$ and $E_{out}$.  Of these, we do not include any edges of $E_{out}$. We include connection edges to inner nodes in $C$, i.e.~we also include all edges in the set $\{\{(u,i,1), a\} : u \in U \cup V \land a \in C_u \land i \in [x]\}$.  

We first analyze the size of $H$, and the later will prove that it is a $+k$-spanner of $G$.  Recall that $d$ is the degree of the supergraph $G'$ (which is regular), and $|\Sigma|$ is the size of each group.  Let $n' = |U \cup V|$.

\begin{theorem} \label{thm:k-completeness-size}
$H$ has at most $ x \cdot 7k^2 OPT_{MR} + 8 k^2 n' d |\Sigma|^2$  edges
\end{theorem}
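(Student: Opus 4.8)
The plan is a direct edge count. The subgraph $H$ is the union of the fixed edge sets $E_{in}\cup E_{path}\cup E_L\cup E_P\cup E_{so}\cup E_{sm}\cup E_M\cup E_{group}\cup E_{star}$ together with the chosen connection edges $\{\{(u,i,1),a\}:u\in U\cup V,\ a\in C_u,\ i\in[x]\}$, so I would bound $|E(H)|$ by summing the cardinalities of these nine sets plus the connection edges. Each cardinality follows immediately from its definition in the reduction, using that $G'$ is $d$-regular (hence $|E'|=n'd/2\le n'd$), every group has size $|\Sigma|$, there are $n'$ supernodes, and each supernode is replicated $x$ times. In particular: $|E_{in}|=|\widetilde E|\le n'd|\Sigma|^2$, $|E_{group}|\le n'|\Sigma|^2$, $|E_{sm}|\le 2|E'|\le n'd$, $|E_M|\le (k-3)|E'|\le n'dk$; and each of $E_{path},E_L,E_P,E_{so}$ attaches a path of length $O(k)$ to each of the $n'x$ outermost outer nodes, hence contributes $O(n'kx)$ edges. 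The chosen connection edges number exactly $\sum_{u\in U\cup V}x|C_u|=x|C|=x\cdot OPT_{MR}$.

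The one set needing slightly more care is $E_{star}$: it consists of the $|V_R|$ edges $\{t,t_{y,k-1}\}$ together with the $(k-2)|V_R|$ path edges $\{t_{y,i},t_{y,i+1}\}$, so $|E_{star}|\le (k-1)|V_R|\le k|V_R|$, and plugging in the bound $|V_R|\le 3n'kx+2n'(|\Sigma|+dk)$ recorded in the reduction gives $|E_{star}|\le 3n'k^2x+2n'k|\Sigma|+2n'dk^2$. This is the dominant $x$-dependent term.

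Next I would separate the contributions into those proportional to $x$ and those free of $x$, and absorb each term of the form $n'\cdot(\text{poly in }k)\cdot x$ into a term $x\cdot(\text{poly in }k)\cdot OPT_{MR}$ via the inequality $OPT_{MR}\ge n'$ --- which holds because $G'$ is regular with degree $d\ge 1$, so every group is incident to a superedge and any REP-cover contains at least one vertex from each of the $n'$ groups. Summing the $x$-proportional pieces (from $E_{path},E_L,E_P,E_{so}$, the $3n'k^2x$ part of $E_{star}$, and the $x\cdot OPT_{MR}$ from the connection edges) and collecting constants bounds them by $7k^2\cdot x\cdot OPT_{MR}$; summing the $x$-free pieces ($|E_{in}|,|E_{group}|,|E_{sm}|,|E_M|$, and the $2n'k|\Sigma|+2n'dk^2$ part of $E_{star}$) and using $|\Sigma|,d,k\ge 1$ bounds them by $8k^2 n'd|\Sigma|^2$. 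Adding the two gives $|E(H)|\le 7k^2\,x\,OPT_{MR}+8k^2n'd|\Sigma|^2$, as claimed.

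There is no conceptual obstacle here --- the statement is proved purely by counting. The only thing requiring attention is the bookkeeping of constant factors across the nine edge sets (in particular charging the large $E_{star}$ term correctly, and checking that the chosen constants $7$ and $8$ actually dominate the two sums), together with the small observation $OPT_{MR}\ge n'$ used to convert the $n'x$ terms into $OPT_{MR}\cdot x$ terms.
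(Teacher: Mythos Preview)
Your proposal is correct and follows essentially the same approach as the paper's proof: a direct edge count of each constituent set, using the recorded bound $|V_R|\le 3n'kx+2n'(|\Sigma|+dk)$ for $E_{star}$ and the observation $OPT_{MR}\ge n'$ to absorb the $n'x$ terms into $OPT_{MR}\cdot x$ terms. Your bookkeeping is in fact marginally tighter than the paper's (which collects the $x$-terms as $x(OPT_{MR}+7k^2 n')$ and then bounds this by $8k^2\,x\,OPT_{MR}$), but the argument is the same.
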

\begin{proof}
The following size bounds are direct from the definitions:
\begin{align*}
|E_{in}| &\leq n' d |\Sigma|^2, & |E_{path}| & \leq (k-2) n' x, & |E_L| &\leq k n' x, \\
|E_{so}| &\leq n'x, & |E_{sm}| &\leq n' d, &|E_M| &\leq k n' d, \\
|E_{group}| &\leq n' |\Sigma|^2, & |E_{star}| &\leq 3n' k^2 x + 2n'k(|\Sigma| + dk) &|E_P|&\leq k n' x
\end{align*}

The number of connection edges we add is at most $x \cdot |C| = x \cdot OPT_{MR}$.  Adding these all up, we get that 
\begin{align*}
|E(H)| &\leq x(OPT_{MR} + 7 k^2 n') + 6 k n' |\Sigma|^2 + 2n'k^2 d \\
& \leq x(OPT_{MR} + 7 k^2 n') + 8 k^2 n' d |\Sigma|^2 \\
& \leq x \cdot 8 k^2 OPT_{MR} + 8 k^2 n' d |\Sigma|^2
\end{align*} 
where for the last inequality we used the fact that $OPT_{MR} \geq n'$.  
\end{proof}

Note that this implies that if $x  \geq d |\Sigma|^2$ then $|E(H)| \leq x \cdot 8 k^2 OPT_{MR} + x \cdot 8 k^2  OPT_{MR} \leq x \cdot 16 k^2 OPT_{MR}$

\begin{theorem} \label{thm:k-completeness-correct}
$H$ is a $+k$-spanner of $G$.
\end{theorem}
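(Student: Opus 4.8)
The plan is to follow the template of the proof of Theorem~\ref{thm:completeness-correct} (the $+1$ case): first dispose of all ``far'' pairs with the star gadget, and then, for every remaining pair, exhibit an explicit short path in $H$, organized by the types of the endpoints and of the edges on a shortest $G$-path between them.

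First I would note that since $H$ contains all of $E_{star}$, it has diameter at most $2k$; hence any pair $x,y$ with $d_G(x,y)\ge k$ automatically satisfies $d_H(x,y)\le 2k\le d_G(x,y)+k$. So it suffices to treat pairs at $G$-distance at most $k-1$. For such a pair, fix a shortest $G$-path $P$ and split into two cases according to whether $P$ uses an edge of $E_{out}$. If $P$ avoids $E_{out}$, then every edge of $P$ lies in $H$ except possibly some connection edges $\{(u,i,1),a\}$ with $a\notin C_u$; each such edge is replaced in $H$ by the $2$-path $(u,i,1)\to a_c\to a$, where $a_c\in C_u$ is a representative and $\{a_c,a\}\in E_{group}$. (Here $C_u\neq\emptyset$ because $G'$ has no isolated supernode and $C$ is a REP-cover, and $\{(u,i,1),a_c\}\in E(H)$ by our choice of included connection edges.) Thus $d_H(x,y)\le 2|P|\le |P|+k$ since $|P|\le k-1<k$, as needed.

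The heart of the argument is the case where \emph{every} shortest $G$-path uses an edge of $E_{out}$. This is the genuinely new difficulty relative to the $+1$ case: $E_{out}$ is removed from $H$ entirely, and, unlike every other missing edge set, a single $E_{out}$ edge has no constant-length detour in $H$ --- its cheapest replacement (through an $E_L$-path, then $S$, then $E_{sm}$) has length about $k+1$ --- so one cannot reroute edge-by-edge and must reroute globally. Since the endpoints of $E_{out}$ edges are exactly the outermost outer nodes $(\cdot,\cdot,k-1)$ and the middle nodes $M$, a shortest $G$-path of length at most $k-1$ through $E_{out}$ is very constrained, and I would enumerate the possibilities: (i) the endpoints are outermost outer copies $(u,i,k-1),(v,j,k-1)$ of supernodes joined by a superedge, so $d_G=k-1$ via the $M$-path --- here I use the canonical path $(u,i,k-1)\to\dots\to(u,i,1)\to a\to b\to(v,j,1)\to\dots\to(v,j,k-1)$, which has length $2k-1=(k-1)+k$ and lies in $H$ because $C$ is a REP-cover, so there exist $a\in C_u$, $b\in C_v$ with $\{a,b\}\in\widetilde E$; (ii) the endpoints are two copies $(u,i,k-1),(u,i',k-1)$ of the same supernode linked by a length-$2$ ``bounce'' through some $m_{uv}^{\cdot}$ (so $d_G=2$) --- here I use the $P$/$Q$ gadget, giving a path of length $2(\lceil (k-1)/2\rceil+1)\le k+2=d_G+k$; (iii) one or both endpoints lie in $M$ (or in $S$), in which case a path through $E_L$, $E_{so}$, $E_{sm}$, $E_M$ of length at most $d_G+k$ can be written down directly; plus a few analogous sub-cases (e.g.\ an outermost outer node to a nearby $m_{uv}^\ell$). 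In each case the length bound is verified by a direct computation against the exact value of $d_G$.

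The main obstacle is precisely this last step: checking that the enumeration of shortest $G$-paths using $E_{out}$ is exhaustive, and that in every listed case the chosen $H$-path stays within additive stretch $k$. The gadgets $E_L$, $E_P$, $E_M$, $S$, $Q$ were inserted into the construction exactly to make these bounds come out (often with equality), so the bookkeeping --- tracking the precise lengths $\lceil (k-1)/2\rceil$, $k-2$, $k-3$, etc., and comparing them against the true $d_G$ rather than a mere upper bound on it --- is where essentially all the effort lies, even though no single step is deep.
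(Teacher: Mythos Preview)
Your proposal is correct and uses the same key ingredients as the paper (the star for far pairs, the $2$-path detour through a representative in $C_u$ for missing connection edges, the $E_L$--$S$--$E_{sm}$ detour of length $k+1$ for a missing outer edge, the $P$/$Q$ gadget for two copies of the same outermost outer node, and canonical paths of length $2k-1$ for outermost outer nodes across a superedge). The organization differs: the paper splits first by distance ($1$, $2$, ${>}2$) and then by the number and type of missing edges on the shortest path, invoking the structural fact that a shortest path of length ${<}k$ cannot contain both an outer edge and a connection edge (since these are separated by $k-2$ edges of $E_{path}$), and that it contains at most two of either kind. You instead split on whether a fixed shortest $G$-path uses $E_{out}$ at all; this makes the no-$E_{out}$ case a one-liner via the crude bound $d_H\le 2|P|\le |P|+k$, which is tidier than the paper's case-by-case treatment of connection edges. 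For the $E_{out}$ case the two arguments converge: both must enumerate the few configurations that actually arise (your (i), (ii), (iii) correspond to the paper's two-outer-edge sub-cases plus the single-outer-edge detour), and both ultimately rest on an ``easy to verify'' exhaustiveness claim about shortest paths of length ${<}k$ that touch $E_{out}$.
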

\begin{proof}
Because of $E_{star}$, the diameter of $H$ is at most $2k$ and thus we only need to worry about spanning a pair of nodes if their distance in $G$ is at most $k-1$.  

We first consider the simple case of nodes at distance $1$ (i.e.~edges of $G$).  The only edges of $G$ missing from $H$ are the outer edges and some connection edges.  Consider a connection edge $\{(u,i,1), a\}$ which is not in $H$.  Then since $C_u \neq \emptyset$ there is some $a' \in C_u$, and thus $H$ has a path of length $2$ spanning the missing edge: $(u,i,1) \rightarrow a' \rightarrow a$, where the first is a connection edge and the second is a group edge.  

Now consider a missing outer edge $\{(u,i,k-1), m_{uv}^1\}$.  In $H$ there is a path of length $k+1$ between these nodes by using edges of $E_L$ and $E_{sm}$, in particular the path $(u,i,k-1) \rightarrow \ell_{u,i,1} \rightarrow \ell_{u,i,2} \rightarrow \dots \rightarrow \ell_{u,i,k-1} \rightarrow s_u \rightarrow m_{uv}^1$.  A similar path exists for missing outer edges of the form $\{(v,i,k-1), m_{uv}^{k-2}\}$.  

For the next case, consider two nodes at distance $2$ in $G$.  If the shortest path between them has zero or one edge in $E(G) \setminus E(H)$, then we know that their distance in $H$ is at most $k+2$ as desired.  The only pairs of nodes at distance $2$ in $G$ whose shortest path contains two edges from $E(G) \setminus E(H)$ are 1) outermost outer nodes corresponding to the same supernode, e.g.~nodes $(u,i, k-1)$ and $(u,i', k-1)$ with $i \neq i'$, and 2) innermost outer nodes corresponding to the same supernode, e.g.~nodes $(u,i,1)$ and $(u,i', 1)$ with $i \neq i'$.  The second case is simple: these nodes have a $2$-path in $H$ using a different shortest path of two connection edges to the same inner node in $C_u$.  The first case is essentially why we added the $P$ and $Q$ nodes: there is a path of length either $k+1$ (if $k$ is even) or $k+2$ (if $k$ is odd) between $(u,i, k-1)$ and $(u,i', k-1)$ by using edges in $E_P$. 

Now consider two nodes who are at distance more than $2$ (but less than $k$) in $G$.  If their shortest path contains at most $1$ outer edge or connection edge, then the above discussion implies that they are spanned in $H$.  Since they are at distance less than $k$, their shortest path cannot include \emph{both} an outer edge and a connection edge.  Hence we are only concerned with pairs whose shortest path contains more than one connection edge or more than one outer edge.  

It is easy to verify that every shortest path of length less than $k$ includes at most two connection edges, or at most two outer edges.  Pairs that are connected by a shortest path with two connection edges are spanned in $H$ by a path of length at most $2$ longer than in $G$ by routing around the missing connection edges.  

The more interesting case is pairs of nodes that are connected by a shortest path which includes two outer edges.   If these two outer edges directly follow each other on the shortest path, then we already showed that they are spanned by a path of length at most $k+2$.  Hence using this detour only add at most $+k$ to the length of the shortest path and thus $H$ has a path at most $k$ longer than in $G$.  On the other hand, if they do not directly follow each other on the shortest path, then the fact that the shortest path has length less than $k$ implies that in fact the two nodes are outermost outer nodes $(u,i,k-1)$ and $(v,j,k-1)$ where $\{u,v\} \in E'$ (no other pairs have distance between $3$ and $k-1$ in $G$ with shortest path containing two non consecutive outer edges).  By construction, there is a canonical path of length $2k-1$ in $H$ between these nodes, and in $G$ their distance is $k-1$, so $H$ does span them.  

Since this exhausts all cases, we have shown that $H$ is a $+k$-spanner of $G$.  
\end{proof}

\subsubsection{Putting it Together}

\begin{theorem}
For any constant $\epsilon > 0$ and any value $k \geq 3$ (not necessarily constant), there is no polynomial-time $2^{\log^{1-\epsilon} n} / k^3$-approximation algorithm for the $+k$-spanner problem unless $NP \subseteq DTIME(2^{\polylog(n)})$.
\end{theorem}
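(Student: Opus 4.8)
The plan is to assemble the theorem from the soundness bound (Theorem~\ref{thm:k-soundness}), the completeness bounds (Theorems~\ref{thm:k-completeness-size} and~\ref{thm:k-completeness-correct}), and the Min-Rep hardness (Theorem~\ref{thm:kor-hard}), in the same template used for the $+1$ case, while being careful about the two notions of input size and about which range of $k$ actually matters. First I would fix the parameter $x = d|\Sigma|^2$. With this choice the clean form of the completeness bound applies: by the remark after Theorem~\ref{thm:k-completeness-size}, the subgraph $H$ built in the completeness case is a $+k$-spanner (Theorem~\ref{thm:k-completeness-correct}) with at most $16 k^2 \cdot x \cdot OPT_{MR}$ edges. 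I would then bound the number of vertices of the constructed instance: from the vertex-set definitions, $n := |V_G| = O(n' k^2 x) = O(n' k^2 d |\Sigma|^2)$, which is polynomial in $n'$ as long as $d$, $|\Sigma|$, and $k$ are polynomial in $n'$. By Theorem~\ref{thm:kor-hard} we may assume $d$ and $|\Sigma|$ are $n'^{o(1)}$, and (as I note at the end) we only ever need to run the reduction for $k$ polynomial in $n'$; hence the reduction is polynomial-time, $n \leq (n'|\Sigma|)^{O(1)}$, and $\log n = \Theta(\log(n'|\Sigma|))$.

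Now suppose, for contradiction, that there is a polynomial-time $\rho$-approximation for $+k$-spanner with $\rho = 2^{\log^{1-\epsilon} n}/k^3$, and run it on $G$, where $G$ is produced from a Min-Rep instance to which we apply Theorem~\ref{thm:kor-hard} with a constant $\epsilon'' \in (0,\epsilon)$ to be fixed. If $\widetilde G$ is a YES instance then $OPT_{MR} = n'$, so the optimal $+k$-spanner has at most $16 k^2 x n'$ edges and the algorithm returns one with at most $16 k^2 x n' \cdot \rho$ edges. If $\widetilde G$ is a NO instance then $OPT_{MR} \geq n' \cdot 2^{\log^{1-\epsilon''}(n'|\Sigma|)}$, so by Theorem~\ref{thm:k-soundness} the returned spanner has at least $\frac{x}{2k} \cdot n' \cdot 2^{\log^{1-\epsilon''}(n'|\Sigma|)}$ edges. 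These two quantities separate the YES and NO cases unless $16 k^2 x n' \cdot \rho \geq \frac{x n'}{2k} \cdot 2^{\log^{1-\epsilon''}(n'|\Sigma|)}$, i.e.\ unless $\rho \geq \frac{1}{32 k^3} \cdot 2^{\log^{1-\epsilon''}(n'|\Sigma|)}$. But $\log n = \Theta(\log(n'|\Sigma|))$, and since $\epsilon'' < \epsilon$ the quantity $\log^{1-\epsilon''}(n'|\Sigma|)$ dominates $\log^{1-\epsilon} n$ for all sufficiently large instances, so $2^{\log^{1-\epsilon} n} \leq \frac{1}{32}\cdot 2^{\log^{1-\epsilon''}(n'|\Sigma|)}$ and hence $\rho < \frac{1}{32 k^3}\cdot 2^{\log^{1-\epsilon''}(n'|\Sigma|)}$; thus $\rho$ would let us distinguish YES from NO instances of Min-Rep in polynomial time, contradicting Theorem~\ref{thm:kor-hard} under $NP \not\subseteq DTIME(2^{\polylog(n)})$. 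This gives the claimed bound.

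It remains to handle large $k$: if $k^3 > 2^{\log^{1-\epsilon} n}$ then the stated ratio is below $1$ and the statement is vacuous, and since $n$ grows at least linearly in $k$ by construction, this occurs only when $k$ is superpolynomial in $n'$ --- exactly the regime where the reduction is not needed --- so it suffices to carry out the construction for $k$ polynomial in $n'$, where the argument above applies. The main obstacle here is not conceptual but bookkeeping: coordinating the spanner-instance size $n$ with the Min-Rep-instance size $n'|\Sigma|$, absorbing the completeness factor $16k^2$ against the soundness factor $1/(2k)$ so that exactly a $k^3$ lands in the denominator, and choosing $\epsilon'' < \epsilon$ so that the subpolynomial slack $n'^{o(1)}$ contributed by $d$, $|\Sigma|$, and $k$ is swallowed by the loss in the exponent --- all routine, but it must be kept consistent so the final statement emerges with a single clean $\epsilon$.
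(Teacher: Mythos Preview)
Your proposal is correct and follows essentially the same approach as the paper: set $x = d|\Sigma|^2$, combine the completeness bound $OPT \leq 16k^2 x \cdot OPT_{MR}$ with the soundness bound $OPT \geq \frac{x}{2k} \cdot OPT_{MR}$ to get a gap of $32k^3$, translate the Min-Rep hardness via $\log n = \Theta(\log(n'|\Sigma|))$, and absorb the constant and polynomial-in-$k$ factors by taking a smaller $\epsilon'' < \epsilon$. Your additional remarks on the regime where $k^3 > 2^{\log^{1-\epsilon} n}$ (making the claim vacuous) and on keeping the reduction polynomial-time are more explicit than the paper's write-up but change nothing substantive.
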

\begin{proof}
We first instantiate $x$ to $d |\Sigma|^2$.  This means that the size of our reduction is 
\begin{align*}
|V(H)| &\leq k |V_R| \leq k(3n'kx + 2n'(|\Sigma| + dk)) \leq 3n'k^2 d|\Sigma|^2 + 2n'|\Sigma| + 2n'dk \\
& \leq 7 k^2 n' d |\Sigma|^2 \leq 7k^2 (n' |\Sigma|)^2.
\end{align*}

Let $OPT$ denote the size of the sparsest $+k$-spanner of $H$.  Then Theorems~\ref{thm:k-soundness}, \ref{thm:k-completeness-size}, and \ref{thm:k-completeness-correct} imply that 
\begin{align*}
\frac{x}{2k} \cdot OPT_{MR} \leq OPT \leq x \cdot 16k^2 OPT_{MR}.
\end{align*}

Putting the size and the approximation together (with the fact that the original Min-Rep instance had size $n' |\Sigma|$), we get that an $f(n)$-approximation for the $+k$-spanner problem would result in a $32k^3 \cdot f(7k^2 n^2)$-approximation for Min-Rep.  Then Theorem~\ref{thm:kor-hard} implies that this is at least $2^{\log^{1-\epsilon} n}$.  Using a smaller value of $\epsilon$ in order to dominate the constant factors completes the proof.
\end{proof}

\bibliographystyle{plain}
\bibliography{refs}

\appendix


\newcommand{\Metric}[1]{\mathcal{M}_{#1}}

\section{The Proof of Height Reduction (Lemma~\ref{lem:height-reduction})}
\label{sec:dist-height-reduction}

In this section, we sketch the proof of the existence of the tree
$\hat{T}_{r}$ in Lemma~\ref{lem:height-reduction}.
whose subtree corresponds to an
arborescence in the original graph with a bounded cost.
In short, the tree $\hat{T}_{r}$ in Lemma~\ref{lem:height-reduction}
in constructed by listing all paths rooted at $r$ of length at most $\sigma$  
in the metric completion of $G$ and then form a suffix tree on these paths. 
Our reduction is, indeed, the same as ``path-splitting'' technique
in \cite{CEGS11}.
All we need is to show the cost guarantee using 
the famous Zelikovsky's Height Reduction Theorem.

Given a graph $G$, the {\em metric-completion}
graph of $G$, denoted by $\Metric{G}$,
is a complete graph on the same vertex set as $G$,
where each edge $(u,v)$ in $\Metric{G}$ has weight 
equal to the distance of $u,v$ in $G$.
Thus, each edge $(u,v)$ of $\Metric{G}$ is associated
with a shortest $u,v$-path in $G$, denoted by $\phi(u,v)$.

\begin{theorem}[Zelikovsky's Height Reduction Theorem \cite{Z97,HRZ01}]
\label{thm:height-reduction}
For any arborescence $J$ with edge weights $w:E(J)\rightarrow\reals^{\geq 0}$, 
there exists an arborescence $J'$ 
in the metric completion $\Metric{J}$ of $J$ 
with height $\sigma$ 
and edge weights $w':E(J')\rightarrow\reals^{\geq 0}$ such that 
$\mathcal{L}(J)=\mathcal{L}(J')$ where $\mathcal{L}(J)$ and $\mathcal{L}(J')$
are the sets of leaves of $J$ and $J'$, respectively, and 
$w(J) = O(\sigma |\mathcal{L(J)}|^{1/\sigma})$.
\end{theorem}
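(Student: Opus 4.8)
This is Zelikovsky's classical Height Reduction Theorem, so the plan is to reproduce the standard argument (as in~\cite{Z97,HRZ01}, and as used inside the junction-tree analyses of~\cite{CEGS11}). Read the conclusion as the intended bound: the shortcut arborescence $J'$ built in $\Metric{J}$ has height exactly $\sigma$, the same leaf set $\mathcal L(J)=\mathcal L(J')$, every edge of $J'$ is (the weight of) an ancestor--descendant path of $J$, and $w(J')\le O(\sigma|\mathcal L(J)|^{1/\sigma})\cdot w(J)$. I would prove this by induction on $\sigma$, writing $k=|\mathcal L(J)|$. The base case $\sigma=1$ is immediate: take $J'$ to be the star joining the root $r$ of $J$ directly to every leaf; each edge $(r,\ell)$ is the $r$-to-$\ell$ path of $J$, of weight at most $w(J)$, so $w(J')\le k\cdot w(J)=O(1\cdot k^{1/1})\,w(J)$.

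For the inductive step, assume the statement for $\sigma-1$. For a vertex $v$ let $\lambda(v)$ be the number of leaves of $J$ below $v$, fix a threshold $\tau$ (of order a single ``root'' $k^{1/\sigma}$-factor, tuned at the end), and call $v$ \emph{heavy} if $\lambda(v)>\tau$. The heavy vertices form a subtree containing $r$, and its frontier (heavy vertices with no heavy child) is an antichain whose subtrees are leaf-disjoint and each contain more than $\tau$ leaves, hence the frontier has at most $k/\tau$ vertices; dually, the maximal light vertices form an antichain $Z$ that covers all leaves, with each pendant subtree $J_z$ ($z\in Z$) having at most $\tau$ leaves, and the pendant subtrees are edge-disjoint (so $\sum_{z}w(J_z)\le w(J)$). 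The construction then: (i) shortcuts the ``heavy core'' (the part of $J$ above the pendant subtrees) using at most one new level anchored at $r$ through the few frontier vertices, at cost $O((k/\tau))\cdot w(J)$; and (ii) applies the inductive $(\sigma-1)$-level guarantee to each pendant $J_z$, at cost $O((\sigma-1)\tau^{1/(\sigma-1)})\cdot w(J_z)$ per piece, and glues the copies of each $z$ to the corresponding leaf of the shortcut core. This makes the height $\le 1+(\sigma-1)=\sigma$, and summing the two contributions (using $\sum_z w(J_z)\le w(J)$) gives $w(J')\le O\big(k/\tau + (\sigma-1)\tau^{1/(\sigma-1)}\big)\cdot w(J)$; choosing $\tau$ so that $\tau^{1/(\sigma-1)}=k^{1/\sigma}$ (i.e.\ $\tau=k^{(\sigma-1)/\sigma}$), both $k/\tau$ and $\tau^{1/(\sigma-1)}$ equal $k^{1/\sigma}$, so the bound is $O(\sigma k^{1/\sigma})\,w(J)$, closing the induction. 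Finally, since the edges of $J'$ are ancestor--descendant pairs of $J$, each is naturally associated (via $\Metric{J}$) with the corresponding path of $J$, giving the claimed edge-to-path mapping.

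The main obstacle is the parameter balancing in the inductive step: the pendant subtrees ``want'' $\tau$ large so that the $(\sigma-1)$-level recursion collapses $\tau$ to $k^{1/\sigma}$ per level, while the bridging ``portal'' set (frontier of the heavy core) also ``wants'' $\tau$ large to keep $k/\tau$ small, and one must check that both demands are met by the \emph{single} choice $\tau\approx k^{(\sigma-1)/\sigma}$ and that gluing the two recursively-shortcut pieces respects both the height budget and the edge-to-path correspondence without double-counting weight (this is where edge-disjointness of the pendant subtrees and of the core is used). Degenerate cases --- $k=1$, a tree already of height $\le\sigma$, or a cut $Z$ that already lies at depth $\le 1$ --- must be handled separately so that the recursion makes genuine progress; these are routine but need to be stated. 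All of this is the content of Zelikovsky's theorem~\cite{Z97,HRZ01}, which we invoke as a black box for Lemma~\ref{lem:height-reduction}.
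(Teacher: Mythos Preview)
Your approach is genuinely different from the paper's. The paper gives a \emph{bottom-up iterative} construction: order the leaves by DFS, group them into consecutive blocks of size $\Delta=|\mathcal L(J)|^{1/\sigma}$, connect each block to its LCA, and recurse on the set of LCAs. The key cost bound comes from DFS contiguity: any edge of $J$ lies on block-to-LCA paths for at most two blocks (the two that straddle it), so each of the $\sigma$ levels costs $O(\Delta)\,w(J)$. Your top-down heavy/light recursion is a different idea.

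Unfortunately the proposal has a real gap in step (i). You claim the ``heavy core'' can be shortcut to one level at cost $O(k/\tau)\,w(J)$, ``through the few frontier vertices.'' But the leaves of the heavy core are the maximal light vertices $Z$, not the frontier vertices, and $|Z|$ is only bounded below by $k/\tau$; it can be as large as $k$. Concretely: take $J$ to be a root $r$ with one child $v$ of weight $W$, and $v$ fanning out to $k$ leaves of weight $0$. Then $w(J)=W$, $v$ is the unique frontier vertex, $Z$ is all $k$ leaves, and the star from $r$ to $Z$ costs $kW$, not $k^{1/\sigma}W$. If instead you interpret step (i) as the star from $r$ to the $\le k/\tau$ frontier vertices, the cost bound holds, but now gluing each $J_z$ to its (frontier) parent adds an extra level, so the height is $1+1+(\sigma-1)=\sigma+1$, not $\sigma$. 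Either reading breaks one of the two claims.

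The heavy/light idea can be made to work, but it needs an additional ingredient you have not supplied (e.g., controlling how many $z\in Z$ sit below any heavy edge, or reorganizing so that the recursion is on the frontier subtrees with a sharper per-subtree leaf bound). As written, the induction does not close. The paper's DFS/LCA blocking sidesteps exactly this issue: the contiguity of DFS order is what guarantees each edge is charged $O(\Delta)$ times per level, with no tension between height and cost.
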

\begin{proof}[Sketch of The Proof]
We order vertices of $J$ by depth-first-search order. 
Let $i=0$, and let $V_0=\mathcal{L}(J)$ be the set of ``leaves'' of $J$.
We first add a copy of $V_i$ to $\hat{J}$.
Then we partition vertices of $V_i$ into blocks $B_1,\ldots,B_q$,
each of size $\Delta$, according to the depth-first-search order.
For each block $B_j$, 
we find the least common ancestor of vertices of $B_j$, denoted by $a_j$.
We add edges from each $v\in B_j$ to $a_j$
(which will be added to $J'$).
Then we define $V_{i+1}=\{a_j\}_j$ (remove duplicate)
and continue the same process on $V_{i+1}$ until $|V_{i+1}|=1$.
At the termination, if the only vertex left in $V_{i+1}$ is $v\neq r$, 
then we add a copy of $r$ to $J'$
and join $v$ to $r$.
We define the weight $w'(u,v)$ of an edge $(u,v)\in E(J')$ to be 
the weight of the unique $u,v$-path in $J$.

In each step, we pay a factor of $O(\Delta\cdot w(J))$ because
paths from the same block may share edges,
and they can also share edges with one block from 
the left and one from the right.
Since there are only $\log_{\Delta}(|\mathcal{L}(J)|)$ steps,
we pay a factor of 
$O(\Delta\cdot\log_{\Delta}(|\mathcal{L}(J)|))$. 
By adjusting the parameter $\Delta$ so that
$\sigma=\log_{\Delta}(|\mathcal{L}(J)|)$
and thus $\Delta=|\mathcal{L}(J)|^{1/\sigma}$, 
we have the claimed upper bound.
\end{proof}

\paragraph{Building the tree $\hat{T}_{r}$.}
Next, we construct the tree $\hat{T}_{r}$ as 
in Lemma~\ref{lem:height-reduction} by listing all paths of length 
(number of edges) at most $\sigma$ in the metric completion of $G$
(denoted by $\Metric{G}$) that start from $r$ and
then form a suffix tree.
Let $\mathbb{P}_{\sigma}^r$ denote the set of all paths in
$\Metric{G}$ that start at $r$ and have length at most $\sigma$.
We then define $\hat{T}_{r}=(\hat{V},\hat{E})$ 
as a suffix tree on the set $\mathbb{P}_{\sigma}^r$, i.e.,
the vertex and edge sets of $\hat{T}_r$ are
\[
\hat{V} = \mathbb{P}_{\sigma}^r
\quad\mbox{and}\quad
\hat{E} = \{(Q,Qv) \mid Q, Qv \in \mathbb{P}_{\sigma}^r\}.
\]

The mapping is defined to be the end vertex (resp., edge) of 
each path in $\mathbb{P}_{\sigma}^r$.
\[
\begin{array}{lll}
\Psi(Qv)  &= v, 
   &\mbox{ for all $Qv\in \hat{V}$ where
           $Q\in \mathbb{P}_{\sigma}^r, v\in V(G)$}\\
\Phi(Qu,Quv) &= \phi(u,v), 
   &\mbox{ for all $(Qu,Quv)\in \hat{E}$ where
           $Q\in \mathbb{P}_{\sigma}^r; u,v\in V(G)$}\\
\hat{w}_{\hat{u}\hat{v}} &= w(\phi(u,v)),
   &\mbox{ for all $\hat{u}\hat{v}\in \hat{E}$ where
           $\psi(\hat{u}) = Qu$ and $\psi(\hat{v})=Quv$}
\end{array}
\]

Now consider any arborescence $J$ in $G$, and the arborescence $J'$
of height $\sigma$ as in Theorem~\ref{thm:height-reduction}.
It is easy to see that there is a tree $\hat{J}$ in $\hat{T}_{r}$  
such that $V(J')=\Psi(V(\hat{J}))$, and $E(J')=\Psi(E(\hat{J}))$.
This simply follows by the construction of $\hat{T}$ 
that we list all the rooted-path of length at most $\sigma$ in $\Metric{G}$.
For any edge $(u,v)\in E(J')$, where $u=\psi(\hat{u})$ and
$v=\psi(\hat{v})$, we know that 
$\hat{w}(\hat{u}\hat{v}) \leq w'(u,v)$ 
because $\hat{w}(\hat{u}\hat{v})$ is the weight of a shortest $u,v$-path in $G$
whereas $w'(u,v)$ is the weight of a $u,v$-path 
in the subgraph $J$ of $G$. 
This proves Lemma~\ref{lem:height-reduction}.

\end{document}